\newtheorem{thm}{Theorem}
\newtheorem{cor}{Corollary}
\newtheorem{lem}{Lemma}
\newtheorem{defn}{Definition}
\newtheorem{exmp}{Example}
\newtheorem{res}{Restriction}
\title{CCS-Based Dynamic Logics for Communicating Concurrent Programs\footnote{This work was supported by the Brazilian research agencies CNPq and CAPES. A preliminary version of this work was published in the proceedings of WoLLIC 2008 \cite{Wollic}.}}
\author{Mario R. F. Benevides\footnote{Systems and Computer Engineering Program and Computer Science Department, Federal University of Rio de Janeiro, Brazil} \and L. Menasch\'e Schechter\footnote{Systems and Computer Engineering Program, Federal University of Rio de Janeiro, Brazil}}
\date{\{mario,luis\}@cos.ufrj.br}
\begin{document}

\maketitle

\begin{abstract}
This work presents three increasingly expressive Dynamic Logics in which the programs are CCS processes (sCCS-PDL, CCS-PDL and XCCS-PDL). Their goal is to reason about properties of concurrent programs and systems described using CCS. In order to accomplish that, CCS's operators and constructions are added to a basic modal logic in order to create dynamic logics that are suitable for the description and verification of properties of communicating, concurrent and non-deterministic programs and systems, in a similar way as PDL is used for the sequential case. We provide complete axiomatizations for the three logics. Unlike Peleg's Concurrent PDL with Channels, our logics have a simple Kripke semantics, complete axiomatizations and the finite model property.
\end{abstract}

\noindent
{\bf Keywords:} Dynamic Logic, Concurrency, Kripke Semantics, Axiomatization, Completeness

\section{Introduction}

Propositional Dynamic Logic (PDL) \cite{fl} plays an important role in formal specification and reasoning about sequential programs and systems. PDL is a multi-modal logic with one modality $\langle \pi \rangle$ for each program $\pi$. The logic has a set of basic programs and a set of operators (sequential composition, iteration and nondeterministic choice) that are used to inductively build the set of non-basic programs. PDL has been used to describe and verify properties and behaviour of sequential programs and systems. Correctness, termination, fairness, liveness and equivalence of programs are among the properties that one usually wants to verify. A Kripke semantics can be provided, with a frame ${\mathcal F} = (W , R_{\pi})$, where $W$ is a non-empty set of possible program states and, for each program $\pi$, $R_{\pi}$ is a binary relation on $W$ such that $(s,t) \in R_{\pi}$ if and only if there is a computation of $\pi$ starting in $s$ and terminating in $t$.

The Calculus for Communicating Systems (CCS) is a well known process algebra, proposed by Robin Milner \cite{M89}, for the specification of communicating concurrent systems. It models the concurrency and interaction between processes through individual acts of communication. A pair of processes can communicate through a common channel and each act of communication consists simply of a signal being sent at one end of the channel and immediately being received at the other. A CCS specification is a description (in the form of algebraic equations) of the behaviour expected from a system, based on the communication events that may occur. As in PDL, CCS has a set of operators (action prefix, parallel composition, nondeterministic choice and restriction on acts of communication) that are used to inductively build process specifications from a set of basic actions. Iteration can also be described through the use of recursive equations.

This work presents three increasingly expressive Dynamic Logics in which the programs are CCS processes (sCCS-PDL, CCS-PDL and XCCS-PDL). Their goal is to reason about properties of concurrent programs and systems described using CCS.

There are, in the literature, some logics that make use of CCS or other process algebras. However, they use these process algebras as a language for the description of frames and models, while using standard modal logics for the description of properties (see, for example, \cite{M89} and \cite{MilParWalLogic}). The logics that we develop in the present work use CCS in a distinct way. Its operators and constructions are \emph{added} to a basic modal logic in order to create dynamic logics that are suitable for the description and verification of properties of communicating, concurrent and non-deterministic programs and systems, in a similar way as PDL is used for the sequential case.

Thus, it should be emphasized that the contribution of this work is on the field of dynamic logics and not on the field of process algebras. From process algebras, we just borrow a set of operators that are suitable for the description of communication and concurrency. We use these operators because they have a well-established theory behind them and we can use many of its concepts and results to help us build our logics.

Our paper falls in the broad category of works that attempt to generalize PDL and build dynamic logics that deal with classes of non-regular programs. As examples of other works in this area, we can mention \cite{CFPDL1}, \cite{CFPDL2} and \cite{CFPDL3}, that develop decidable dynamic logics for fragments of the class of context-free programs and \cite{DPE87}, \cite{DPEL87} and \cite{Vera}, that develop dynamic logics for classes of programs with some sort of concurrency. Our logics have a close relation to two logics in this last group: Concurrent PDL with Channels \cite{DPEL87} and the logic developed in \cite{Vera}. Both of these logics are expressive enough to represent interesting properties of communicating concurrent systems. However, neither of them has a simple Kripke semantics. The first has a semantics based on \emph{super-states} and \emph{super-processes} and its satisfiability problem can be proved undecidable (in fact, it is $\Pi_{1}^{1}$-hard). Also, it does not have a complete axiomatization \cite {DPEL87}. The second makes a semantic distinction between \emph{final} and \emph{non-final} states, which makes its semantics and its axiomatization rather complex. On the other hand, due to the use of the CCS mechanisms of communication and concurrency, our logics have a simple Kripke semantics, simple and complete axiomatizations and the finite model property.

We choose to base our logics in the mechanisms of communication and concurrency of CCS, instead of some other process algebra, for two reasons. First, CCS is built with the philosophy that only those operators that are essential to the description of the basic behaviours of communication and concurrency should be included as primitives in the language, while the operators and behaviours of greater complexity should be derived from the basic ones. Using a small language like CCS, where only the more basic constructions are present, we can study in details what are the problems that may arise when we try to use its operators to build a dynamic logic and what operators and constructions we need to add or remove to correct these problems. Second, the development of CCS-based dynamic logics can be used as a natural stepping stone to the development of dynamic logics based on the $\pi$-Calculus \cite{M99}, a very powerful process algebra that is able to describe not only non-determinism and concurrency, but also \emph{mobility} of processes. The $\pi$-Calculus can also be used to encode some powerful programming paradigms, as object-oriented programming and functional programming ($\lambda$-Calculus) \cite{M99}.

The rest of this paper is organized as follows. In section \ref{ccs}, we introduce the necessary background concepts: Propositional Dynamic Logic and the Calculus for Communicating Systems. Our first logic (sCCS-PDL), together with a complete axiomatic system, is presented in section \ref{sec:sccs}. In this logic, we do not use constants or restriction in the CCS processes. In section \ref{pdl-ccs}, we present our second logic (CCS-PDL), in which we allow the presence of constants in the CCS processes. We also give an axiomatization for this second logic and prove its completeness using a Fischer-Ladner construction. The third logic (XCCS-PDL), together with a complete axiomatization for it, is presented in section \ref{sec:xccs}. In this logic, we extend CCS with some extra operators, which allows us to solve some issues that appear in the previous logics. Finally, in section \ref{DC}, we state our final remarks. 

In the preliminary version of this work (\cite{Wollic}), the contents of section \ref{sec:xccs} are completely absent and the concepts and proofs in section \ref{pdl-ccs} are presented with far less details. Besides that, most of the motivations, discussions and detailed explanations that we present in this paper, trying to show what guided our choices in the construction of these logics, are also absent from \cite{Wollic}.

\section{Background}\label{ccs}

This section presents two important subjects. First, we make a brief review of the syntax and semantics of PDL. Second, we present the process algebra CCS together with some useful concepts, properties and results from its theory. We do not assume a familiarity with CCS, since process algebras are by no means a universally studied topic among (modal) logicians. We introduce here all that is necessary for our presentation in the next sections, trying to make this work as self-contained as possible.

\subsection{Propositional Dynamic Logic}\label{PDL}

In this section, we present the syntax and semantics of PDL.

\begin{defn}\label{def-langpdl} 
The PDL language consists of a set $\Phi$ of countably many proposition symbols, a set $\Pi$ of countably many basic programs, the boolean connectives $\neg$ and $\land$, the program constructors $;$, $\cup$ and $\phantom{}^*$ and a modality $\langle \pi \rangle$ for every program $\pi$. The formulas are defined as follows:
\[
\varphi ::= p \mid \top \mid \neg \varphi \mid \varphi_1 \wedge \varphi_2 \mid \langle \pi \rangle \varphi, \mbox{ with } \, \pi ::= a \mid \pi_1;\pi_2 \mid \pi_1 \cup \pi_2 \mid \pi^{*},
\]
where $p \in \Phi$ and $a \in \Pi$.
\end{defn}

In all the logics that appear in this paper, we use the standard abbreviations $\bot \equiv \neg \top$, $\varphi \lor \phi \equiv \neg ( \neg \varphi \land \neg \phi)$, $\varphi \rightarrow \phi \equiv \neg( \varphi \land \neg \phi)$ and $[\pi] \varphi \equiv \neg \langle \pi \rangle \neg \varphi$.

\begin{defn}\label{def-framepdl} 
A \emph{frame} for PDL is a tuple $\mathcal{F}= (W, R_{a}, R_{\pi})$ where
\begin{itemize}
\item $W$ is a non-empty set of states;
\item $R_{a}$ is a binary relation for each basic program $a$;
\item $R_{\pi}$ is a binary relation for each non-basic program $\pi$, inductively built using the rules $R_{\pi_1 ; \pi_2} = R_{\pi_1} \circ R_{\pi_2}$, $R_{\pi_1 \cup \pi_2} = R_{\pi_1} \cup R_{\pi_2}$ and $R_{\pi^{*}} = R_{\pi}^{*}$, where $R_{\pi}^{*}$ denotes the reflexive transitive closure of $R_{\pi}$.
\end{itemize}
\end{defn}

\begin{defn}\label{def-modelpdl} 
A \emph{model} for PDL is a pair $\mathcal{M}= ({\cal F}, {\bf V})$, where ${\cal F}$ is a PDL frame and ${\bf V}$  is a valuation function ${\bf V} : \Phi \mapsto 2^W$.
\end{defn}

The semantical notion of satisfaction for PDL is defined as follows:

\begin{defn}\label{def-satpdl} 
Let $\mathcal{M}= ({\cal F}, {\bf V})$ be a model. The notion of \emph{satisfaction} of a formula $\varphi$ in a model $\mathcal{M}$ at a state $w$, notation $\mathcal{M},w \Vdash \varphi$, can be inductively defined as follows:
\begin{itemize}
\item $\mathcal{M},w \Vdash p$ iff $w \in {\bf V}(p)$;
\item $\mathcal{M},w \Vdash \top$ always;
\item $\mathcal{M},w \Vdash \neg \varphi$ iff $\mathcal{M},w \not\Vdash \varphi$;
\item $\mathcal{M},w \Vdash \varphi_{1} \wedge \varphi_{2}$ iff $\mathcal{M},w \Vdash \varphi_{1}$ and $\mathcal{M},w \Vdash \varphi_{2}$;
\item $\mathcal{M},w \Vdash \langle \pi \rangle \varphi$ iff there is $w' \in W$ such that $w R_{\pi} w'$ and $\mathcal{M},w' \Vdash \varphi$.
\end{itemize}
\end{defn}

\subsection{Calculus for Communicating Systems}\label{CP}

The Calculus for Communicating Systems (CCS) is a well known process algebra, proposed by Robin Milner \cite{M89}, for the specification of communicating concurrent systems. It models the concurrency and interaction between processes through individual acts of communication. A CCS specification is a description (in the form of algebraic equations) of the behaviour expected from a system, based on the communication events that may occur. For a broad introduction to CCS, \cite{M89} can be consulted.

In CCS, a pair of processes can communicate through a common channel and each act of communication consists simply of a signal being sent at one end of the channel and immediately being received at the other.

Let ${\cal N} = \{a,b,c,\ldots\}$  be a set of names. Each channel in a CCS specification is labelled by a name. The labels of the channels are also used to describe the communication actions (sending and receiving signals) performed by the processes, as is shown below. Besides these communication actions, CCS has only one other action: the silent action, denoted by $\tau$, used to represent any internal action performed by any of the processes that does not involve an act of communication (e.g.: a memory update).

There are two possible semantics for the $\tau$ action in CCS: it can be regarded as being observable, in the same way as the communication actions, or it can be regarded as being invisible. We adopt the first one, since it is more generic. In our logical formalism, we are able to represent the second semantics as a particular case of the first.

\begin{defn}\label{def:ccslang}
In our presentation of CCS, process specifications can be built using the following operations:
\[
P ::=  \alpha \mid \alpha.P \mid \alpha.A \mid P_1 + P_2 \mid P_1 | P_2 \mid P \backslash L,
\]
with
\[
\alpha ::= a \mid \overline{a} \mid \tau,
\]
where $a \in {\cal N}$, $L \subseteq {\cal N}$ and every constant $A$ has a unique \emph{defining equation} $A \stackrel{def}{=} P_A$, where $P_A$ is a process specification. In this work, every time that a process is linked to a constant $A$ through a defining equation, it will be denoted by $P_A$. 
\end{defn}

Originally, CCS also defines a null process, denoted by ${\bf 0}$. It represents the process that is unable to perform any actions. However, because of its somewhat loose definition, which fails to differentiate between a deadlock and a successful termination (unlike other process algebras, as ACP \cite{Fokkink} for instance, in which the deadlocked process and the terminated process are different), its use would bring a serious inconvenience to the semantics of our first two logics: the semantics would not be fully compositional. This is shown in details in the next section. Because of that, we drop this null process until our third logic, when we extend CCS with new operators and partially redefine its semantics, obtaining a null process with a much better algebraic behaviour. To completely drop the null process, we must also drop the restriction operator, as it may be used to define such a process (e.g. $a \backslash \{a\}$). Hence, the restriction operator will also only be present in our third logic.

The \emph{prefix} operator (.) denotes that the process will first perform the action $\alpha$ and then behave as $P$ or $A$. The \emph{summation} (or \emph{nondeterministic choice}) operator (+) denotes that the process will make a nondeterministic choice to behave as either $P_1$ or $P_2$. The \emph{parallel composition} operator ($\mid$) denotes that the processes $P_1$ and $P_2$ may proceed independently or may communicate through a common channel. Finally, the \emph{restriction} operator ($\backslash$) denotes that the channels in $L$ are only accessible inside $P$. Iteration in CCS is modeled through recursive defining equations, i.e., equations $A \stackrel{def}{=} P_A$ where $A$ occurs in $P_A$.

The action $a$, called \emph{input action}, denotes that the process receives a signal through the channel labelled by $a$. The action $\overline{a}$, called \emph{output action}, denotes that the process sends a signal through the channel labelled by $a$. Finally, $\tau$ denotes the silent action.

We write $P \stackrel{\alpha}{\rightarrow} P'$ to express that the process $P$ can perform the action $\alpha$ and after that behave as $P'$. We write $P \stackrel{\alpha}{\rightarrow} \surd$ to express that the process $P$ successfully finishes after performing the action $\alpha$ (a notation borrowed from ACP). A process only finishes when there is not any possible action left for it to perform. For example, $\beta \stackrel{\beta}{\rightarrow} \surd$. When a process finishes inside a parallel composition, we write $P$ instead of $P| \surd$. We also write $\surd$ instead of $\surd \backslash L$ and $\surd | \surd$. We define the set $\overline{L}$ as $\overline{L} = \{ \overline{a} : a \in L \}$. In table \ref{tab:semccs}, we present the semantics for the operators based on this notation. In this table, $P$, $Q$ and $P_A$ are process specifications, while $P'$ and $Q'$ are process specifications or $\surd$.

\begin{table}
\centering
\caption{Transition Relations of CCS}
\begin{tabular}{|c|c|c|c|c|}
\hline
$\alpha \stackrel{\alpha}{\rightarrow} \surd$ & 
$\alpha.P \stackrel{\alpha}{\rightarrow} P$ &
$\frac{A \stackrel{def}{=} P_A}{\alpha.A \stackrel{\alpha}{\rightarrow} P_A}$ &
$\frac{P \stackrel{\alpha}{\rightarrow} P'} {P + Q \stackrel{\alpha}{\rightarrow} P'}$ &  
$\frac{Q \stackrel{\beta}{\rightarrow} Q'} {P + Q \stackrel{\beta}{\rightarrow} Q'}$ \\
\hline
$\frac{P \stackrel{\alpha}{\rightarrow} P'}{P | Q \stackrel{\alpha}{\rightarrow} P' | Q}$ &
$\frac{Q \stackrel{\beta}{\rightarrow} Q'}{P | Q \stackrel{\beta}{\rightarrow} P | Q'}$ &
$\frac{P \stackrel{\lambda}{\rightarrow} P', Q \stackrel{\overline{\lambda}}{\rightarrow} Q'}{P | Q \stackrel{\tau}{\rightarrow} P' | Q'}$ &
$\frac{P \stackrel{\alpha}{\rightarrow} P',\alpha \not\in L \cup \overline{L}}{P \backslash L 
\stackrel{\alpha}{\rightarrow} P' \backslash L}$ & \\
\hline
\end{tabular}
\label{tab:semccs}
\end{table}

In order to motivate the use of CCS, we present a simple example of the use of the language below. Here, we are still using CCS outside of the logical formalisms that are presented in the next sections.

\begin{exmp}[\cite{M89,S94}]\label{V}
Consider a vending machine where one can put coins of one or two euro and buy a little or a big chocolate bar. After inserting the coins, one must press the little button for a little chocolate or the big button for a big chocolate. The machine is also programmed to shutdown on its own following some internal protocol (represented by a $\tau$ action). A CCS term describing the behaviour of this machine is the following:
\[
V = 1e.little.\overline{collect}.A + 1e.1e.big.\overline{collect}.A + 2e.big.\overline{collect}.A
\]
\[
A \stackrel{def}{=} 1e.little.\overline{collect}.A + 1e.1e.big.\overline{collect}.A + 2e.big.\overline{collect}.A + \tau
\]
Let us now suppose that Chuck wants to use this vending machine. We could describe Chuck as
\[
C = \overline{1e}.\overline{little}.collect + \overline{1e}.\overline{1e}.\overline{big}.collect + \overline{2e}.\overline{big}.collect.
\]
Notice that Chuck does not have an iterative behaviour. Once he collects the chocolate, he is done. Now, if we want to model the process of Chuck buying a chocolate from the vending machine, we could write $(V | C) \backslash L$, where $L = \{ 1e, 2e, little, big, collect \}$.
\end{exmp}

\begin{defn}
Let ${\cal P}$ be the set of all possible process specifications. A set $Z \subseteq {\cal P} \times {\cal P}$ is a \emph{strong bisimulation} if $(P,Q) \in Z$ implies the following:
\begin{itemize}
 \item If $P \stackrel{\alpha}{\rightarrow} P'$ and $P' \in {\cal P}$, then there is $Q' \in {\cal P}$ such that $Q \stackrel{\alpha}{\rightarrow} Q'$ and $(P',Q') \in Z$;
 \item If $Q \stackrel{\alpha}{\rightarrow} Q'$ and $Q' \in {\cal P}$, then there is $P' \in {\cal P}$ such that $P \stackrel{\alpha}{\rightarrow} P'$ and $(P',Q') \in Z$;
 \item $P \stackrel{\alpha}{\rightarrow} \surd$ if and only if $Q \stackrel{\alpha}{\rightarrow} \surd$.
\end{itemize}
\end{defn}

\begin{defn}\label{def:bisim}
Two process specifications $P$ and $Q$ are \emph{strongly bisimilar} (or simply \emph{bisimilar}), denoted by $P \sim Q$, if there is a strong bisimulation $Z$ such that $(P,Q) \in Z$.
\end{defn}

Now, we introduce the Expansion Law, which is very important in the definition of the semantics of our logics in the next sections and in their axiomatizations. We present a particular case of the Expansion Law, which is suited to our needs. The most general case of the Expansion Law is presented in \cite{M89}.

\begin{defn}\label{def:unres}
We say that a process is \emph{unrestricted} if it has no occurrences of the $\backslash$ operator.
\end{defn}

\begin{thm}[Expansion Law (EL)]\label{teo:EL}
Let $P = P_1 \mid P_2$, where $P$ is unrestricted. Then
\[
P \sim \sum_{P_1 \stackrel{\alpha}{\rightarrow} P_1'} \alpha.(P_1' \mid P_2) + \sum_{P_2 \stackrel{\beta}{\rightarrow} P_2'} \beta.(P_1 \mid P_2') + \sum_{R \in A_{\tau}} \tau . R,
\]
where $A_{\tau} = \{ (P_1' \mid P_2') : P_1 \stackrel{a}{\rightarrow} P_1' \mbox{ and } P_2 \stackrel{\overline{a}}{\rightarrow} P_2', \mbox{ for some } a \in {\cal N} \} \cup \{ (P_1' \mid P_2') : P_1 \stackrel{\overline{a}}{\rightarrow} P_1' \mbox{ and } P_2 \stackrel{a}{\rightarrow} P_2', \mbox{ for some } a \in {\cal N} \}$. We denote the right side of this bisimilarity by $Exp(P)$.
\end{thm}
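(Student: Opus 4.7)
The plan is to exhibit an explicit witness for the bisimilarity rather than to attempt an inductive reduction. I would define
\[
Z = \{(P_1 \mid P_2,\; Exp(P_1 \mid P_2)) : P_1 \mid P_2 \text{ is unrestricted}\} \cup \{(R,R) : R \in \mathcal{P}\},
\]
and verify directly that $Z$ is a strong bisimulation; since $(P_1 \mid P_2,\; Exp(P_1 \mid P_2)) \in Z$, the conclusion follows from Definition \ref{def:bisim}. The identity pairs suffice for the residuals because every transition of $Exp(P_1 \mid P_2)$ lands on the same syntactic state as the corresponding transition of $P_1 \mid P_2$, so there is no need to re-apply the expansion to the successor.

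For the nontrivial pairs I would perform a case analysis in both directions. In the forward direction, any transition $P_1 \mid P_2 \stackrel{\alpha}{\rightarrow} R$ must arise from one of the three parallel composition rules in Table \ref{tab:semccs}: a left move $P_1 \stackrel{\alpha}{\rightarrow} P_1'$ yielding $R = P_1' \mid P_2$; a symmetric right move; or a synchronization $P_1 \stackrel{\lambda}{\rightarrow} P_1'$ with $P_2 \stackrel{\overline{\lambda}}{\rightarrow} P_2'$, yielding $R = P_1' \mid P_2'$ and $\alpha = \tau$. In each case, the construction of $Exp(P_1 \mid P_2)$ ensures that a summand of the form $\alpha.R$ is present, so $Exp(P_1 \mid P_2) \stackrel{\alpha}{\rightarrow} R$ follows by the prefix and summation rules, and $(R,R) \in Z$. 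In the backward direction, every transition of $Exp(P_1 \mid P_2)$ originates at exactly one summand, and each summand was placed there in bijective correspondence with a transition of $P_1 \mid P_2$, so matching is immediate.

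The main bookkeeping difficulty, and the step I expect to require the most care, is handling the termination marker $\surd$ uniformly with the conventions $\surd \mid P = P$ and $\surd \mid \surd = \surd$. The expression $\alpha.(P_1' \mid P_2)$ on the right-hand side of the Expansion Law collapses to $\alpha.P_2$ if $P_1' = \surd$, and an entire summand degenerates to a bare action $\alpha$ when both components would become $\surd$ simultaneously. I would split each of the three sums into subcases according to whether $P_1'$ and $P_2'$ lie in $\mathcal{P}$ or equal $\surd$, and verify that the rules $\alpha \stackrel{\alpha}{\rightarrow} \surd$ and $\alpha.P \stackrel{\alpha}{\rightarrow} P$ together produce exactly the residuals demanded by the conventions on parallel composition. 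Once this case analysis is dispatched, the bisimulation property of $Z$ is established, and hence $P \sim Exp(P)$. The hypothesis that $P$ is unrestricted is used precisely to exclude transitions arising from the restriction rule, which would otherwise contribute further cases not accounted for by the sums on the right.
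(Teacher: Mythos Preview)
Your approach is correct and is precisely the standard argument: exhibit the relation $Z$ pairing each parallel composition with its expansion (padded out with the identity), and verify the bisimulation clauses by case analysis on the operational rules for $\mid$, $+$, and prefix. The paper itself does not supply a proof of this theorem; it presents the statement as a specialization of the general Expansion Law and defers to Milner~\cite{M89} for the details. What you have written is essentially the proof one finds there, so there is no meaningful methodological comparison to draw.

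One small remark on your final paragraph: the reason you give for the ``unrestricted'' hypothesis is not quite accurate. Since $P$ has the form $P_1 \mid P_2$ at the top level, the restriction rule of Table~\ref{tab:semccs} cannot fire at that level regardless of whether restriction occurs deeper inside $P_1$ or $P_2$; any such internal restrictions are already absorbed into the premises $P_i \stackrel{\alpha}{\rightarrow} P_i'$. The bisimulation argument you outline therefore goes through even without the hypothesis. In the paper the assumption is imposed simply because restriction has been excised from the first two logics, not because the Expansion Law would otherwise fail. This does not affect the soundness of your proof, only the commentary.
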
 

\subsection{Action Sequences and Possible Runs}

In this section, we introduce the key concept of \emph{finite possible runs} of a process. This concept plays a central role in the semantics of our logics.

\begin{defn}
We use the notation $\overrightarrow{\alpha}$ to denote a potentially infinite sequence of actions $\alpha_1 . \alpha_2 .$ $.\cdots . \alpha_n (. \cdots)$ (the empty sequence is denoted by $\overrightarrow{\varepsilon}$). The empty sequence follows the rule $\overrightarrow{\alpha} . \overrightarrow{\varepsilon} = \overrightarrow{\varepsilon} . \overrightarrow{\alpha} = \overrightarrow{\alpha}$, for all $\overrightarrow{\alpha}$. We denote the $i$-th term of the sequence $\overrightarrow{\alpha}$ by $(\overrightarrow{\alpha})_i$.
\end{defn}

\begin{defn}
We say that a finite sequence of actions $\overrightarrow{\beta}$ is a prefix of $\overrightarrow{\alpha}$ if there is a non-empty sequence $\overrightarrow{\lambda}$ such that $\overrightarrow{\alpha} = \overrightarrow{\beta} . \overrightarrow{\lambda}$. If $\overrightarrow{\beta}$ is a prefix of $\overrightarrow{\alpha}$, we write $\overrightarrow{\beta} \subset \overrightarrow{\alpha}$.
\end{defn}

\begin{defn}
We write $P \stackrel{\overrightarrow{\alpha}}{\Rightarrow} P'$ to express that the process $P$ may perform the sequence of actions $\overrightarrow{\alpha}$ and after that behave as $P'$. We write $P \stackrel{\overrightarrow{\alpha}}{\Rightarrow} \surd$ to express that the process $P$ may successfully finish after performing the sequence of actions $\overrightarrow{\alpha}$ (this, in particular, implies that $\overrightarrow{\alpha}$ is finite).
\end{defn}

\begin{defn}
We define the set of finite possible runs of a process $P$, denoted by $\overrightarrow{{\cal R}_f}(P)$, as $\overrightarrow{{\cal R}_f}(P) = \{\overrightarrow{\alpha} : P \stackrel{\overrightarrow{\alpha}}{\Rightarrow} \surd \}$.
\end{defn}

We want to define semantics for our logics that only take into account the \emph{finite} possible runs of the processes, i.e., situations in which the processes successfully finish. Thus, we present some useful results about finite possible runs.

\begin{defn}
Let $R$ and $S$ be sets of finite sequences of actions. We can define the following operations on these sets:
\begin{enumerate}
 \item $R \circ S = \{ \overrightarrow{\alpha} . \overrightarrow{\beta} : \overrightarrow{\alpha} \in R \,\, \textrm{and} \,\, \overrightarrow{\beta} \in S \}$;
 \item $R \cup S = \{ \overrightarrow{\alpha} : \overrightarrow{\alpha} \in R \,\, \textrm{or} \,\, \overrightarrow{\alpha} \in S \}$;
 \item $R^0 = \{ \overrightarrow{\varepsilon} \}$, $R^n = R \circ R^{n-1} (n \geq 1)$;
 \item $R^* = \bigcup_{n \in \mathbb{N}} R^n$.
\end{enumerate}
\end{defn}

\begin{lem}\label{lem:bisseq}
If $P \sim Q$, then $P \stackrel{\overrightarrow{\alpha}}{\Rightarrow} \surd$ if and only if $Q \stackrel{\overrightarrow{\alpha}}{\Rightarrow} \surd$.
\end{lem}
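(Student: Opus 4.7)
The plan is to proceed by induction on the length of the sequence $\overrightarrow{\alpha}$. By symmetry, it suffices to prove one direction (say, left-to-right). Recall that the definition of strong bisimulation separates two kinds of transitions: those leading to a genuine process in $\mathcal{P}$ (first two clauses), and those leading to successful termination $\surd$ (third clause). Both will be used.

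For the base case, $\overrightarrow{\alpha}$ has length one, so $\overrightarrow{\alpha} = \alpha$ for a single action $\alpha$. Then $P \stackrel{\overrightarrow{\alpha}}{\Rightarrow} \surd$ unfolds directly to $P \stackrel{\alpha}{\rightarrow} \surd$, and the third clause of strong bisimulation gives $Q \stackrel{\alpha}{\rightarrow} \surd$, i.e., $Q \stackrel{\overrightarrow{\alpha}}{\Rightarrow} \surd$.

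For the inductive step, suppose the result holds for all sequences of length $n$, and let $\overrightarrow{\alpha}$ have length $n+1$. Write $\overrightarrow{\alpha} = \alpha . \overrightarrow{\beta}$ where $\overrightarrow{\beta}$ has length $n \geq 1$. If $P \stackrel{\overrightarrow{\alpha}}{\Rightarrow} \surd$, then there must exist an intermediate process $P' \in \mathcal{P}$ with $P \stackrel{\alpha}{\rightarrow} P'$ and $P' \stackrel{\overrightarrow{\beta}}{\Rightarrow} \surd$ (note that $P$ cannot have finished after only the first action, since $\overrightarrow{\beta}$ is non-empty). By the first clause of strong bisimulation, applied to a bisimulation $Z$ with $(P,Q) \in Z$, there exists $Q' \in \mathcal{P}$ with $Q \stackrel{\alpha}{\rightarrow} Q'$ and $(P',Q') \in Z$, hence $P' \sim Q'$. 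The induction hypothesis then yields $Q' \stackrel{\overrightarrow{\beta}}{\Rightarrow} \surd$, and combining with $Q \stackrel{\alpha}{\rightarrow} Q'$ gives $Q \stackrel{\overrightarrow{\alpha}}{\Rightarrow} \surd$.

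The only subtle point, which is hardly an obstacle but worth flagging, is that one must invoke the correct clause of the bisimulation definition: the third clause handles the very last step of the run (which terminates in $\surd$), while the first clause handles all intermediate steps (which lead to ordinary processes). The induction on the length of $\overrightarrow{\alpha}$ makes this split clean.
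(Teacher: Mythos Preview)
Your proof is correct and follows essentially the same approach as the paper: induction on the length of $\overrightarrow{\alpha}$, using the third bisimulation clause for the terminal step and the first clause for intermediate steps. The paper additionally treats the length-$0$ case explicitly (observing that no process finishes on the empty sequence, so the biconditional holds trivially), but this is a harmless omission since your base case at length~$1$ together with your inductive step already covers all sequences for which either side can hold.
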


\begin{proof}
We prove this by induction on the length $n$ of $\overrightarrow{\alpha}$. If $n = 0$, then $\overrightarrow{\alpha} = \overrightarrow{\varepsilon}$ and neither $P$ nor $Q$ may successfully finish without executing any action. If $n = 1$, then $\overrightarrow{\alpha} = \alpha$, for some action $\alpha$. Then, $P \stackrel{\overrightarrow{\alpha}}{\Rightarrow} \surd \Leftrightarrow P \stackrel{\alpha}{\rightarrow} \surd$. By the hypothesis that $P \sim Q$, $P \stackrel{\alpha}{\rightarrow} \surd \Leftrightarrow Q \stackrel{\alpha}{\rightarrow} \surd$. Finally, $Q \stackrel{\alpha}{\rightarrow} \surd \Leftrightarrow Q \stackrel{\overrightarrow{\alpha}}{\Rightarrow} \surd$.

Suppose that the theorem is true for all $n < k$. Let $\overrightarrow{\alpha}$ be a sequence of length $k$. Let $\alpha$ be the first action of the sequence and let $\overrightarrow{\beta}$ be a sequence of length $k-1$ such that $\overrightarrow{\alpha} = \alpha . \overrightarrow{\beta}$. Then, $P \stackrel{\overrightarrow{\alpha}}{\Rightarrow} \surd$ if and only if there is a process $P'$ such that $P \stackrel{\alpha}{\rightarrow} P'$ and $P' \stackrel{\overrightarrow{\beta}}{\Rightarrow} \surd$. But if $P \stackrel{\alpha}{\rightarrow} P'$ and $P \sim Q$, then there is a process $Q'$ such that $Q \stackrel{\alpha}{\rightarrow} Q'$ and $P' \sim Q'$. Now, $\overrightarrow{\beta}$ is a sequence of length shorter than $k$, so by the induction hypothesis, as $P' \sim Q'$ and $P' \stackrel{\overrightarrow{\beta}}{\Rightarrow} \surd$, then $Q' \stackrel{\overrightarrow{\beta}}{\Rightarrow} \surd$. This means that $Q \stackrel{\overrightarrow{\alpha}}{\Rightarrow} \surd$, proving the theorem.
\end{proof}

\begin{thm}\label{teo:eqRf}
If $P \sim Q$, then $\overrightarrow{{\cal R}_f}(P) = \overrightarrow{{\cal R}_f}(Q)$.
\end{thm}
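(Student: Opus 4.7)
The statement is essentially an immediate corollary of Lemma \ref{lem:bisseq}, so the plan is very short. The idea is just to unfold the definition of $\overrightarrow{{\cal R}_f}$ and apply the lemma pointwise.

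First, I would recall that by Definition, $\overrightarrow{{\cal R}_f}(P) = \{\overrightarrow{\alpha} : P \stackrel{\overrightarrow{\alpha}}{\Rightarrow} \surd\}$ and similarly for $Q$. To show set equality, I fix an arbitrary finite sequence $\overrightarrow{\alpha}$ and argue both inclusions at once via a chain of biconditionals: $\overrightarrow{\alpha} \in \overrightarrow{{\cal R}_f}(P)$ iff $P \stackrel{\overrightarrow{\alpha}}{\Rightarrow} \surd$ iff (by Lemma \ref{lem:bisseq}, using the hypothesis $P \sim Q$) $Q \stackrel{\overrightarrow{\alpha}}{\Rightarrow} \surd$ iff $\overrightarrow{\alpha} \in \overrightarrow{{\cal R}_f}(Q)$. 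The only non-trivial step is the middle one, and it is exactly the content of the lemma.

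There is essentially no obstacle here, since all of the inductive work on action sequences has already been done inside the proof of Lemma \ref{lem:bisseq}. The only minor point worth being explicit about is that $\overrightarrow{{\cal R}_f}$ only ranges over \emph{finite} sequences, so the quantification in the definition matches the statement of the lemma (which likewise asserts a biconditional about $P \stackrel{\overrightarrow{\alpha}}{\Rightarrow} \surd$ for finite $\overrightarrow{\alpha}$). No separate argument for infinite runs is required, and no induction needs to be redone.
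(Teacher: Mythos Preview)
Your proposal is correct and matches the paper's own proof essentially verbatim: the paper also unfolds the definition of $\overrightarrow{{\cal R}_f}$ and applies Lemma~\ref{lem:bisseq} to pass from $P \stackrel{\overrightarrow{\alpha}}{\Rightarrow} \surd$ to $Q \stackrel{\overrightarrow{\alpha}}{\Rightarrow} \surd$, establishing one inclusion and noting the other is analogous. The only cosmetic difference is that you phrase it as a single chain of biconditionals rather than two separate inclusions.
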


\begin{proof}
Suppose that $\overrightarrow{\alpha} \in \overrightarrow{{\cal R}_f}(P)$. Then, $P \stackrel{\overrightarrow{\alpha}}{\Rightarrow} \surd$. As $P \sim Q$, this implies, by lemma \ref{lem:bisseq}, that $Q \stackrel{\overrightarrow{\alpha}}{\Rightarrow} \surd$, which means that $\overrightarrow{\alpha} \in \overrightarrow{{\cal R}_f}(Q)$. Thus, $\overrightarrow{{\cal R}_f}(P) \subseteq \overrightarrow{{\cal R}_f}(Q)$. The proof that $\overrightarrow{{\cal R}_f}(Q) \subseteq \overrightarrow{{\cal R}_f}(P)$ is entirely analogous.
\end{proof}

\section{sCCS-PDL}\label{sec:sccs}

This section presents our first CCS-Based Dynamic Logic. In this logic, all the CCS processes that appear do not use constants or restriction. We call this logic Small CCS-PDL or sCCS-PDL. Our goal here is to introduce a simple logic and discuss some of the issues concerning the axioms and the relational interpretation of the formulas.

\subsection{Language and Semantics}

In this section, we present the syntax and semantics of sCCS-PDL.

\begin{defn}\label{def-langl} 
The sCCS-PDL language consists of a set $\Phi$ of countably many proposition symbols, a set ${\cal N}$ of countably many names, the silent action $\tau$, the boolean connectives $\neg$ and $\land$, the CCS operators $.$, $+$ and $\mid$ and a modality $\langle P \rangle$ for every process $P$. The formulas are defined as follows:
\[
\varphi ::= p \mid \top \mid \neg \varphi \mid \varphi_1 \land \varphi_2 \mid \langle P \rangle \varphi, \mbox{ with }
\, P ::= \alpha \mid  \alpha . P \mid P_1 + P_2 \mid P_1|P_2,
\]
where $p \in \Phi$ and $\alpha \in {\cal N} \cup \overline{\cal N} \cup \{ \tau \}$.
\end{defn} 

\begin{defn}\label{def-frame1} 
A \emph{frame} for sCCS-PDL is a tuple $\mathcal{F}= (W, \{R_{\alpha}\})$ where
\begin{itemize}
\item $W$ is a non-empty set of states;
\item $R_{\alpha}$ is a binary relation for each basic action $\alpha \in {\cal N} \cup \overline{\cal N} \cup \{ \tau \}$.
\end{itemize}
\end{defn}

\begin{defn}\label{def-model*} 
A \emph{model} for sCCS-PDL is a pair $\mathcal{M}= ({\cal F}, {\bf V})$, where ${\cal F}$ is a sCCS-PDL frame and ${\bf V}$  is a valuation function ${\bf V} : \Phi \mapsto 2^W$.
\end{defn}

We now define the semantical notion of satisfaction for sCCS-PDL as follows:

\begin{defn}\label{def-sat*} 
Let $\mathcal{M}= ({\cal F}, {\bf V})$ be a model. The notion of \emph{satisfaction} of a formula $\varphi$ in a model $\mathcal{M}$ at a state $w$, notation $\mathcal{M},w \Vdash \varphi$, can be inductively defined as follows:
\begin{itemize}
\item $\mathcal{M},w \Vdash p$ iff $w \in {\bf V}(p)$;
\item $\mathcal{M},w \Vdash \top$ always;
\item $\mathcal{M},w \Vdash \neg \varphi$ iff $\mathcal{M},w \not\Vdash \varphi$;
\item $\mathcal{M},w \Vdash \varphi_{1} \wedge \varphi_{2}$ iff $\mathcal{M},w \Vdash \varphi_{1}$ and $\mathcal{M},w \Vdash \varphi_{2}$;
\item $\mathcal{M},w \Vdash \langle P \rangle \varphi$ iff there is a finite path $(v_0,v_1,\ldots,$ $v_n)$, $n \geq 1$, such that $v_0 = w$, $\mathcal{M},v_n \Vdash \varphi$ and there is $\overrightarrow{\alpha} \in \overrightarrow{{\cal R}_f}(P)$ of length $n$ such that $(v_{i-1},v_{i}) \in R_{\beta}$ if and only if $(\overrightarrow{\alpha})_i = \beta$, for $1 \leq i \leq n$. We say that such $\overrightarrow{\alpha}$ \emph{matches} the path $(v_0,\ldots,v_n)$.
\end{itemize}
\end{defn}

If $\mathcal{M},w \Vdash \varphi$ for every state $w$, we say that $\varphi$ is \emph{globally satisfied} in the model $\mathcal{M}$, notation  $\mathcal{M} \Vdash \varphi$. If $\varphi$ is globally satisfied in all models $\mathcal{M}$ of a frame ${\cal F}$, we say that $\varphi$ is \emph{valid} in ${\cal F}$, notation ${\cal F} \Vdash \varphi$. Finally, if $\varphi$ is valid in all frames, we say that $\varphi$ is valid, notation $\Vdash \varphi$. Two formulas $\varphi$ and $\psi$ are \emph{semantically equivalent} if $\Vdash \varphi \leftrightarrow \psi$.

As mentioned in the previous section, there are two possible semantics for the $\tau$ action in CCS: it can be regarded as being observable or 
as being invisible. In our logics, we adopt the first one, since we are able to represent the second semantics as a particular case of the first. In fact, to do that, the only thing that is necessary is to force, in the frames under consideration, $R_{\tau}$ to be the relation $R_{\tau} = \{ (w,w) : w \in W \}$.

\begin{thm}\label{teo:valPQ}
$\overrightarrow{{\cal R}_f}(P) = \overrightarrow{{\cal R}_f}(Q)$ if and only if $\Vdash \langle P \rangle p \leftrightarrow \langle Q \rangle p$.
\end{thm}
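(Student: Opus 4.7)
The plan is to prove the two directions separately; the forward direction is essentially immediate from the semantics, while the backward direction is a contrapositive argument using a tailor-made ``linear'' counterexample frame.

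For ($\Rightarrow$), I would simply unfold Definition \ref{def-sat*}: $\mathcal{M}, w \Vdash \langle P \rangle p$ holds iff there exist a finite path from $w$ to a state satisfying $p$ together with some $\overrightarrow{\alpha} \in \overrightarrow{{\cal R}_f}(P)$ matching that path. If $\overrightarrow{{\cal R}_f}(P) = \overrightarrow{{\cal R}_f}(Q)$, the existence condition is syntactically identical for $P$ and $Q$, so $\langle P \rangle p \leftrightarrow \langle Q \rangle p$ holds at every state of every model and is therefore valid.

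For ($\Leftarrow$), I would argue the contrapositive. Suppose $\overrightarrow{{\cal R}_f}(P) \neq \overrightarrow{{\cal R}_f}(Q)$; without loss of generality, pick $\overrightarrow{\alpha} = \alpha_1 . \alpha_2 . \cdots . \alpha_n \in \overrightarrow{{\cal R}_f}(P) \setminus \overrightarrow{{\cal R}_f}(Q)$, where necessarily $n \geq 1$ (as noted in the proof of Lemma \ref{lem:bisseq}, the empty sequence lies in no $\overrightarrow{{\cal R}_f}$). Then I build the ``linear witness'' model: take $W = \{v_0, v_1, \ldots, v_n\}$, set $R_\beta = \{(v_{i-1}, v_i) : 1 \leq i \leq n \text{ and } \alpha_i = \beta\}$ for every basic action $\beta$, and define ${\bf V}(p) = \{v_n\}$ (other propositions can be given the empty extension). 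The goal is to check $\mathcal{M}, v_0 \Vdash \langle P \rangle p$ while $\mathcal{M}, v_0 \not\Vdash \langle Q \rangle p$, contradicting the assumed validity.

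The step that I expect to require the most care is verifying that in this frame the \emph{only} sequence of actions matching a path from $v_0$ to a $p$-state is $\overrightarrow{\alpha}$ itself. Since $p$ holds only at $v_n$ and the frame is a simple chain with no branching or additional edges, the only path from $v_0$ to $v_n$ is $(v_0, v_1, \ldots, v_n)$. Moreover, the matching clause of Definition \ref{def-sat*} is a biconditional in $\beta$, and by construction each edge $(v_{i-1}, v_i)$ belongs to $R_\beta$ exactly when $\beta = \alpha_i$, so the unique sequence matching that path is $\overrightarrow{\alpha}$. One small subtlety to mind is that repetitions of the same action inside $\overrightarrow{\alpha}$ are harmless: they only add parallel edges along the same chain, never shortcuts. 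Consequently $\overrightarrow{\alpha} \in \overrightarrow{{\cal R}_f}(P)$ witnesses $\langle P \rangle p$ at $v_0$, while $\overrightarrow{\alpha} \notin \overrightarrow{{\cal R}_f}(Q)$ rules out $\langle Q \rangle p$ there, closing the contradiction.
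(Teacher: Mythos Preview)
Your proposal is correct and follows essentially the same approach as the paper: a direct unfolding of the semantics for $(\Rightarrow)$ and, for $(\Leftarrow)$, the contrapositive via the same linear chain model with $p$ true only at the final state. Your extra remarks about the biconditional matching clause and repeated actions are sound and merely make explicit what the paper leaves implicit.
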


\begin{proof}
($\Rightarrow$) Suppose that $\overrightarrow{{\cal R}_f}(P) = \overrightarrow{{\cal R}_f}(Q)$, but $\not\Vdash \langle P \rangle p \leftrightarrow \langle Q \rangle p$. Then, we may assume, without loss of generality, that there is a model ${\cal M}$ and a state $v_0$ in this model such that ${\cal M}, v_0 \Vdash \langle P \rangle p$ (*), but ${\cal M}, v_0 \not\Vdash \langle Q \rangle p$ (**). By definition \ref{def-sat*}, (*) implies that there is a path $(v_0,v_1,\ldots,v_n)$, $n \geq 1$, in ${\cal M}$ such that ${\cal M}, v_n \Vdash p$ (***) and there is $\overrightarrow{\alpha} \in \overrightarrow{{\cal R}_f}(P)$ that matches this path. But as $\overrightarrow{{\cal R}_f}(P) = \overrightarrow{{\cal R}_f}(Q)$, then $\overrightarrow{\alpha} \in \overrightarrow{{\cal R}_f}(Q)$. This and (***) imply, by definition \ref{def-sat*}, that ${\cal M}, v_0 \Vdash \langle Q \rangle p$, contradicting (**).

($\Leftarrow$) Suppose that $\Vdash \langle P \rangle p \leftrightarrow \langle Q \rangle p$ (*), but $\overrightarrow{{\cal R}_f}(P) \neq \overrightarrow{{\cal R}_f}(Q)$. Then, we may assume, without loss of generality, that there is $\overrightarrow{\alpha}$ such that $\overrightarrow{\alpha} \in \overrightarrow{{\cal R}_f}(P)$, but $\overrightarrow{\alpha} \not\in \overrightarrow{{\cal R}_f}(Q)$. Let us build a frame ${\cal F}$ that consists solely of a path $(v_0,\ldots,v_n)$, $n \geq 1$, such that $R_{\alpha} = \{(v_{i-1},v_i) : 1 \leq i \leq n \,\, \textrm{and} \,\, \alpha \,\, \textrm{is the i-th term of} \,\, \overrightarrow{\alpha} \}$. Let ${\cal M} = ({\cal F}, {\bf V})$, such that $v_n \in {\bf V}(p)$ and $v_i \not\in {\bf V}(p)$, $1 \leq i < n$. Then, we have a path $(v_0,\ldots,v_n)$ such that ${\cal M}, v_n \Vdash p$ and $\overrightarrow{\alpha} \in \overrightarrow{{\cal R}_f}(P)$ matches this path. By definition \ref{def-sat*}, ${\cal M}, v_0 \Vdash \langle P \rangle p$. However, $\overrightarrow{\alpha} \not\in \overrightarrow{{\cal R}_f}(Q)$, so $(v_0,\ldots,v_n)$ is not matched by any sequence in $\overrightarrow{{\cal R}_f}(Q)$. Besides that, there is no other path $(v_0,\ldots,v_m)$, $m \geq 1$, in ${\cal M}$ such that ${\cal M}, v_m \Vdash p$. Thus, by definition \ref{def-sat*}, ${\cal M}, v_0 \not\Vdash \langle Q \rangle p$, which contradicts (*).
\end{proof}

\begin{cor}\label{cor:simPQvalPQ}
If $P \sim Q$, then $\Vdash \langle P \rangle p \leftrightarrow \langle Q \rangle p$.
\end{cor}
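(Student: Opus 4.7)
The plan is to derive this as an immediate consequence of the two results that have just been established: Theorem \ref{teo:eqRf} (bisimilar processes have the same sets of finite possible runs) and Theorem \ref{teo:valPQ} (equality of finite possible runs is equivalent to logical equivalence of the diamond modalities applied to a propositional variable).

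First I would assume $P \sim Q$. By Theorem \ref{teo:eqRf}, this immediately gives $\overrightarrow{{\cal R}_f}(P) = \overrightarrow{{\cal R}_f}(Q)$. Then I would invoke the ($\Rightarrow$) direction of Theorem \ref{teo:valPQ} on this equality to conclude $\Vdash \langle P \rangle p \leftrightarrow \langle Q \rangle p$, which is exactly the statement to be proved.

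There is essentially no obstacle here; the corollary is a two-line chaining of two previously established theorems. The only thing worth noting is that the converse does not obviously hold at this level of generality, because Theorem \ref{teo:valPQ} characterizes logical equivalence by equality of finite possible runs, not by bisimilarity, and equality of finite run sets is a strictly coarser equivalence than strong bisimilarity in CCS. So the corollary is stated (and proved) only in the forward direction, and that is all one needs.
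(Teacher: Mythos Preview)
Your proof is correct and matches the paper's own argument exactly: the paper simply states that the corollary follows directly from theorems \ref{teo:eqRf} and \ref{teo:valPQ}, which is precisely the chaining you describe. Your additional remark about the failure of the converse is accurate and worth keeping, though the paper does not include it.
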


\begin{proof}
It follows directly from theorems \ref{teo:eqRf} and \ref{teo:valPQ}.
\end{proof}

We present some equalities between sets of finite possible runs that are useful to the soundness proof of our axiomatization and to show why the null process ${\bf 0}$ is problematic.

\begin{thm}\label{teo:Pf}
The following set equalities are true:
\begin{enumerate}
 \item $\overrightarrow{{\cal R}_f}(\alpha) = \{ \alpha \}$;
 \item $\overrightarrow{{\cal R}_f}(\alpha . P) = \overrightarrow{{\cal R}_f}(\alpha) \circ \overrightarrow{{\cal R}_f}(P)$;
 \item $\overrightarrow{{\cal R}_f}(P_1 + P_2) = \overrightarrow{{\cal R}_f}(P_1) \cup \overrightarrow{{\cal R}_f}(P_2)$.
\end{enumerate}
\end{thm}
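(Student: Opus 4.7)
The plan is to prove each of the three equalities directly from the operational semantics in Table \ref{tab:semccs}, extending one-step transitions $\stackrel{\alpha}{\rightarrow}$ to sequence transitions $\stackrel{\overrightarrow{\alpha}}{\Rightarrow}$ in the obvious way, and to split each case into the two inclusions. Since the sCCS-PDL fragment has no null process, every element of $\overrightarrow{{\cal R}_f}(Q)$ is a non-empty sequence, which will keep the case analysis clean.

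For part (1), the only rule in Table \ref{tab:semccs} that applies to the atomic process $\alpha$ is $\alpha \stackrel{\alpha}{\rightarrow} \surd$. Hence $\alpha \in \overrightarrow{{\cal R}_f}(\alpha)$, and conversely any $\overrightarrow{\gamma} \in \overrightarrow{{\cal R}_f}(\alpha)$ has to begin with a transition from $\alpha$, which can only be this one and which already reaches $\surd$, so $\overrightarrow{\gamma} = \alpha$.

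For part (2), the only transition available from $\alpha.P$ is $\alpha.P \stackrel{\alpha}{\rightarrow} P$. Therefore any $\overrightarrow{\gamma} \in \overrightarrow{{\cal R}_f}(\alpha.P)$ factors as $\overrightarrow{\gamma} = \alpha . \overrightarrow{\beta}$ where $P \stackrel{\overrightarrow{\beta}}{\Rightarrow} \surd$, i.e. $\overrightarrow{\beta} \in \overrightarrow{{\cal R}_f}(P)$; combined with part (1) this gives $\overrightarrow{\gamma} \in \overrightarrow{{\cal R}_f}(\alpha) \circ \overrightarrow{{\cal R}_f}(P)$. The reverse inclusion is immediate: from $\alpha.P \stackrel{\alpha}{\rightarrow} P$ and $P \stackrel{\overrightarrow{\beta}}{\Rightarrow} \surd$ we build $\alpha.P \stackrel{\alpha . \overrightarrow{\beta}}{\Rightarrow} \surd$. (Note that, because there is no null process in sCCS-PDL, $\overrightarrow{{\cal R}_f}(P)$ contains no empty sequence, so concatenating with $\{\alpha\}$ genuinely produces sequences of length $\geq 2$, matching the fact that $\alpha.P$ cannot itself finish in a single step.)

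For part (3), the two summation rules give $P_1 + P_2 \stackrel{\alpha}{\rightarrow} R$ if and only if $P_1 \stackrel{\alpha}{\rightarrow} R$ or $P_2 \stackrel{\alpha}{\rightarrow} R$, and crucially the successor $R$ is a sub-term of $P_1$ or $P_2$ (not another summation), so after the first step the execution continues entirely inside one of the two summands. Hence if $\overrightarrow{\gamma} = \alpha . \overrightarrow{\beta} \in \overrightarrow{{\cal R}_f}(P_1 + P_2)$, the first step comes from (say) $P_1 \stackrel{\alpha}{\rightarrow} R$ and then $R \stackrel{\overrightarrow{\beta}}{\Rightarrow} \surd$, giving $P_1 \stackrel{\overrightarrow{\gamma}}{\Rightarrow} \surd$, i.e. $\overrightarrow{\gamma} \in \overrightarrow{{\cal R}_f}(P_1)$; symmetrically for $P_2$. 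The converse inclusion is again immediate from the same summation rules.

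The only step that requires some care is part (3), where one must observe that the rules $\frac{P \stackrel{\alpha}{\rightarrow} P'}{P + Q \stackrel{\alpha}{\rightarrow} P'}$ and $\frac{Q \stackrel{\beta}{\rightarrow} Q'}{P + Q \stackrel{\beta}{\rightarrow} Q'}$ discard the other summand after the first action, so the rest of the run belongs to whichever summand was selected; without this observation one might worry about ``interleaving'' sequences from the two summands, but the operational semantics explicitly forbids this. Everything else is routine unfolding of the definitions.
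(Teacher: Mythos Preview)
Your proposal is correct and takes essentially the same approach as the paper: both argue directly from the transition rules of Table~\ref{tab:semccs}. The paper's own proof is in fact the single sentence ``The proof is straightforward from table~\ref{tab:semccs},'' so your version is simply a more detailed unfolding of that same argument.
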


\begin{proof}
The proof is straightforward from table \ref{tab:semccs}.
\end{proof}

\begin{thm}\label{teo:valid}
The following formulas are valid:
\begin{enumerate}
 \item $\langle \alpha . P \rangle p \leftrightarrow \langle \alpha \rangle \langle P \rangle p$
 \item $\langle P_{1} + P_{2} \rangle p \leftrightarrow \langle P_{1} \rangle p \lor \langle P_{2} \rangle p$
\end{enumerate}
\end{thm}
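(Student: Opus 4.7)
My plan is to derive both validities directly from the semantic definition of satisfaction (Definition \ref{def-sat*}) together with the set-theoretic identities on finite possible runs provided by Theorem \ref{teo:Pf}. In each case the strategy is to take a witnessing path for the left-hand side, decompose the matching sequence according to the relevant identity, and reassemble the pieces as witnessing paths for the right-hand side (and conversely). So the whole proof is just a routine unfolding of definitions; the only care needed is in keeping track of path indices when splitting a matching sequence.

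For item 1, I would fix a model $\mathcal{M}$ and a state $w$ and argue that $\mathcal{M}, w \Vdash \langle \alpha.P \rangle p$ holds iff there is a path $(v_0,\ldots,v_n)$ with $v_0=w$ and $\mathcal{M}, v_n \Vdash p$ matched by some $\overrightarrow{\gamma} \in \overrightarrow{\mathcal R_f}(\alpha.P)$. By parts~1 and~2 of Theorem \ref{teo:Pf}, $\overrightarrow{\mathcal R_f}(\alpha.P) = \{\alpha\} \circ \overrightarrow{\mathcal R_f}(P)$, so any such $\overrightarrow{\gamma}$ has the form $\alpha . \overrightarrow{\beta}$ with $\overrightarrow{\beta} \in \overrightarrow{\mathcal R_f}(P)$. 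Hence $n \geq 1$, $(v_0,v_1) \in R_\alpha$, and the subpath $(v_1,\ldots,v_n)$ is matched by $\overrightarrow{\beta}$; equivalently $\mathcal{M}, v_1 \Vdash \langle P \rangle p$ and therefore $\mathcal{M}, w \Vdash \langle \alpha \rangle \langle P \rangle p$. The converse is symmetric: from a first $R_\alpha$-step into some $v_1$ where $\langle P \rangle p$ holds I obtain a matching $\overrightarrow{\beta} \in \overrightarrow{\mathcal R_f}(P)$, and then $\alpha . \overrightarrow{\beta} \in \overrightarrow{\mathcal R_f}(\alpha.P)$ matches the concatenated path.

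For item 2, I again fix $\mathcal{M}$ and $w$ and expand the semantic clause. By part~3 of Theorem \ref{teo:Pf}, $\overrightarrow{\mathcal R_f}(P_1+P_2) = \overrightarrow{\mathcal R_f}(P_1) \cup \overrightarrow{\mathcal R_f}(P_2)$. So a witnessing pair (path, matching sequence) for $\langle P_1 + P_2 \rangle p$ at $w$ exists iff such a witness exists with the matching sequence drawn from $\overrightarrow{\mathcal R_f}(P_1)$, or such a witness exists with it drawn from $\overrightarrow{\mathcal R_f}(P_2)$, which by Definition \ref{def-sat*} is exactly $\mathcal{M}, w \Vdash \langle P_1 \rangle p$ or $\mathcal{M}, w \Vdash \langle P_2 \rangle p$.

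I do not anticipate any real obstacle; the whole argument is a bookkeeping exercise. The only point worth being careful about is the base case when $n=1$ in item~1 (i.e.\ when $P$ successfully finishes in one action, so $\overrightarrow{\beta}$ has length one), to ensure that the intermediate state $v_1$ is well-defined and witnesses $\langle P \rangle p$; since Definition \ref{def-sat*} requires $n \geq 1$ throughout, this case poses no difficulty. Alternatively, one could state item~2 as a direct corollary of Theorem \ref{teo:valPQ} after rewriting the right-hand side as a single modality, but the direct semantic unfolding is more transparent.
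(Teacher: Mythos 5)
Your proposal is correct and follows essentially the same route as the paper: unfolding Definition \ref{def-sat*} and using the set equalities of Theorem \ref{teo:Pf} to split (and, conversely, concatenate) the matching run and its witnessing path, with item~2 handled by the union equality. The paper simply writes out the forward direction of item~1 and declares the rest analogous, so your version adds nothing essentially different.
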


\begin{proof}
We only provide the proof for the first formula. The proof for the second formula follows by an analogous line of reasoning, using the third equality in theorem \ref{teo:Pf} instead of the second one.

($\Rightarrow$) Suppose that, for some model $\mathcal{M}$ and some state $w$ in this model, $\mathcal{M},w \Vdash \langle \alpha . P \rangle p$. Then, by definition \ref{def-sat*}, there is a finite path $(v_0,v_1,\ldots,v_n)$, $n \geq 1$, such that $v_0 = w$, $\mathcal{M},v_n \Vdash p$ and a sequence $\overrightarrow{\alpha} \in \overrightarrow{{\cal R}_f}(\alpha . P)$ that matches this path. Now, by the first and second equalities in theorem \ref{teo:Pf}, there is a sequence $\overrightarrow{\beta} \in \overrightarrow{{\cal R}_f}(P)$ such that $\overrightarrow{\alpha} = \alpha . \overrightarrow{\beta}$. $\overrightarrow{\beta}$ matches the path $(v_1,\ldots,v_n)$, which implies that $\mathcal{M},v_1 \Vdash \langle P \rangle p$. Besides that, $\alpha$ matches the path $(v_0,v_1)$, which implies that $\mathcal{M},w \Vdash \langle \alpha \rangle \langle P \rangle p$. Thus, $\langle \alpha . P \rangle p \rightarrow \langle \alpha \rangle \langle P \rangle p$ is valid.

($\Leftarrow$) This proof is entirely analogous to the previous one, using the second equality in theorem \ref{teo:Pf} in the reverse direction.
\end{proof}

Now it is possible to see why, as stated in the previous section, the use of the null process ${\bf 0}$ in our logics would be inconvenient. The problems that would appear come from the fact that, as described in \cite{M89}, in a specification of the form $\alpha . {\bf 0}$, ${\bf 0}$ is denoting a process that has successfully terminated, while in a specification of the form $P + {\bf 0}$, ${\bf 0}$ is denoting a deadlocked process. This double role cannot be kept in our logics without sacrificing a very desirable property in a dynamic logic: the compositional semantics, illustrated in theorem \ref{teo:valid}.

The compositional semantics is a direct consequence of the set equalities in theorem \ref{teo:Pf}. But when we try to keep them in the presence of ${\bf 0}$, some problems arise. $\overrightarrow{{\cal R}_f}(\alpha . {\bf 0}) = \{ \alpha \}$, since {\bf 0} denotes successful termination in this case (if {\bf 0} denoted a deadlock, then $\overrightarrow{{\cal R}_f}(\alpha . {\bf 0})$ would be $\emptyset$), and $\overrightarrow{{\cal R}_f}(P + {\bf 0}) = \overrightarrow{{\cal R}_f}(P)$, since {\bf 0} denotes a deadlock in this case (if {\bf 0} denoted successful termination, then $\overrightarrow{{\cal R}_f}(P + {\bf 0})$ would be $\overrightarrow{{\cal R}_f}(P) \cup \{ \overrightarrow{\varepsilon} \}$). To keep the second equality in theorem \ref{teo:Pf}, we must have $\{ \alpha \} = \overrightarrow{{\cal R}_f}(\alpha . {\bf 0}) = \overrightarrow{{\cal R}_f}(\alpha) \circ \overrightarrow{{\cal R}_f}({\bf 0})$, which implies that $\overrightarrow{{\cal R}_f}({\bf 0}) = \{ \overrightarrow{\varepsilon} \}$ (*). On the other hand, to keep the third equality, we must have $\overrightarrow{{\cal R}_f}(P) = \overrightarrow{{\cal R}_f}(P + {\bf 0}) = \overrightarrow{{\cal R}_f}(P) \cup \overrightarrow{{\cal R}_f}({\bf 0})$, which implies that $\overrightarrow{{\cal R}_f}({\bf 0}) = \emptyset$ (**).

In the logical formalism, by theorem \ref{teo:valid}, (*) would imply that $\langle {\bf 0} \rangle \phi$ is semantically equivalent to $\phi$, while (**) would imply that $\langle {\bf 0} \rangle \phi$ is semantically equivalent to $\bot$. The crucial point in this situation is that we would have to either abandon at least one of the equalities in theorem \ref{teo:Pf}, substituting it by a pair of equations, one for the case where $P \neq {\bf 0}$ and the other for the case where $P = {\bf 0}$, or to somehow change the semantics so that the meaning of a subformula of the form $\langle {\bf 0} \rangle \phi$ will depend on the context in which it is inserted, being sometimes equivalent to $\phi$ and sometimes to $\bot$. Both ``solutions'' would seriously compromise the compositionality of the semantics.

We address this issue of the null process in our third logic, without introducing any of the above problems. There, we redefine the process ${\bf 0}$ so that it denotes only a deadlocked process, while defining a new way to denote termination.

\subsection{Axiomatic System}\label{pt}

We consider the following set of axioms and rules, where $p$ and $q$ are proposition symbols and $\varphi$ and $\psi$ are formulas.

\begin{description}
\item[(PL)] Enough propositional logic tautologies
\item[(K)] $\vdash [P](p \rightarrow q) \rightarrow ([P]p \rightarrow [P]q)$
\item[(Du)] $\vdash [P] p \leftrightarrow \neg \langle P \rangle \neg p$
\item[(Pr)] $\vdash \langle \alpha . P \rangle p \leftrightarrow \langle \alpha \rangle \langle P \rangle p$
\item[(NC)] $\vdash \langle P_{1} + P_{2} \rangle p \leftrightarrow \langle P_{1} \rangle p \lor \langle P_{2} \rangle p$
\item[(PC)] If EL can be applied to $P$, then $\vdash \langle P \rangle p \leftrightarrow \langle Exp(P) \rangle p$
\item[(Sub)] If $\vdash \varphi$, then $\vdash \varphi^\sigma$, where $\sigma$ uniformly substitutes proposition symbols by arbitrary formulas.
\item[(MP)] If $\vdash \varphi$ and $\vdash \varphi \rightarrow \psi$, then $\vdash \psi$.
\item[(Gen)] If $\vdash \varphi$, then $\vdash [P]\varphi$.
\end{description}

It is important to notice that the theorems $\vdash \langle P_1 + P_2 \rangle p \leftrightarrow \langle P_2 + P_1 \rangle p$ and $\vdash \langle P_1 | P_2 \rangle p \leftrightarrow \langle P_2 | P_1 \rangle p$, which state the commutativity of the $+$ and $|$ operators, are derivable from the axiomatic system above.

The axioms {\bf (PL)}, {\bf (K)} and {\bf (Du)} and the rules {\bf (Sub)}, {\bf (MP)} and {\bf (Gen)} are standard in the modal logic literature. The soundness of {\bf (Pr)} and {\bf (NC)} follows directly from the set equalities in theorem \ref{teo:Pf} and from definition \ref{def-sat*}, as shown in theorem \ref{teo:valid}. Finally, the soundness of {\bf (PC)} follows from theorem \ref{teo:EL} and corollary \ref{cor:simPQvalPQ}.

The above axiomatic system is also complete with respect to the class of sCCS-PDL frames and the logic has the finite model property. We omit the proofs  here, because they are analogous to the proofs presented  in section \ref{pdl-ccs}, where constants are added to the language.

\section{CCS-PDL}\label{pdl-ccs}

The logic presented in this section uses the same CCS operators as in the previous section plus constants. This is the CCS-PDL logic. Our goal in this section is to build an axiomatic system for CCS-PDL and prove its completeness.

\subsection{Language and Semantics}

In this section, we present the syntax and semantics of CCS-PDL.

\begin{defn}\label{def-lang*} 
The CCS-PDL language consists of a set $\Phi$ of countably many proposition symbols, a set ${\cal N}$ of countably many names, the silent action $\tau$, the boolean connectives $\neg$ and $\land$, the CCS operators $.$, $+$ and $\mid$, a set ${\cal C}$ of countably many constants, such that each element of ${\cal C}$ has its unique correspondent defining equation, and a modality $\langle P \rangle$ for every process $P$. The formulas are defined as follows:
\[
\varphi ::= p \mid \top \mid \neg \varphi \mid \varphi_1 \land \varphi_2 \mid \langle P \rangle \varphi, \mbox{ with }
\, P ::= \alpha \mid  \alpha . P \mid \alpha . A \mid P_1 + P_2 \mid P_1|P_2,
\]
where $p \in \Phi$, $\alpha \in {\cal N} \cup \overline{\cal N} \cup \{ \tau \}$ and $A \in {\cal C}$.
\end{defn}

The presence of constants in the language allows us to write specifications that are capable of iteration, as $P = \alpha . A$, with $A \stackrel{def}{=} \alpha . A + \tau$. However, constants have a much greater power than just expressing iterative behaviours. With constants, we are able to write self-replicating specifications, as $P = ((\tau . A) + \tau) | Q$, with $A \stackrel{def}{=} ((\tau . A) + \tau) | Q$. After the execution of $n$ $\tau$-actions, $P$ is capable of behaving as $n$ $Q$-processes in parallel, for any $n \in \mathbb{N}$.

The example above is a very simple example of self-replication and it is easy to see that things can get very complex if we start nesting self-replicating processes. 

In order to keep the logic simple, that is, keep the simple Kripke semantics, the finite model property and a simple and complete axiomatization, we restrict the use of constants in CCS-PDL in order to prevent self-replicating processes (in \cite{Dam}, Dam enforces a similar syntactic restriction, also to prevent unbounded process growth). The issue of whether it is possible to keep these desirable properties of the logic in the presence of replication remains an open problem and we defer it to a future work, as explained in section \ref{DC}.

\begin{defn}
Let $P$ be a process and $\{A_1,\ldots,A_n\}$ be the constants that occur in $P$. We define $Cons(P)$ as the smallest set of constants such that $Cons(P) \supseteq \{A_1,\ldots,A_n\}$ and, for every constant $A_i \in Cons(P)$, if $A_k$ occurs in $P_{A_i}$, then $A_k \in Cons(P)$.
\end{defn}

\begin{res}
We make the following restrictions to processes in CCS-PDL:
\begin{enumerate}
 \item $Cons(P)$ must be a finite set for every process $P$;
 \item We only allow defining equations that fit into one of the following models:
\begin{itemize}
 \item $A \stackrel{def}{=} P_A$, where $A \notin Cons(P_A)$, called \emph{non-recursive equations};
 \item $A \stackrel{def}{=} \overrightarrow{\alpha}_1 . A + \ldots + \overrightarrow{\alpha}_n . A + T_A$, where $A \notin Cons(T_A)$, called \emph{recursive equations}.
\end{itemize}
\end{enumerate}
\end{res}

The set equalities from theorem \ref{teo:Pf} remains valid, along with the equality 
\begin{equation}\label{eq:const}
\overrightarrow{{\cal R}_f}(\alpha . A) = \overrightarrow{{\cal R}_f}(\alpha) \circ \overrightarrow{{\cal R}_f}(P_A),
\end{equation}
which also follows from table \ref{tab:semccs}. 

However, due to the possibility of iterative behaviours, some set equalities may present themselves as recursive equations. In these cases, it is possible to obtain an equivalent non-recursive equality. First, the recursive equation can be rewritten, using the set equalities in theorem \ref{teo:Pf} and equation \eqref{eq:const}, as $\overrightarrow{{\cal R}_f}(P) = \overrightarrow{{\cal R}_f}(P') \circ \overrightarrow{{\cal R}_f}(P) \cup \overrightarrow{{\cal R}_f}(Q)$, where $\overrightarrow{{\cal R}_f}(Q)$ is not a function of $\overrightarrow{{\cal R}_f}(P)$. Now, as all sequences in $\overrightarrow{{\cal R}_f}(P)$, $\overrightarrow{{\cal R}_f}(P')$ and $\overrightarrow{{\cal R}_f}(Q)$ are finite and $\overrightarrow{\varepsilon} \not\in \overrightarrow{{\cal R}_f}(P')$, we may use a result known as Arden's Rule, that states that if $X$, $A$ and $B$ are sets of finite strings and the empty string is not in $A$, then the equation $X = A \circ X \cup B$ has as its unique solution $X = A^* \circ B$ \cite{Arden}. Thus, $\overrightarrow{{\cal R}_f}(P) = \overrightarrow{{\cal R}_f}^*(P') \circ \overrightarrow{{\cal R}_f}(Q)$.

\begin{defn}
We say that a process $P$ is a \emph{knot process} if $P = P_A$ for some constant $A$ with a recursive defining equation or if $P = P_1 \mid P_2$ where $P_1$ or $P_2$ is a knot process. Otherwise, we say that $P$ is a \emph{non-knot process}.
\end{defn}

\begin{defn}\label{def:loop}
We call a non-empty sequence of actions $\overrightarrow{\alpha}$ a \emph{loop} of a knot process $P$ if $P \stackrel{\overrightarrow{\alpha}}{\Rightarrow} P$. We say that $\overrightarrow{\alpha}$ is a \emph{proper loop} if $\overrightarrow{\alpha}$ is a loop and there is no $\overrightarrow{\beta} \subset \overrightarrow{\alpha}$, with $\overrightarrow{\alpha} = \overrightarrow{\beta} . \overrightarrow{\lambda}$, such that $\overrightarrow{\beta}$ and $\overrightarrow{\lambda}$ are loops of $P$. The set of loops of $P$ is denoted by $Lo(P)$ and the set of proper loops of $P$ is denoted by $PLo(P)$.
\end{defn}

\begin{thm}\label{teo:loop}
$\overrightarrow{\alpha} \in Lo(P)$ if and only if $\overrightarrow{\alpha} = \overrightarrow{\alpha_1}. \cdots . \overrightarrow{\alpha_n}$, $n \geq 1$, where $\overrightarrow{\alpha_i} \in PLo(P)$, for all $i \in \{1,\ldots,n\}$.
\end{thm}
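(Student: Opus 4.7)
The plan is to prove the two directions separately, with the nontrivial implication handled by strong induction on the length of $\overrightarrow{\alpha}$.

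For the ($\Leftarrow$) direction, I would just unfold the definitions. If each $\overrightarrow{\alpha_i} \in PLo(P) \subseteq Lo(P)$, then $P \stackrel{\overrightarrow{\alpha_i}}{\Rightarrow} P$ for every $i$. Chaining these transitions together gives $P \stackrel{\overrightarrow{\alpha_1}. \cdots .\overrightarrow{\alpha_n}}{\Rightarrow} P$, so the concatenation lies in $Lo(P)$ by Definition \ref{def:loop}.

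For the ($\Rightarrow$) direction, I would proceed by strong induction on $|\overrightarrow{\alpha}|$. Note that any loop is non-empty (Definition \ref{def:loop} explicitly requires this), so $|\overrightarrow{\alpha}| \geq 1$. If $\overrightarrow{\alpha} \in PLo(P)$, take $n = 1$ and we are done; this handles in particular the base case $|\overrightarrow{\alpha}| = 1$, since a single action cannot be split as $\overrightarrow{\beta}.\overrightarrow{\lambda}$ with both pieces non-empty. For the inductive step, assume $\overrightarrow{\alpha} \in Lo(P) \setminus PLo(P)$. By Definition \ref{def:loop}, there exist $\overrightarrow{\beta} \subset \overrightarrow{\alpha}$ and $\overrightarrow{\lambda}$ with $\overrightarrow{\alpha} = \overrightarrow{\beta}.\overrightarrow{\lambda}$ such that both $\overrightarrow{\beta}$ and $\overrightarrow{\lambda}$ are loops of $P$. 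Since $\overrightarrow{\beta}$ is a loop it is non-empty, and since $\overrightarrow{\beta} \subset \overrightarrow{\alpha}$ the suffix $\overrightarrow{\lambda}$ is non-empty as well (by the definition of prefix given earlier). Hence $|\overrightarrow{\beta}|, |\overrightarrow{\lambda}| < |\overrightarrow{\alpha}|$, and the induction hypothesis applies to each: $\overrightarrow{\beta} = \overrightarrow{\beta_1}.\cdots.\overrightarrow{\beta_k}$ and $\overrightarrow{\lambda} = \overrightarrow{\lambda_1}.\cdots.\overrightarrow{\lambda_m}$ with all factors in $PLo(P)$. Concatenating yields the desired decomposition of $\overrightarrow{\alpha}$.

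The only real subtlety, which I would want to flag carefully, is ensuring that the split $\overrightarrow{\alpha} = \overrightarrow{\beta}.\overrightarrow{\lambda}$ furnished by the negation of properness actually produces two strictly shorter sequences. This hinges on two definitional facts used in tandem: the convention that loops are non-empty (forcing $|\overrightarrow{\beta}| \geq 1$), and the convention in the prefix definition that $\overrightarrow{\lambda}$ is non-empty (forcing $|\overrightarrow{\lambda}| \geq 1$). Once these are in place the induction is routine.
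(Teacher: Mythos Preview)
Your argument is correct and is precisely the routine unfolding that the paper has in mind; the paper's own proof consists of the single line ``straightforward from definition \ref{def:loop}'', and your induction on $|\overrightarrow{\alpha}|$ together with the observation that both pieces of the split are strictly shorter is exactly how one makes that straightforwardness explicit.
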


\begin{proof}
The proof is straightforward from definition \ref{def:loop}.
\end{proof}

\begin{defn}\label{def:break}
We call a sequence of actions $\overrightarrow{\alpha}$ a \emph{breaker} of a knot process $P$ if there is no $\overrightarrow{\beta}$ such that $\overrightarrow{\alpha} \subset \overrightarrow{\beta}$ and $\overrightarrow{\beta}$ is a loop. We say that $\overrightarrow{\alpha}$ is a \emph{proper breaker} if $\overrightarrow{\alpha}$ is a breaker and there is no $\overrightarrow{\beta} \subset \overrightarrow{\alpha}$, with $\overrightarrow{\alpha} = \overrightarrow{\beta} . \overrightarrow{\lambda}$, such that $\overrightarrow{\beta}$ is a loop and $\overrightarrow{\lambda}$ is a breaker. Finally, we say that $\overrightarrow{\alpha}$ is a \emph{minimal proper breaker} if $\overrightarrow{\alpha}$ is a proper breaker and there is no $\overrightarrow{\beta} \subset \overrightarrow{\alpha}$ such that $\overrightarrow{\beta}$ is a proper breaker. The set of breakers of $P$ is denoted by $Br(P)$, the set of proper breakers of $P$ is denoted by $PBr(P)$ and the set of minimal proper breakers of $P$ is denoted by $MPBr(P)$.
\end{defn}

\begin{thm}\label{teo:break}
$\overrightarrow{\alpha} \in PBr(P)$ if and only if $\overrightarrow{\alpha} = \overrightarrow{\beta} . \overrightarrow{\lambda}$, where $\overrightarrow{\beta} \in MPBr(P)$.
\end{thm}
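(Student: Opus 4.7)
The plan is to prove each direction directly from Definition \ref{def:break}. The easier direction, $(\Rightarrow)$, uses an extremal argument: given $\overrightarrow{\alpha} \in PBr(P)$, consider the collection of all prefixes of $\overrightarrow{\alpha}$ (including $\overrightarrow{\alpha}$ itself, taken with $\overrightarrow{\lambda} = \overrightarrow{\varepsilon}$) that lie in $PBr(P)$. This set is non-empty, so I pick $\overrightarrow{\beta}$ to be any shortest element. By minimality of length, no $\overrightarrow{\beta'} \subset \overrightarrow{\beta}$ is a proper breaker, so by definition $\overrightarrow{\beta} \in MPBr(P)$, and $\overrightarrow{\alpha} = \overrightarrow{\beta} . \overrightarrow{\lambda}$ for the residual $\overrightarrow{\lambda}$.

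For $(\Leftarrow)$, assuming $\overrightarrow{\beta} \in MPBr(P)$ and $\overrightarrow{\alpha} = \overrightarrow{\beta} . \overrightarrow{\lambda}$, I verify the two clauses defining a proper breaker for $\overrightarrow{\alpha}$. First, $\overrightarrow{\alpha}$ is a breaker: every proper extension of $\overrightarrow{\alpha}$ is also a proper extension of $\overrightarrow{\beta}$, and since $\overrightarrow{\beta} \in Br(P)$, no such extension can lie in $Lo(P)$. Second, I must exclude any loop-plus-breaker decomposition of $\overrightarrow{\alpha}$. Suppose for contradiction that $\overrightarrow{\alpha} = \overrightarrow{\gamma} . \overrightarrow{\delta}$ with $\overrightarrow{\gamma} \subset \overrightarrow{\alpha}$, $\overrightarrow{\gamma} \in Lo(P)$, and $\overrightarrow{\delta} \in Br(P)$.

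A small preliminary observation, used twice, is that no loop is a breaker: if $\overrightarrow{\sigma} \in Lo(P)$, then $\overrightarrow{\sigma} . \overrightarrow{\sigma} \in Lo(P)$ by theorem \ref{teo:loop} and properly extends $\overrightarrow{\sigma}$. Hence $\overrightarrow{\gamma} \neq \overrightarrow{\beta}$, and $\overrightarrow{\gamma}$ cannot properly extend $\overrightarrow{\beta}$ (that would contradict $\overrightarrow{\beta} \in Br(P)$). Since $\overrightarrow{\gamma}$ and $\overrightarrow{\beta}$ are both prefixes of $\overrightarrow{\alpha}$ and therefore comparable, $\overrightarrow{\gamma}$ must be a strict prefix of $\overrightarrow{\beta}$: write $\overrightarrow{\beta} = \overrightarrow{\gamma} . \overrightarrow{\delta'}$ with $\overrightarrow{\delta'}$ non-empty and $\overrightarrow{\delta} = \overrightarrow{\delta'} . \overrightarrow{\lambda}$. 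The key step is then to show $\overrightarrow{\delta'} \in Br(P)$: any hypothetical proper extension $\overrightarrow{\delta'} . \overrightarrow{\mu}$ lying in $Lo(P)$ would, on prepending the loop $\overrightarrow{\gamma}$ and invoking theorem \ref{teo:loop}, yield a loop $\overrightarrow{\gamma} . \overrightarrow{\delta'} . \overrightarrow{\mu} = \overrightarrow{\beta} . \overrightarrow{\mu}$ that properly extends the breaker $\overrightarrow{\beta}$, a contradiction. Hence $\overrightarrow{\beta} = \overrightarrow{\gamma} . \overrightarrow{\delta'}$ is a loop-plus-breaker decomposition of $\overrightarrow{\beta}$, contradicting $\overrightarrow{\beta} \in PBr(P)$.

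The main technical point is precisely this lift of the breaker property from $\overrightarrow{\delta'}$ up to $\overrightarrow{\beta}$ via the loop $\overrightarrow{\gamma}$; everything else is unpacking definitions. It is worth remarking that the minimality clause in $MPBr$ is not actually needed for $(\Leftarrow)$---membership in $PBr$ alone suffices---so minimality does its real work in $(\Rightarrow)$, where it guarantees the existence of a witness $\overrightarrow{\beta}$.
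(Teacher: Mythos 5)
Your proof is correct and takes the route the paper intends: the paper dismisses this result as ``straightforward from Definition \ref{def:break}'', and your argument is exactly the direct unpacking of that definition --- a shortest-prefix extraction for the ($\Rightarrow$) direction and, for ($\Leftarrow$), the lift of the breaker property from $\overrightarrow{\delta'}$ to $\overrightarrow{\beta}$ by prepending the loop $\overrightarrow{\gamma}$, which is indeed the only non-obvious step. Your remark that properness of $\overrightarrow{\beta}$ (rather than minimality) already suffices for ($\Leftarrow$) is accurate as well.
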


\begin{proof}
The proof is straightforward from definition \ref{def:break}.
\end{proof}

Using the concepts of loops and breakers, we can split a knot process $P$ into two parts: the \emph{looping part}, denoted by $L_P$, and the \emph{tail part}, denoted by $T_P$.
\[
L_P = \sum \{ \overrightarrow{\alpha} : \overrightarrow{\alpha} \in PLo(P) \}.
\]
and
\[
T_P = \sum \{ \overrightarrow{\alpha}.P' : \overrightarrow{\alpha} \in MPBr(P) \,\, \textrm{and} \,\, P \stackrel{\overrightarrow{\alpha}}{\Rightarrow} P' \}
\]

\begin{thm}\label{teo:decknot}
If $P$ is a knot process, then $\overrightarrow{{\cal R}_f}(P) = \overrightarrow{{\cal R}_f}^*(L_P) \circ \overrightarrow{{\cal R}_f}(T_P)$.
\end{thm}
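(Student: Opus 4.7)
The plan is to prove the claimed equality by mutual inclusion.

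For the easier inclusion $\overrightarrow{{\cal R}_f}^*(L_P) \circ \overrightarrow{{\cal R}_f}(T_P) \subseteq \overrightarrow{{\cal R}_f}(P)$, I would take an arbitrary $\overrightarrow{\alpha} = \overrightarrow{\ell} \cdot \overrightarrow{\beta}$ with $\overrightarrow{\ell} \in \overrightarrow{{\cal R}_f}^*(L_P)$ and $\overrightarrow{\beta} \in \overrightarrow{{\cal R}_f}(T_P)$. Applying Theorem \ref{teo:Pf} to the summation defining $L_P$, the sequence $\overrightarrow{\ell}$ is either $\overrightarrow{\varepsilon}$ or a concatenation of proper loops of $P$; Theorem \ref{teo:loop} then yields $P \stackrel{\overrightarrow{\ell}}{\Rightarrow} P$. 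Unfolding the definition of $T_P$, the sequence $\overrightarrow{\beta}$ factors as $\overrightarrow{\gamma} \cdot \overrightarrow{\delta}$ where $\overrightarrow{\gamma} \in MPBr(P)$, $P \stackrel{\overrightarrow{\gamma}}{\Rightarrow} P'$, and $\overrightarrow{\delta} \in \overrightarrow{{\cal R}_f}(P')$. Chaining the three derivations delivers $P \stackrel{\overrightarrow{\alpha}}{\Rightarrow} \surd$.

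For the reverse inclusion, given $\overrightarrow{\alpha} \in \overrightarrow{{\cal R}_f}(P)$ with an explicit witnessing derivation $P = P_0 \stackrel{\alpha_1}{\rightarrow} P_1 \cdots \stackrel{\alpha_n}{\rightarrow} \surd$, my plan is to locate the largest index $i_k$ such that $P_{i_k} = P$, and set $\overrightarrow{\ell} = \alpha_1 \cdots \alpha_{i_k}$ and $\overrightarrow{\beta} = \alpha_{i_k+1} \cdots \alpha_n$. By the choice of indices, $\overrightarrow{\ell}$ is either empty or a concatenation of (possibly non-proper) loops of $P$; iterating the splitting allowed by the definition of proper loop, together with Theorem \ref{teo:loop}, places $\overrightarrow{\ell}$ in $\overrightarrow{{\cal R}_f}^*(L_P)$. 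For the suffix, I would argue that $\overrightarrow{\beta}$ is a proper breaker of $P$: by maximality of $i_k$, no non-empty proper prefix of $\overrightarrow{\beta}$ completes with $\overrightarrow{\ell}$ into a loop of $P$, which together with the fact that the derivation of $\overrightarrow{\beta}$ from $P$ ends in $\surd$ without revisiting $P$ forbids the loop-plus-breaker splitting in Definition \ref{def:break}. Invoking Theorem \ref{teo:break}, write $\overrightarrow{\beta} = \overrightarrow{\gamma} \cdot \overrightarrow{\delta}$ with $\overrightarrow{\gamma} \in MPBr(P)$; the intermediate process $P' = P_{i_k + |\overrightarrow{\gamma}|}$ from our derivation then satisfies $P \stackrel{\overrightarrow{\gamma}}{\Rightarrow} P'$ and $P' \stackrel{\overrightarrow{\delta}}{\Rightarrow} \surd$, so $\overrightarrow{\beta} \in \overrightarrow{{\cal R}_f}(T_P)$.

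The step I expect to be the main obstacle is establishing that $\overrightarrow{\beta}$ qualifies as a \emph{breaker} in the strong sense of Definition \ref{def:break}: this is a universal claim about all extensions of $\overrightarrow{\beta}$ and about all derivations, whereas my construction provides information about only the chosen derivation of $\overrightarrow{\alpha}$. Resolving this will hinge on the syntactic restrictions placed on recursive defining equations, which guarantee that proper loops of $P$ arise only from the explicit $\overrightarrow{\alpha}_i . A$ summands of the defining equation of the governing constant, together with the maximality of the index $i_k$, which forbids any run that re-enters $P$ after the chosen loop prefix. With these two ingredients, any hypothetical loop extending $\overrightarrow{\beta}$ can be shown to contradict either the structural restriction on the defining equation or the choice of $i_k$.
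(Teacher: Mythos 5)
Your overall architecture differs from the paper's: you prove the starred identity directly, splitting each run at the last return to $P$, whereas the paper first establishes the fixed-point equation $\overrightarrow{{\cal R}_f}(P) = \overrightarrow{{\cal R}_f}(L_P) \circ \overrightarrow{{\cal R}_f}(P) \cup \overrightarrow{{\cal R}_f}(T_P)$ --- peeling off only a \emph{single} proper loop or a \emph{single} minimal proper breaker in each case --- and then invokes Arden's Rule, already set up earlier, to obtain the iterated form. Your easy inclusion is fine and matches the paper's in substance; the advantage of the paper's route on the hard inclusion is that it needs no ``last visit to $P$'' bookkeeping, since the iteration is absorbed into the algebra.

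The genuine gap is exactly the step you flag: the claim that the suffix $\overrightarrow{\beta}$ read off after the last occurrence of $P$ in your chosen derivation is a proper breaker. Being a breaker (Definition~\ref{def:break}) is a universal statement over \emph{all} extensions of $\overrightarrow{\beta}$, and --- since $\Rightarrow$ is existential over derivations --- over all derivations as well; the maximality of $i_k$ constrains only the one derivation you fixed and says nothing about a different derivation in which some extension of $\overrightarrow{\beta}$ returns to $P$. Your fallback, the syntactic restriction on recursive equations, does not close this either: the restriction permits the tail $T_A$ to share actions with the looping summands $\overrightarrow{\alpha}_i . A$ (take $A \stackrel{def}{=} a.A + a$; the run $a$ terminates via the second summand yet is a loop via the first, hence not a breaker at all), and for knot processes of the form $P_1 \mid P_2$ loops arise from interleavings and communications rather than from the summands of a single defining equation, so ``loops come only from the explicit $\overrightarrow{\alpha}_i . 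A$ summands'' is not available. To be fair, the paper's own case analysis makes a closely related jump (inferring that every prefix of $\overrightarrow{\alpha}$ is a proper breaker from the fact that no prefix is a loop), so the difficulty is intrinsic to the trace-based definitions of loop and breaker; but your proposal stakes the entire reverse inclusion on this step, and the two ingredients you name do not establish it.
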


\begin{proof}
We show that $\overrightarrow{{\cal R}_f}(P) = \overrightarrow{{\cal R}_f}(L_P) \circ \overrightarrow{{\cal R}_f}(P) \cup \overrightarrow{{\cal R}_f}(T_P)$. The result then follows from Arden's Rule \cite{Arden}, since $\overrightarrow{\varepsilon} \not\in \overrightarrow{{\cal R}_f}(L_P)$.

If $\overrightarrow{\alpha} \in \overrightarrow{{\cal R}_f}(L_P) \circ \overrightarrow{{\cal R}_f}(P)$, then $\overrightarrow{\alpha} = \overrightarrow{\beta} . \overrightarrow{\lambda}$, where $\overrightarrow{\beta} \in \overrightarrow{{\cal R}_f}(L_P)$ and $\overrightarrow{\lambda} \in \overrightarrow{{\cal R}_f}(P)$. Then, $P \stackrel{\overrightarrow{\beta}}{\Rightarrow} P$ and $P \stackrel{\overrightarrow{\lambda}}{\Rightarrow} \surd$, which implies that $P \stackrel{\overrightarrow{\alpha}}{\Rightarrow} \surd$. Thus, $\overrightarrow{\alpha} \in \overrightarrow{{\cal R}_f}(P)$. If $\overrightarrow{\alpha} \in \overrightarrow{{\cal R}_f}(T_P)$, then $\overrightarrow{\alpha} = \overrightarrow{\beta} . \overrightarrow{\lambda}$, where $P \stackrel{\overrightarrow{\beta}}{\Rightarrow} P'$ and $P' \stackrel{\overrightarrow{\lambda}}{\Rightarrow} \surd$, which implies that $P \stackrel{\overrightarrow{\alpha}}{\Rightarrow} \surd$. Thus, $\overrightarrow{\alpha} \in \overrightarrow{{\cal R}_f}(P)$. This proves that $\overrightarrow{{\cal R}_f}(L_P) \circ \overrightarrow{{\cal R}_f}(P) \cup \overrightarrow{{\cal R}_f}(T_P) \subseteq \overrightarrow{{\cal R}_f}(P)$.

If $\overrightarrow{\alpha} \in \overrightarrow{{\cal R}_f}(P)$, then we have two cases:
\begin{enumerate}
 \item There is $\overrightarrow{\beta} \subset \overrightarrow{\alpha}$, with $\overrightarrow{\alpha} = \overrightarrow{\beta} . \overrightarrow{\lambda}$, such that $\overrightarrow{\beta}$ is a loop. Then, by theorem \ref{teo:loop}, $\overrightarrow{\beta} = \overrightarrow{\beta_1} . \overrightarrow{\beta_2}$, where $\overrightarrow{\beta_1} \in PLo(P)$. This means that $\overrightarrow{\beta_1} \in \overrightarrow{{\cal R}_f}(L_P)$.
If we make $\overrightarrow{\gamma} = \overrightarrow{\beta_2} . \overrightarrow{\lambda}$, then $\overrightarrow{\alpha} = \overrightarrow{\beta_1} . \overrightarrow{\gamma}$ and $P \stackrel{\overrightarrow{\gamma}}{\Rightarrow} \surd$. Thus, $\overrightarrow{\gamma} \in \overrightarrow{{\cal R}_f}(P)$ and $\overrightarrow{\alpha} \in \overrightarrow{{\cal R}_f}(L_P) \circ \overrightarrow{{\cal R}_f}(P)$.
 \item There is no $\overrightarrow{\beta} \subset \overrightarrow{\alpha}$, with $\overrightarrow{\alpha} = \overrightarrow{\beta} . \overrightarrow{\lambda}$, such that $\overrightarrow{\beta}$ is a loop. This implies that, for all $\overrightarrow{\beta} \subset \overrightarrow{\alpha}$, $\overrightarrow{\beta} \in PBr(P)$. Then, by theorem
\ref{teo:break}, $\overrightarrow{\beta} = \overrightarrow{\beta_1} . \overrightarrow{\beta_2}$, where $\overrightarrow{\beta_1} \in MPBr(P)$. If we make $\overrightarrow{\gamma} = \overrightarrow{\beta_2} . \overrightarrow{\lambda}$, then $\overrightarrow{\alpha} = \overrightarrow{\beta_1} . \overrightarrow{\gamma}$. This means that, if $P \stackrel{\overrightarrow{\beta_1}}{\Rightarrow} P'$, then $P' \stackrel{\overrightarrow{\gamma}}{\Rightarrow} \surd$. Thus, $\overrightarrow{\alpha} \in \overrightarrow{{\cal R}_f}(T_P)$.
\end{enumerate}
This proves that $\overrightarrow{{\cal R}_f}(P) \subseteq \overrightarrow{{\cal R}_f}(L_P) \circ \overrightarrow{{\cal R}_f}(P) \cup \overrightarrow{{\cal R}_f}(T_P)$.
\end{proof}

We also define the process $L_P'$, that is capable of iterating $L_P$.
\[
L_P' = \sum \{ \overrightarrow{\alpha} . Z_P : \overrightarrow{\alpha} \in PLo(P) \} + L_P,
\]
where $Z_P$ is a new constant with defining equation $Z_P \stackrel{def}{=} L_P'$. 

The notions of frame, model and satisfaction are defined analogously to definitions \ref{def-frame1}, \ref{def-model*} and \ref{def-sat*}. It is not difficult to see that theorem \ref{teo:valPQ} and corollary \ref{cor:simPQvalPQ} remain valid in CCS-PDL.

\subsection{Axiomatic System}\label{pt*}

The axiomatic system is similar to the one presented in section \ref{pt}. We consider the following set of axioms and rules, where $p$, $q$ and $r$ are proposition symbols and $\varphi$ and $\psi$ are formulas.

\begin{itemize}
\item The axioms {\bf (PL)}, {\bf (K)} and {\bf (Du)} and the rules {\bf (Sub)}, {\bf (MP)} and {\bf (Gen)}.
\item Axioms for knot processes:
\begin{description}
\item[(Rec)] $\vdash \langle P \rangle p \leftrightarrow \langle T_P \rangle p \lor \langle L_P \rangle \langle P \rangle p$
\item[(FP)] $\vdash (r \rightarrow ([T_P] \neg p \land [L_P] r)) \land [L_P'](r \rightarrow ([T_P] \neg p \land [L_P] r)) \rightarrow (r \rightarrow [P] \neg p)$
\end{description}
\item Axioms for non-knot processes:
\begin{description}
\item[(sCCS)] The axioms {\bf (Pr)}, {\bf (NC)} and the rule {\bf (PC)}.
\item[(Cons)] $\vdash \langle \alpha . A \rangle p \leftrightarrow \langle \alpha \rangle \langle P_A \rangle p$
\end{description}
\end{itemize}

The proof of soundness is analogous to the proof of soundness for sCCS-PDL. The soundness of {\bf (Rec)} and {\bf (FP)} follows from theorems \ref{teo:decknot} and \ref{teo:valPQ}. The axiom {\bf (FP)} may seem strange at first, but it is just an adaptation of the so-called induction axiom to our particular situation. The soundness of {\bf (Cons)} follows from equation \eqref{eq:const} and theorem \ref{teo:valPQ}.

\begin{thm}[Completeness]
Every consistent formula is satisfiable in a finite CCS-PDL model.
\end{thm}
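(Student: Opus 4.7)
The plan is to follow a Fischer-Ladner style completeness argument, adapted to handle the CCS operators and the recursive constants permitted by the restrictions in section \ref{pdl-ccs}. Starting from a consistent formula $\varphi_0$, I would build a finite Fischer-Ladner closure $FL(\varphi_0)$, take atoms (maximal consistent subsets) of this closure as states, and define a finite model in which each atom satisfies exactly the formulas it contains.

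First, I would define $FL(\varphi_0)$ as the smallest set containing $\varphi_0$, closed under subformulas, closed under single negation (identifying $\neg\neg\psi$ with $\psi$), and closed under the following decompositions applied to every modal formula $\langle P \rangle \psi$ that enters the set: for $P = \alpha.P'$ add $\langle \alpha \rangle \langle P' \rangle \psi$ and $\langle P' \rangle \psi$; for $P = P_1 + P_2$ add $\langle P_1 \rangle \psi$ and $\langle P_2 \rangle \psi$; for $P = \alpha.A$ add $\langle \alpha \rangle \langle P_A \rangle \psi$; for an unrestricted $P = P_1 \mid P_2$ apply $Exp(P)$ and add $\langle Exp(P) \rangle \psi$ together with each summand-modality it produces; and for a knot process $P$ add $\langle T_P \rangle \psi$, $\langle L_P \rangle \langle P \rangle \psi$, $\langle L_P' \rangle \chi$ (for $\chi$ the relevant side condition from \textbf{(FP)}), and their immediate subformula decompositions. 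The delicate task is showing $FL(\varphi_0)$ is finite. This uses that $Cons(P)$ is finite for every $P$ occurring in $\varphi_0$, together with the syntactic restriction prohibiting self-replication: every constant unfolds to either a non-recursive body or a recursive body of the form $\sum \overrightarrow{\alpha_i} . A + T_A$, so only finitely many distinct processes appear in any derived $\langle Q \rangle \psi$ inside the closure, and EL applied to a parallel composition of non-replicating processes yields a sum drawn from a finite pool of subterms.

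Next I would form the set $At$ of atoms of $FL(\varphi_0)$ (maximal consistent subsets of $FL(\varphi_0) \cup \{\neg\psi : \psi \in FL(\varphi_0)\}$), let $W = At$, put $V(p) = \{A \in At : p \in A\}$, and define $R_\alpha$ for each basic action $\alpha$ in the obvious way: $A \, R_\alpha \, B$ iff the set $\{\langle \alpha \rangle \chi : \chi \in B \cap FL(\varphi_0)\} \cup A$ is consistent. The relations $R_P$ for compound $P$ are then built by the semantic clauses of Definition \ref{def-sat*}. Standard modal arguments (using axioms \textbf{(PL)}, \textbf{(K)}, \textbf{(Du)} and the rules) give the Lindenbaum-style existence lemma: if $\langle \alpha \rangle \chi \in A$, there is $B$ with $A \, R_\alpha \, B$ and $\chi \in B$.

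The Truth Lemma $\mathcal{M}, A \Vdash \psi \Leftrightarrow \psi \in A$ is proved by induction on $\psi$, and the modal cases for $\langle P \rangle \chi$ are handled by induction on the Fischer-Ladner order. The cases for $\alpha.P'$, $\alpha.A$, $P_1 + P_2$ and parallel composition reduce directly to the inductive hypothesis via \textbf{(Pr)}, \textbf{(Cons)}, \textbf{(NC)} and \textbf{(PC)} together with Theorem \ref{teo:valPQ}. The main obstacle, as in every completeness proof for a PDL-flavoured logic with iteration, is the knot-process case: showing that if $\langle P \rangle \chi \notin A$ then no path in $\mathcal{M}$ witnesses $\langle P \rangle \chi$ at $A$. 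For this I would set $r := \bigvee \{ \widehat{B} : \langle P \rangle \chi \notin B \}$, where $\widehat{B}$ is the characteristic formula of atom $B$, and verify using \textbf{(Rec)} and the inductive hypothesis applied to $\langle T_P \rangle \chi$ and $\langle L_P \rangle \langle P \rangle \chi$ that $r \to ([T_P]\neg\chi \land [L_P]r)$ is provable, and similarly its $[L_P']$-instance. The fixed-point axiom \textbf{(FP)} then yields $r \to [P]\neg\chi$, and since $A \models r$, we conclude $\mathcal{M}, A \Vdash [P]\neg\chi$, as required. Finite model property and satisfiability of $\varphi_0$ follow immediately, since $\varphi_0$ belongs to some atom.
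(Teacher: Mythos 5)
Your overall skeleton (finite Fischer--Ladner closure, atoms as states, canonical basic relations, Truth Lemma by induction on the closure) is the same as the paper's, but there is a genuine gap in the knot-process case, and it sits exactly where the real work lies. You only treat the direction ``$\langle P \rangle \chi \notin {\cal A}$ implies no matching path'', i.e.\ the contrapositive of ``path implies membership''. That direction does not need \textbf{(FP)} at all: since every finite run of a knot process decomposes into proper loops followed by a breaker (Theorem \ref{teo:decknot}), it is proved by a sub-induction on the number of loop steps using only \textbf{(Rec)}, as in the $(\Leftarrow)$ half of the paper's Existence Lemma (Lemma \ref{lem:existence2}). Moreover, the way you close that argument is a non sequitur: from $\vdash r \to [P]\neg\chi$ and the fact that $\bigwedge {\cal A}$ is a disjunct of $r$ you only recover that $\bigwedge {\cal A} \land \langle P \rangle \chi$ is inconsistent, i.e.\ $\langle P \rangle \chi \notin {\cal A}$, which you assumed; passing from this provability statement to the semantic claim ${\cal M},{\cal A} \Vdash [P]\neg\chi$ presupposes precisely the link between provability and the semantically defined relation $R_P$ that the Truth Lemma is supposed to establish.

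What is missing entirely is the converse direction for knot processes: if $\langle P \rangle \chi \in {\cal A}$, then there is an actual finite path in the finite model matching a run of $P$ and ending in a $\chi$-atom. Unfolding with \textbf{(Rec)} alone cannot give this, because the disjunct $\langle L_P \rangle \langle P \rangle \chi$ regenerates $\langle P \rangle \chi$ at the next atom and nothing rules out an endless regress that never reaches a breaker; this is the classical non-standard-loop problem of PDL-style completeness, and it is where \textbf{(FP)} is indispensable. The paper handles it with Lemma \ref{lem:loop}: for a fixed target atom ${\cal B}$ one takes $V$ to consist of the atoms related to ${\cal B}$ by $S_P$ but not by $S_{L_P}^{*} \circ S_{T_P}$ together with those not related by $S_P$, sets $r = \bigvee \{ \bigwedge {\cal C} : {\cal C} \in V \}$, and uses \textbf{(Rec)}, \textbf{(Gen)} and \textbf{(FP)} to derive $\vdash r \to [P]\neg\bigwedge {\cal B}$ and a contradiction, yielding $S_P \subseteq S_{L_P}^{*} \circ S_{T_P}$; combined with $S_Q \subseteq R_Q$ for the structurally smaller $L_P$ and $T_P$ (Lemma \ref{lemma:S-implis-R}) this produces the required path. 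Note that this $r$ is relative to the target atom and to the composed relation, not simply the disjunction of atoms omitting $\langle P \rangle \chi$. Without this lemma (or an equivalent), the ``membership implies satisfaction'' half of your Truth Lemma is unproved for knot processes, so the completeness argument does not go through; the closure construction, finiteness sketch, atom machinery and non-knot cases are fine and essentially as in the paper.
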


\begin{proof}
The proof is presented in the appendix \ref{sec:compproof}.
\end{proof}

\section{XCCS-PDL}\label{sec:xccs}

As it was shown in section \ref{sec:sccs}, the use of the null process {\bf 0} of CCS in our first two logics would bring a serious inconvenience to their semantics: their compositionality would be compromised. This problem also affect our ability to include the restriction operator in these logics. Besides that, in CCS-PDL, we have to define two distinct sets of axioms, depending on whether the process under consideration is a knot process or not.

In this section, our goal is to solve these two problems that occur in the previous logics. In order to accomplish this, first we extend the language of CCS with new operators and a new type of action and slightly redefine its semantics. We call this new process algebra \emph{extended CCS} or XCCS. Then, we define a dynamic logic in which the programs are XCCS processes (XCCS-PDL). Because of the refined definition of the null process {\bf 0} in XCCS, we can include it in this logic, as well as the restriction operator. Besides that, one of the new operators of XCCS, the iteration operator, allows us to deal with all sorts of processes with just one set of axioms and to also drop the constants and all its elaborated theory from the language.

\subsection{XCCS}

In CCS, we have the set of actions ${\cal A} = {\cal N} \cup {\cal \overline{N}} \cup \{ \tau \}$. In XCCS, we denote this set of actions as ${\cal A}_R$, the set of \emph{running actions}. In XCCS, we have an extra action, besides the ones in ${\cal A}_R$, called the \emph{ending action} and denoted by $END$. A process in XCCS can only successfully finish after performing the action $END$ and it always successfully finishes after performing such action. If a process cannot perform any running action and cannot successfully finish, it is called a \emph{deadlocked} process.

\begin{defn}\label{def:xccslang}
In XCCS, process specifications can be built using the following operations:
\[
P ::=  {\bf 0} \mid END \mid \alpha.P \mid P_1 ; P_2 \mid P_1 + P_2 \mid P_1 | P_2 \mid P^* \mid P \backslash L,
\]
with
\[
\alpha ::= a \mid \overline{a} \mid \tau,
\]
where $a \in {\cal N}$ and $L \subseteq {\cal N}$.
\end{defn}

{\bf 0} is the \emph{null process}. It is a deadlocked process, since it is incapable of performing any running action and of successfully finishing. $END$ is process that is incapable of performing any running action, but it is capable of successfully finishing. The \emph{sequential composition} operator (;) denotes that the process will first behave as $P_1$ and if and when $P_1$ successfully terminates, it will proceed behaving as $P_2$. The \emph{iteration} operator (*) denotes that the process $P$ is capable of being iterated zero or more times. In table \ref{tab:semxccs}, we present the semantics for the XCCS operators.

\begin{table}
\centering
\caption{Transition Relations of XCCS}
\begin{tabular}{|c|c|c|c|c|}
\hline
$\alpha.P \stackrel{\alpha}{\rightarrow} P$ &
$END \stackrel{END}{\rightarrow} \surd$ & 
$P^* \stackrel{END}{\rightarrow} \surd$ &
$\frac{P \stackrel{\alpha}{\rightarrow} P'} {P ; Q \stackrel{\alpha}{\rightarrow} P' ; Q}$ &
$\frac{P \stackrel{END}{\rightarrow} \surd, Q \stackrel{\alpha}{\rightarrow} Q'} {P ; Q \stackrel{\alpha}{\rightarrow} Q'}$ \\
\hline
$\frac{P \stackrel{\alpha}{\rightarrow} P'} {P + Q \stackrel{\alpha}{\rightarrow} P'}$ &  
$\frac{Q \stackrel{\beta}{\rightarrow} Q'} {P + Q \stackrel{\beta}{\rightarrow} Q'}$ &
$\frac{P \stackrel{\alpha}{\rightarrow} P'}{P | Q \stackrel{\alpha}{\rightarrow} P' | Q}$ &
$\frac{Q \stackrel{\beta}{\rightarrow} Q'}{P | Q \stackrel{\beta}{\rightarrow} P | Q'}$ &
$\frac{P \stackrel{\lambda}{\rightarrow} P', Q \stackrel{\overline{\lambda}}{\rightarrow} Q'}{P | Q \stackrel{\tau}{\rightarrow} P' | Q'}$ \\
\hline
$\frac{P \stackrel{\alpha}{\rightarrow} P'} {P^* \stackrel{\alpha}{\rightarrow} P';P^*}$ &
$\frac{P \stackrel{\alpha}{\rightarrow} P',\alpha \not\in L \cup \overline{L}}{P \backslash L \stackrel{\alpha}{\rightarrow} P' \backslash L}$ &
$\frac{P \stackrel{END}{\rightarrow} \surd, Q \stackrel{END}{\rightarrow} \surd}{P;Q \stackrel{END}{\rightarrow} \surd}$ & 
$\frac{P \stackrel{END}{\rightarrow} \surd} {P + Q \stackrel{END}{\rightarrow} \surd}$ &
$\frac{Q \stackrel{END}{\rightarrow} \surd} {P + Q \stackrel{END}{\rightarrow} \surd}$ \\
\hline &
$\frac{P \stackrel{END}{\rightarrow} \surd, Q \stackrel{END}{\rightarrow} \surd}{P | Q \stackrel{END}{\rightarrow} \surd}$ & &
$\frac{P \stackrel{END}{\rightarrow} \surd}{P \backslash L \stackrel{END}{\rightarrow} \surd}$ & \\
\hline
\end{tabular}
\label{tab:semxccs}
\end{table}

From table \ref{tab:semxccs}, it is not difficult to see that now the null process ${\bf 0}$ denotes only a deadlocked process. As explained in section \ref{sec:sccs}, the situation in standard CCS is different, since there, in a specification of the form $\alpha . {\bf 0}$, ${\bf 0}$ is denoting a process that has successfully terminated. This is no longer the case. In XCCS, a specification of the form $\alpha . {\bf 0}$ denotes that a process performs the action $\alpha$ and then deadlocks, while a specification of the form $\alpha . END$ denotes that a process performs the action $\alpha$ and then successfully terminates. This slight extension of the language allows for the null process and for the restriction operator to be used in our third logic without compromising the compositionality of its semantics.

In \cite{M89} and \cite{M99}, Milner uses a clever syntactic construction to define a form of sequential composition. It is slightly different to the form presented in table \ref{tab:semxccs} and it is not a primitive operator. He also uses the notation $;$ for it, but we denote his construction with a $:$ instead, so we can easily differentiate between his and our constructions. Milner's construction depends on a number of things. First, we must consider a new name $z \not\in {\cal N}$. Second, every process must perform the action $\overline{z}$ as its last action before termination and may not perform $z$ or $\overline{z}$ at any other point of execution. Third, we must perform syntactic substitutions of names in processes, where $P[b/a]$ denotes the substitution of every occurrence of $a$ ($\overline{a}$) in $P$ by $b$ ($\overline{b}$). Then, sequential composition is defined in the following way:
\[
P:Q = (P[a/z] \mid a . Q) \backslash \{a\},
\]
where $a$ must be a name that does not occur in neither $P$ nor $Q$.

The main difference between the two forms of sequential composition is that, as tables \ref{tab:semccs} and \ref{tab:semxccs} easily show, $\overrightarrow{{\cal R}_f}(P;Q) = \overrightarrow{{\cal R}_f}(P) \circ \overrightarrow{{\cal R}_f}(Q)$, while $\overrightarrow{{\cal R}_f}(P:Q) = \overrightarrow{{\cal R}_f}(P) \circ \{ \tau \} \circ \overrightarrow{{\cal R}_f}(Q)$. The extra $\tau$ would also be present in the finite runs of a process $P^{*}$, as we use sequential composition to define the semantics of the iteration operator (table \ref{tab:semxccs}). These extra $\tau$'s appearing between the finite runs of the subprocesses would be a complication to the semantics of our logic, as some intuitive validities, such as $\langle A \rangle \langle B \rangle \varphi \rightarrow \langle A;B \rangle \varphi$, would be false. Since we are already introducing the $END$ process to solve the previous problems with the null process, there is no reason why we should not also use it to build a simpler and more convenient form of sequential composition and a simple form of iteration, as it is done in table \ref{tab:semxccs}.

Now, we need to make slight adjustments to the notion of strong bisimulation and to the Expansion Law.

\begin{defn}
Let ${\cal P}$ be the set of all possible process specifications. A set $Z \subseteq {\cal P} \times {\cal P}$ is a \emph{strong bisimulation} if $(P,Q) \in Z$ implies, for all $\alpha \in {\cal A}_R$,
\begin{itemize}
 \item If $P \stackrel{\alpha}{\rightarrow} P'$ and $P' \in {\cal P}$, then there is $Q' \in {\cal P}$ such that $Q \stackrel{\alpha}{\rightarrow} Q'$ and $(P',Q') \in Z$;
 \item If $Q \stackrel{\alpha}{\rightarrow} Q'$ and $Q' \in {\cal P}$, then there is $P' \in {\cal P}$ such that $P \stackrel{\alpha}{\rightarrow} P'$ and $(P',Q') \in Z$;
\end{itemize}
and
\begin{itemize}
 \item $P \stackrel{END}{\rightarrow} \surd$ if and only if $Q \stackrel{END}{\rightarrow} \surd$.
\end{itemize}
\end{defn}

The definition of strong bisimilarity is analogous to definition \ref{def:bisim}, using the new notion of strong bisimulation stated above.

In the presence of the iteration operator, a weaker version of the Expansion Law is now sufficient for our needs.

\begin{thm}[Expansion Law (EL)]\label{teo:XEL}
Let $P = P_1 \mid P_2$, where $P$ is unrestricted \emph{and $|$ does not occur in $P_1$ and $P_2$}. Then
\[
P \sim \sum_{P_1 \stackrel{\alpha}{\rightarrow} P_1'} \alpha.(P_1' \mid P_2) + \sum_{P_2 \stackrel{\beta}{\rightarrow} P_2'} \beta.(P_1 \mid P_2') + \sum_{R \in A_{\tau}} \tau . R  + E_P,
\]
where $A_{\tau} = \{ (P_1' \mid P_2') : P_1 \stackrel{a}{\rightarrow} P_1' \mbox{ and } P_2 \stackrel{\overline{a}}{\rightarrow} P_2', \mbox{ for some } a \in {\cal N} \} \cup \{ (P_1' \mid P_2') : P_1 \stackrel{\overline{a}}{\rightarrow} P_1' \mbox{ and } P_2 \stackrel{a}{\rightarrow} P_2', \mbox{ for some } a \in {\cal N} \}$ and $E_P = END$, if $P_1 \stackrel{END}{\rightarrow} \surd$ and $P_2 \stackrel{END}{\rightarrow} \surd$ or $E_P = {\bf 0}$, otherwise. Again, we denote the right side of this bisimilarity by $Exp(P)$.
\end{thm}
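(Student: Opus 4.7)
The plan is to exhibit an explicit strong bisimulation $Z$ containing the pair $(P_1 \mid P_2, Exp(P_1 \mid P_2))$. I would take
\[
Z = \{(Q, Q) : Q \in {\cal P}\} \cup \{(Q_1 \mid Q_2, Exp(Q_1 \mid Q_2)) : Q_1, Q_2 \text{ are unrestricted and contain no } \mid\}.
\]
Since the identity relation is trivially a strong bisimulation, the whole verification reduces to checking the three clauses of the XCCS bisimulation definition for pairs of the second form.

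For the two clauses involving running actions, I would inspect the rules for $\mid$ in table \ref{tab:semxccs}. Any transition $Q_1 \mid Q_2 \stackrel{\gamma}{\to} R$ with $\gamma \in {\cal A}_R$ must arise from one of three possibilities: a transition of $Q_1$ alone, giving $R = Q_1' \mid Q_2$ for some $Q_1 \stackrel{\gamma}{\to} Q_1'$; a transition of $Q_2$ alone, giving $R = Q_1 \mid Q_2'$; or a synchronizing communication between $Q_1$ and $Q_2$, giving $\gamma = \tau$ and $R = Q_1' \mid Q_2' \in A_{\tau}$. In each case, $R$ is the body of a summand of $Exp(Q_1 \mid Q_2)$, so by the rules for prefix and $+$ we obtain $Exp(Q_1 \mid Q_2) \stackrel{\gamma}{\to} R$, and $(R, R) \in Z$ by the identity part. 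The converse direction is symmetric: every running-action transition of $Exp(Q_1 \mid Q_2)$ fires exactly one summand of the form $\alpha.(\cdots)$ or $\tau . R$, and each such firing is mirrored by a transition of $Q_1 \mid Q_2$ of the corresponding type.

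For the $END$ clause I would appeal to the dedicated rule for $\mid$ in table \ref{tab:semxccs}: $Q_1 \mid Q_2 \stackrel{END}{\to} \surd$ holds iff both $Q_1 \stackrel{END}{\to} \surd$ and $Q_2 \stackrel{END}{\to} \surd$, which is precisely the condition under which $E_P = END$ is a summand of $Exp(Q_1 \mid Q_2)$, and hence the only way the right-hand side can itself do $END$ (the other summands, being guarded by running-action prefixes, cannot produce an $END$ transition). The two hypotheses of the theorem play their role in ensuring that these listed transitions are exhaustive: unrestrictedness of $P$ guarantees that no occurrence of the $\backslash$ operator can suppress or alter a transition, and the absence of $\mid$ inside $P_1$ and $P_2$ prevents further parallel rules from contributing transitions not captured by the enumerated summands. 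The main obstacle I foresee is purely combinatorial bookkeeping: one must spell out, against each row of table \ref{tab:semxccs}, the bijective correspondence between transitions of $P_1 \mid P_2$ and summands of $Exp(P_1 \mid P_2)$. No subtle argument is involved beyond that case analysis.
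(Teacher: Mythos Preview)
The paper does not supply a proof of Theorem~\ref{teo:XEL}; it is stated as an adaptation of Milner's Expansion Law and left without argument, just as Theorem~\ref{teo:EL} is. Your approach---exhibiting the relation $Z$ consisting of the identity together with all pairs $(Q_1\mid Q_2,\ Exp(Q_1\mid Q_2))$ and checking the bisimulation clauses by case analysis on the transition rules---is exactly the standard proof one finds in Milner's book, and it is correct.

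One remark on your closing paragraph: your explanation of the role of the two hypotheses is not quite right. Neither ``$P$ unrestricted'' nor ``$\mid$ does not occur in $P_1,P_2$'' is actually needed for the bisimilarity to hold. Even if $P_1$ contained $\mid$ or $\backslash$, the sum $\sum_{P_1 \stackrel{\alpha}{\rightarrow} P_1'} \alpha.(P_1'\mid P_2)$ ranges over \emph{all} transitions of $P_1$, however they arise internally, so nothing is missed. The paper imposes these restrictions only because, as it says just before the theorem, ``a weaker version of the Expansion Law is now sufficient for our needs''; they simplify later uses (in particular the termination of the rewriting in Theorem~\ref{teo:tirapar}) but play no role in the proof of the bisimilarity itself. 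This does not affect the validity of your argument, only the commentary around it.
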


Finally, because of the presence of the action $END$, we need to slightly adjust the definition of the composition $R \circ S$ of two sets $R$ and $S$ of finite sequences of actions.

\begin{defn}
Let $\natural (\overrightarrow{\alpha}) = \overrightarrow{\lambda}$, if $\overrightarrow{\alpha} = \overrightarrow{\lambda} . END$ and $\natural (\overrightarrow{\alpha}) = \overrightarrow{\alpha}$, otherwise. Then,
\[
R \circ S = \{ \natural (\overrightarrow{\alpha}) . \overrightarrow{\beta} : \overrightarrow{\alpha} \in R \,\, \textrm{and} \,\, \overrightarrow{\beta} \in S \}
\]
\end{defn}

Now, we define some concepts that are useful to the axiomatization of our third logic.

\begin{defn}\label{def:cong}
We say that a relation $\cong$ between processes is a \emph{congruence} if it is an equivalence relation and it is preserved by all of XCCS  operators, that is, if $P \cong Q$, then $\alpha.P \cong \alpha.Q$, $P + R \cong Q + R$ and so on.
\end{defn}

\begin{defn}\label{def:alphacon}
A syntactic substitution of a \emph{restricted name} by a \emph{fresh name} (a name that does not occur in the process specification) in a restriction set $L$ and in every occurrence of the name in the scope of the correspondent restriction $\backslash L$ is called an \emph{alpha conversion}. 
\end{defn}

\begin{defn}\label{def:scon}
\emph{Restriction congruence}, or \emph{r-congruence}, denoted by $\equiv_r$, is a relation between processes defined by the following set of axioms and rules, where $n(P)$ denotes the set of names that occur in $P$ as part of both input and output actions.
\begin{multicols}{2}
\begin{enumerate}
\item It is a congruence;
\item It is closed under alpha conversion;
\item ${\bf 0} \backslash L \equiv_r {\bf 0}$;
\item $END \backslash L \equiv_r END$;
\item If $\alpha \not\in L \cup \overline{L}$, $(\alpha.P) \backslash L \equiv_r \alpha.(P \backslash L)$;
\item If $\alpha \in L \cup \overline{L}$, $(\alpha.P) \backslash L \equiv_r {\bf 0}$; 
\item $(P;Q) \backslash L \equiv_r (P \backslash L);(Q \backslash L)$;
\item $(P+Q) \backslash L \equiv_r (P \backslash L) + (Q \backslash L)$;
\item If $n(P) \cap (L \cup \overline{L}) = \emptyset$, $P | (Q \backslash L)$ $\equiv_r (P|Q) \backslash L$;
\item $(P^*) \backslash L \equiv_r (P \backslash L)^*$;
\item $P \backslash L \backslash M \equiv_r P \backslash (L \cup M)$;
\item If $n(P) \cap (L \cup \overline{L}) = \emptyset$, $P \backslash L \equiv_r P$.
\end{enumerate}
\end{multicols}
\end{defn}

\begin{defn}\label{def:rext}
We say that a process is in \emph{r-external form} if it has the form $P \backslash L$, where $P$ is unrestricted.
\end{defn}

\begin{thm}\label{teo:sc}
Every process is r-congruent to a process in r-external form and every process with no occurrences of the $|$ operator is r-congruent to an unrestricted process.
\end{thm}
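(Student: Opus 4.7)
The plan is to prove both statements together by structural induction on $P$, exploiting the fact that the axioms of $\equiv_r$ let one push the restriction operator outward through every non-$|$ constructor and, with some care, past $|$ as well. The base cases $P = {\bf 0}$ and $P = END$ are already unrestricted, hence trivially in r-external form (and the second claim is immediate), so all the content lives in the inductive step.

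For the first claim, I would handle each operator as follows. If $P = \alpha.P_1$ with $P_1 \equiv_r Q_1 \backslash L_1$ by IH, alpha-convert $L_1$ to fresh names avoiding the underlying name of $\alpha$ so that axiom 5 gives $P \equiv_r (\alpha.Q_1) \backslash L_1$. If $P = P_1 \diamond P_2$ with $\diamond \in \{\,;\,,+\}$ and $P_i \equiv_r Q_i \backslash L_i$, alpha-convert $L_1$ and $L_2$ to disjoint fresh sets $L_1', L_2'$, each also disjoint from the names appearing in the opposite factor; axiom 12 lets one enlarge each $L_i'$-restriction to a $(L_1' \cup L_2')$-restriction, and axiom 7 or 8 then collapses the operator under a single restriction. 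For $P = P_1^*$, axiom 10 directly commutes restriction past star. For $P = P_1 \backslash L$, combine the outer restriction with the one produced by the IH via axiom 11 (after an alpha conversion to secure disjointness). Finally, for $P = P_1 | P_2$ with $P_i \equiv_r Q_i \backslash L_i$, alpha-convert $L_1,L_2$ so that each set is disjoint from all names appearing in the opposite factor, and apply axiom 9 (together with its left-symmetric form, which is available once one uses the standard commutativity of $|$) twice to pull both restrictions outward, merging them again through axiom 11.

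For the second claim, observe that when $P$ contains no $|$, the r-external form $P \equiv_r Q \backslash L$ produced above has $Q$ likewise $|$-free. It then suffices, by a secondary structural induction on such a $Q$, to show that $Q \backslash L$ is r-congruent to an unrestricted process: axioms 3 and 4 dispatch ${\bf 0}$ and $END$; axiom 5 or 6 handles $\alpha.Q'$ (after alpha-converting $L$ to resolve the side condition) combined with the inductive hypothesis on $Q'$; and axioms 7, 8, 10 push the restriction into the subterms of $;$, $+$, and $*$, after which the inductive hypothesis applied to each subterm yields an unrestricted witness. Because $Q$ has no $|$, axiom 9 --- the only case in which restriction cannot be locally eliminated --- is never invoked.

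The hard part will be the parallel case of the first claim: one must manage alpha conversion meticulously so that the side condition $n(P)\cap(L\cup\overline{L})=\emptyset$ of axiom 9 is simultaneously enforced against both the unrestricted body and the other restriction set of the opposite factor, and one must rely on the commutativity of parallel composition (standard in the CCS family, and implicit here) to apply axiom 9 from the left as well as from the right. Apart from this bookkeeping, every remaining inductive case reduces to a direct application of a single axiom to the inductive hypothesis.
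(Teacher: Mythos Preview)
Your structural induction is exactly the argument the paper has in mind; its own proof is the single line ``follows from Definition~\ref{def:scon}'', and you have spelled out what that line means.

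Two small remarks. First, your flag on the parallel case is well placed: axiom~9 is stated only with the restriction on the right factor, and commutativity of $|$ is \emph{not} among the clauses of Definition~\ref{def:scon}, so strictly speaking the left-symmetric step you need is not derivable from the listed axioms alone. This is a lacuna in the paper's axiom list rather than in your reasoning; the paper's one-line proof glosses over the same point, and in any reasonable reading commutativity of $|$ (up to bisimilarity, hence up to $\equiv_r$ via the intended soundness) is indeed taken for granted. Second, in the prefix case of your second claim, alpha-converting $L$ does not help: renaming a restricted name renames it throughout the scope, so if $\alpha$'s underlying name lies in $L$ it still lies in the converted set. This is harmless because axiom~6 handles that case outright, so you can simply drop the parenthetical about alpha-conversion there and split on whether $\alpha \in L \cup \overline{L}$.
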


\begin{proof}
The proof follows from definition \ref{def:scon}.
\end{proof}

\begin{thm}\label{teo:equiv}
If $P \equiv_r Q$, then $P \sim Q$.
\end{thm}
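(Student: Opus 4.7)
The plan is to exploit the fact that $\equiv_r$ is, by Definition \ref{def:scon}, the \emph{smallest} relation between processes satisfying the twelve listed clauses. Hence to show $\equiv_r\subseteq\sim$ it suffices to verify that $\sim$ itself satisfies every one of those clauses; the containment then follows from minimality.

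More concretely, I would proceed in three stages. First, I would establish the \emph{structural} properties. Strong bisimilarity is trivially an equivalence (the identity is a bisimulation, the converse of a bisimulation is a bisimulation, and the relational composition of two bisimulations is a bisimulation). Then I would prove that $\sim$ is a congruence with respect to every XCCS operator of Definition \ref{def:xccslang}: for each operator $\mathrm{op}$, assuming $P_i\sim Q_i$ with witnessing bisimulations $Z_i$, I would build an explicit bisimulation whose pairs are the syntactic contexts formed by $\mathrm{op}$ over pairs in the $Z_i$, then verify the transfer conditions of the new bisimulation definition by case analysis on the SOS rules in Table~\ref{tab:semxccs}. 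The only non-routine cases are the iteration operator $P^{*}$ (whose bisimulation must close under the rule $P^{*}\stackrel{\alpha}{\to}P';P^{*}$, so the candidate relation must also contain pairs of the form $(P';P^{*},Q';Q^{*})$) and sequential composition (because of the two distinct rules driving $P;Q$). Closure under alpha conversion follows analogously: a bulk renaming of a restricted name in $P\backslash L$ together with its bound occurrences preserves every transition up to the same renaming on targets, so the graph of the renaming (extended over the process term) is a bisimulation.

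Second, I would verify each of the specific equations (3)--(12) of Definition \ref{def:scon} individually by exhibiting a concrete strong bisimulation. Most are immediate: for instance (3), (4), (6) follow because both sides have exactly the same set of outgoing transitions (none in the interesting cases), while (11) follows because a single transition of $P\backslash L\backslash M$ requires $\alpha\notin L\cup\overline L$ and $\alpha\notin M\cup\overline M$, which is equivalent to $\alpha\notin (L\cup M)\cup\overline{L\cup M}$. For (5), (7), (8), (10), (12) the natural candidate relation is $\{(P\backslash L,Q): P\text{ matches }Q\text{ structurally}\}$ closed under derivatives, and the transfer conditions follow by inspecting, for each SOS rule, whether the side-conditions on $L$ propagate to the derivative.

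Third, and this is where I expect the only real work, I would treat clause (9): if $n(P)\cap(L\cup\overline L)=\emptyset$, then $P\mid (Q\backslash L)\sim (P\mid Q)\backslash L$. The bisimulation candidate is
\[
Z=\bigl\{\,(P'\mid (Q'\backslash L),\,(P'\mid Q')\backslash L)\;:\;n(P')\cap(L\cup\overline L)=\emptyset\bigr\}.
\]
The obstacle is the communication rule: a $\tau$-transition of the left-hand side from the parallel rule arises from $P'\stackrel{\lambda}{\to}P''$ synchronising with $Q\backslash L\stackrel{\overline\lambda}{\to}Q''\backslash L$, which by the restriction rule requires $\overline\lambda\notin L\cup\overline L$, hence $\lambda\notin L\cup\overline L$ as well; by the side condition on $n(P')$, this matches a legal transition of $(P'\mid Q')\backslash L$. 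Conversely a $\tau$-transition of $(P'\mid Q')\backslash L$ coming from an internal synchronisation uses a name $\lambda$ with $\lambda,\overline\lambda\notin L\cup\overline L$, and one checks that the hypothesis $n(P')\cap(L\cup\overline L)=\emptyset$ is preserved by the derivative $P''$, so $Z$ is closed. All other transitions, including the $END$-clause for termination, are handled analogously. Once (9) is dispatched, the proof is complete.
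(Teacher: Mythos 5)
Your proposal is correct and follows essentially the same route as the paper, which simply asserts that the result follows from the transition rules of Table~\ref{tab:semxccs} and the definition of strong bisimulation; your minimality argument (checking that $\sim$ is a congruence, is closed under alpha conversion, and validates clauses (3)--(12), with the only delicate case being clause (9)) is exactly the detailed version of that claim, amounting to induction on the derivation of $P \equiv_r Q$. No gaps: in particular, your treatment of clause (9) correctly identifies where the side condition $n(P)\cap(L\cup\overline{L})=\emptyset$ is needed and that it is preserved under derivatives.
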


\begin{proof}
The proof follows from table \ref{tab:semxccs} and definition \ref{def:bisim}.
\end{proof}

\subsection{Language and Semantics}

In this section, we present the syntax and semantics of XCCS-PDL.

\begin{defn}\label{def-langxccs} 
The XCCS-PDL language consists of a set $\Phi$ of countably many proposition symbols, a set ${\cal N}$ of countably many names, the silent action $\tau$, the ending action $END$, the boolean connectives $\neg$ and $\land$, the XCCS operators $.$, $;$, $+$, $\mid$, $\phantom{}^*$ and $\backslash$, a modality $\langle \alpha \rangle$ for every $\alpha \in {\cal N} \cup \overline{\cal N} \cup \{ \tau \}$ and a modality $\langle P \rangle$ for every process $P$, including the atomic processes ${\bf 0}$ and $END$. The formulas are defined as follows:
\[
\varphi ::= p \mid \top \mid \neg \varphi \mid \varphi_1 \wedge \varphi_2 \mid \langle \alpha \rangle \varphi \mid \langle P \rangle \varphi,
\]
with
\[ 
P ::= {\bf 0} \mid END \mid \alpha . P \mid P_1 ; P_2 \mid P_1 + P_2 \mid P_1|P_2 \mid P^* \mid P \backslash L,
\]
where $p \in \Phi$, $\alpha \in {\cal N} \cup \overline{\cal N} \cup \{ \tau \}$ and $L \subseteq {\cal N}$.
\end{defn} 

\begin{defn}\label{def:xccsframe} 
A \emph{frame} for XCCS-PDL is a tuple $\mathcal{F}= (W, \{R_{\alpha}\}, R_{END})$ where
\begin{itemize}
\item $W$ is a non-empty set of states;
\item $R_{\alpha}$, for each $\alpha \in {\cal N} \cup \overline{\cal N} \cup \{ \tau \}$ and $R_{END}$ are the basic binary relations, where $R_{END} = \{(w,w) : w \in W\}$.
\end{itemize}
\end{defn}

The notion of model is defined analogously to definition \ref{def-model*}. We define the semantical notion of satisfaction for XCCS-PDL as follows:

\begin{defn}\label{def:xccssat} 
Let $\mathcal{M}= ({\cal F}, {\bf V})$ be a model. The notion of \emph{satisfaction} of a formula $\varphi$ in a model $\mathcal{M}$ at a state $w$, notation $\mathcal{M},w \Vdash \varphi$, can be inductively defined as follows:
\begin{itemize}
\item $\mathcal{M},w \Vdash p$ iff $w \in {\bf V}(p)$;
\item $\mathcal{M},w \Vdash \top$ always;
\item $\mathcal{M},w \Vdash \neg \varphi$ iff $\mathcal{M},w \not\Vdash \varphi$;
\item $\mathcal{M},w \Vdash \varphi_{1} \wedge \varphi_{2}$ iff $\mathcal{M},w \Vdash \varphi_{1}$ and $\mathcal{M},w \Vdash \varphi_{2}$;
\item $\mathcal{M},w \Vdash \langle \alpha \rangle \varphi$ iff there is $w' \in W$ such that $w R_{\alpha} w'$ and $\mathcal{M},w' \Vdash \varphi$, where $\alpha \in {\cal N} \cup \overline{N} \cup \{ \tau \}$;
\item $\mathcal{M},w \Vdash \langle P \rangle \varphi$ iff there is a finite path $(v_0,v_1,\ldots,$ $v_n)$, $n \geq 1$, such that $v_0 = w$, $\mathcal{M},v_n \Vdash \varphi$ and there is $\overrightarrow{\alpha} \in \overrightarrow{{\cal R}_f}(P)$ of length $n$ such that $(v_{i-1},v_{i}) \in R_{\beta}$ if and only if $(\overrightarrow{\alpha})_i = \beta$, for $1 \leq i \leq n$. We say that such $\overrightarrow{\alpha}$ \emph{matches} the path $(v_0,\ldots,v_n)$.
\end{itemize}
\end{defn}

It is not difficult to see that theorem \ref{teo:valPQ} and corollary \ref{cor:simPQvalPQ} remain valid in XCCS-PDL.

\begin{thm}\label{teo:Pfxccs}
The following set equalities are true:
\begin{multicols}{2}
\begin{enumerate}
 \item $\overrightarrow{{\cal R}_f}({\bf 0}) = \emptyset$;
 \item $\overrightarrow{{\cal R}_f}(END) = \{ END \}$;
 \item $\overrightarrow{{\cal R}_f}(\alpha . P) = \overrightarrow{{\cal R}_f}(\alpha) \circ \overrightarrow{{\cal R}_f}(P)$;
 \item $\overrightarrow{{\cal R}_f}(P_1 ; P_2) = \overrightarrow{{\cal R}_f}(P_1) \circ \overrightarrow{{\cal R}_f}(P_2)$;
 \item $\overrightarrow{{\cal R}_f}(P_1 + P_2) = \overrightarrow{{\cal R}_f}(P_1) \cup \overrightarrow{{\cal R}_f}(P_2)$;
 \item $\overrightarrow{{\cal R}_f}(P^*) = \overrightarrow{{\cal R}_f}(P)^*$;
 \item $\overrightarrow{{\cal R}_f}(P_1 | P_2) = \bigcup \{ \overrightarrow{{\cal R}_f}(\overrightarrow{\alpha} | \overrightarrow{\beta}) : \overrightarrow{\alpha} \in \overrightarrow{{\cal R}_f}(P_1) \mbox{ and } \overrightarrow{\beta} \in \overrightarrow{{\cal R}_f}(P_2) \}$;
 \item If $\overrightarrow{{\cal R}_f}(P) = \overrightarrow{{\cal R}_f}(Q)$, then $\overrightarrow{{\cal R}_f}$ $(P \backslash L) = \overrightarrow{{\cal R}_f}(Q \backslash L)$;
 \item If $\overrightarrow{{\cal R}_f}(P) = \overrightarrow{{\cal R}_f}(A) \circ \overrightarrow{{\cal R}_f}(P) \cup \overrightarrow{{\cal R}_f}(B)$ and $END \not\in \overrightarrow{{\cal R}_f}(A)$, then $\overrightarrow{{\cal R}_f}(P) = \overrightarrow{{\cal R}_f}(A)^* \circ \overrightarrow{{\cal R}_f}(B)$.
\end{enumerate}
\end{multicols}
\end{thm}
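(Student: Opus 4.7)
The plan is to verify each of the nine items separately by inspecting the XCCS transition rules in Table \ref{tab:semxccs}, with short inductions where needed. Items 1 and 2 are immediate: ${\bf 0}$ has no outgoing transitions at all, while $END$ has the unique transition $END \stackrel{END}{\rightarrow} \surd$. Items 3, 4 and 5 are handled by essentially the same argument as in the proof of Theorem \ref{teo:Pf}, namely a straightforward induction on the length of the sequence, reading off the corresponding prefix, sequential composition, and summation rules. Item 8 is also routine: a short induction on sequence length using the restriction rule gives the characterization $\overrightarrow{{\cal R}_f}(P \backslash L) = \{ \overrightarrow{\alpha} \in \overrightarrow{{\cal R}_f}(P) : \overrightarrow{\alpha} \mbox{ uses no action from } L \cup \overline{L} \}$, so whenever $\overrightarrow{{\cal R}_f}(P) = \overrightarrow{{\cal R}_f}(Q)$ the filtered sets coincide. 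Item 9 is just an application of Arden's Rule, exactly as used in Theorem \ref{teo:decknot}; the hypothesis $END \not\in \overrightarrow{{\cal R}_f}(A)$ plays the role of the ``empty string not in $A$'' condition, because under the modified $\circ$ operation a sequence reducing to $END$ is what would behave as an identity.

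For item 6, I would induct on the number of iterations. The base case comes from the direct transition $P^* \stackrel{END}{\rightarrow} \surd$, which supplies the empty contribution to $\overrightarrow{{\cal R}_f}(P)^*$. The inductive step uses the unfolding rule $P^* \stackrel{\alpha}{\rightarrow} P';P^*$ together with item 4 on the resulting sequential composition. The $\natural$ function in the definition of $\circ$ is precisely what is needed here: the trailing $END$ produced at the end of one iteration is absorbed before the next iteration begins, so the iterated composition matches the semantics of $P^*$ without accumulating spurious $END$ markers between iterations.

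The main obstacle will be item 7. The notation $\overrightarrow{\alpha} | \overrightarrow{\beta}$ has to be read as the parallel composition of the two prefix processes that perform the action sequences $\overrightarrow{\alpha}$ and $\overrightarrow{\beta}$ in order and then terminate via $END$. For the $\supseteq$ direction, any run of $\overrightarrow{\alpha} | \overrightarrow{\beta}$ is lifted to a run of $P_1 | P_2$ by observing that each transition of a prefix sequence process can be matched by an actual transition of $P_1$ or $P_2$ along its witnessing run in $\overrightarrow{{\cal R}_f}(P_1)$ or $\overrightarrow{{\cal R}_f}(P_2)$, and that communication between the two sequences corresponds to a communication between $P_1$ and $P_2$. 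For the $\subseteq$ direction, given a run $\overrightarrow{\gamma}$ of $P_1 | P_2$, induction on the length of $\overrightarrow{\gamma}$ together with a case analysis on which of the three parallel composition rules (left move, right move, or communication via the $\tau$-synchronisation rule) was used at each step allows one to extract the witnessing $\overrightarrow{\alpha}$ and $\overrightarrow{\beta}$. The care needed here is in the communication case: each such step appears as a single $\tau$ in $\overrightarrow{\gamma}$, but must be recorded as a paired $a$, $\overline{a}$ (or $\overline{a}$, $a$) on the two sides $\overrightarrow{\alpha}$ and $\overrightarrow{\beta}$, so that $\overrightarrow{\alpha}$ and $\overrightarrow{\beta}$ genuinely lie in $\overrightarrow{{\cal R}_f}(P_1)$ and $\overrightarrow{{\cal R}_f}(P_2)$.
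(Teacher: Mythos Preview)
Your proposal is correct and follows essentially the same approach as the paper: the paper's own proof is a two-line remark that items 1--8 are ``straightforward from table \ref{tab:semxccs} and theorem \ref{teo:eqRf}'' and that item 9 ``is Arden's Rule applied in our context.'' Your write-up simply spells out in more detail what the paper leaves implicit, including the reading of $\overrightarrow{\alpha}\mid\overrightarrow{\beta}$ and the filtering characterisation for restriction.
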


\begin{proof}
The proof of the first eight items is straightforward from table \ref{tab:semxccs} and theorem \ref{teo:eqRf}. The ninth item is Arden's Rule \cite{Arden} applied in our context.
\end{proof}

\subsection{Axiomatic System}

We consider the following set of axioms and rules, where $p$ and $q$ are proposition symbols and $\varphi$ and $\psi$ are formulas.

\begin{description}
\item[(sCCS)] The axioms {\bf (PL)}, {\bf (K)}, {\bf (Du)}, {\bf (Pr)} and {\bf (NC)} and the rules {\bf (PC)}, {\bf (Sub)}, {\bf (MP)} and {\bf (Gen)}
\item[(0)] $\vdash \neg \langle {\bf 0} \rangle p$
\item[(END)] $\vdash \langle END \rangle p \leftrightarrow p$
\item[(SC)] $\vdash \langle P_1 ; P_2 \rangle p \leftrightarrow \langle P_1 \rangle \langle P_2 \rangle p$
\item[(Rec)] $\vdash \langle P^* \rangle p \leftrightarrow p \lor \langle P \rangle \langle P^* \rangle p$
\item[(FP)] $\vdash p \land [P^*](p \rightarrow [P] p) \rightarrow [P^*] p$
\item[(PCSub)] If $\vdash \langle P \rangle p \leftrightarrow \langle Q \rangle p$, then $\vdash \langle P | R \rangle p \leftrightarrow \langle Q | R \rangle p$
\item[(RSub)] If $\vdash \langle P \rangle p \leftrightarrow \langle Q \rangle p$, then $\vdash \langle P \backslash L \rangle p \leftrightarrow \langle Q \backslash L \rangle p$
\item[(Ard)] If $\vdash \langle P \rangle p \leftrightarrow \langle A;P + B \rangle p$ and $A \stackrel{END}{\not\rightarrow} \surd$, then $\vdash \langle P \rangle p \leftrightarrow \langle A^*;B \rangle p$
\item[(Con)] If $P \equiv_r Q$, then $\vdash \langle P \rangle p \leftrightarrow \langle Q \rangle p$
\end{description}

The proof of soundness is analogous to the proof of soundness for sCCS-PDL and CCS-PDL. The axioms {\bf (PL)}, {\bf (K)} and {\bf (Du)} and the rules {\bf (Sub)}, {\bf (MP)} and {\bf (Gen)} are standard in the modal logic literature. The soundness of {\bf (Pr)}, {\bf (NC)}, {\bf (0)}, {\bf (SC)}, {\bf (Rec)}, {\bf (FP)}, {\bf (PCSub)}, {\bf (RSub)} and {\bf (Ard)} follows from the set equalities in theorem \ref{teo:Pfxccs} and theorem \ref{teo:valPQ}. The soundness of {\bf (END)} also follows from the two previous results with the help of definition \ref{def:xccsframe}. The soundness of {\bf (PC)} and {\bf (Con)} follows from theorems \ref{teo:XEL} and \ref{teo:equiv} with the help of corollary \ref{cor:simPQvalPQ}. The only rule that may require special attention is {\bf (PCSub)}.

\begin{thm}
The rule {\bf (PCSub)} is sound.
\end{thm}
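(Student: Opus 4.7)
The plan is to reduce the claim to a statement about the sets of finite possible runs of the processes involved and then apply item 7 of theorem \ref{teo:Pfxccs}. The translation between derivable equivalences of diamond formulas and equalities of run-sets is given by theorem \ref{teo:valPQ}, so this is the natural tool to use.

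First, I would interpret soundness as: if the premise $\langle P \rangle p \leftrightarrow \langle Q \rangle p$ is valid in the class of XCCS-PDL frames, then so is $\langle P | R \rangle p \leftrightarrow \langle Q | R \rangle p$. Applying the $(\Leftarrow)$ direction of theorem \ref{teo:valPQ} (which remains valid in XCCS-PDL, as noted just after definition \ref{def:xccssat}), the hypothesis $\Vdash \langle P \rangle p \leftrightarrow \langle Q \rangle p$ yields
\[
\overrightarrow{{\cal R}_f}(P) = \overrightarrow{{\cal R}_f}(Q).
\]

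Next, I would invoke item 7 of theorem \ref{teo:Pfxccs}, which describes the finite runs of a parallel composition purely in terms of the finite runs of its two components:
\[
\overrightarrow{{\cal R}_f}(P | R) = \bigcup \bigl\{ \overrightarrow{{\cal R}_f}(\overrightarrow{\alpha} | \overrightarrow{\beta}) : \overrightarrow{\alpha} \in \overrightarrow{{\cal R}_f}(P),\ \overrightarrow{\beta} \in \overrightarrow{{\cal R}_f}(R) \bigr\}.
\]
Because the indexing set on the right depends on $P$ only through $\overrightarrow{{\cal R}_f}(P)$, replacing $\overrightarrow{{\cal R}_f}(P)$ by the equal set $\overrightarrow{{\cal R}_f}(Q)$ gives exactly $\overrightarrow{{\cal R}_f}(Q | R)$. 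Hence $\overrightarrow{{\cal R}_f}(P | R) = \overrightarrow{{\cal R}_f}(Q | R)$. Applying the $(\Rightarrow)$ direction of theorem \ref{teo:valPQ} then yields $\Vdash \langle P | R \rangle p \leftrightarrow \langle Q | R \rangle p$, which is the desired conclusion.

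The main obstacle is really already taken care of by item 7 of theorem \ref{teo:Pfxccs}: that item encapsulates the non-trivial combinatorics of interleaving and synchronisation, so the soundness argument itself reduces to a simple substitution of equal sets inside a union. If that item were not available, one would have to prove compositionality of $\overrightarrow{{\cal R}_f}(\cdot | \cdot)$ directly from table \ref{tab:semxccs}, which would require an induction on derivation length of transitions of $P | R$ tracking how each move of $P | R$ decomposes into a move from $P$, a move from $R$, or a synchronisation of the two, together with a symmetric argument for termination via $END$. Since that work is already absorbed into theorem \ref{teo:Pfxccs}, the only subtlety left is the harmless one of noting that $\overrightarrow{{\cal R}_f}(P) = \overrightarrow{{\cal R}_f}(Q)$ really does determine the index set of the union on the nose, with no hidden dependence of $\overrightarrow{{\cal R}_f}(\overrightarrow{\alpha} | \overrightarrow{\beta})$ on $P$ or $Q$ beyond their runs.
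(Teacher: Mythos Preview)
Your proof is correct and follows essentially the same route as the paper: both reduce the question, via theorem \ref{teo:valPQ}, to showing that $\overrightarrow{{\cal R}_f}(P)=\overrightarrow{{\cal R}_f}(Q)$ implies $\overrightarrow{{\cal R}_f}(P|R)=\overrightarrow{{\cal R}_f}(Q|R)$, and both extract this from item 7 of theorem \ref{teo:Pfxccs}. The only cosmetic difference is that the paper phrases the core step as a proof by contradiction (pick $\overrightarrow{\lambda}\in\overrightarrow{{\cal R}_f}(P|R)\setminus\overrightarrow{{\cal R}_f}(Q|R)$ and derive a contradiction), whereas you argue directly that the two unions have the same indexing set.
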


\begin{proof}
By theorem \ref{teo:Pfxccs}, $\overrightarrow{{\cal R}_f}(P_1 | P_2) = \bigcup \{ \overrightarrow{{\cal R}_f}(\overrightarrow{\alpha} | \overrightarrow{\beta}) : \overrightarrow{\alpha} \in \overrightarrow{{\cal R}_f}(P_1) \,\, \textrm{and} \,\, \overrightarrow{\beta} \in \overrightarrow{{\cal R}_f}(P_2) \}$. Now, suppose that $\Vdash \langle P \rangle p \leftrightarrow \langle Q \rangle p$, but $\not\Vdash \langle P | R \rangle p \leftrightarrow \langle Q | R \rangle p$. Then, by theorem \ref{teo:valPQ}, $\overrightarrow{{\cal R}_f}(P) = \overrightarrow{{\cal R}_f}(Q)$, but $\overrightarrow{{\cal R}_f}(P | R) \neq \overrightarrow{{\cal R}_f}(Q | R)$. We may assume, without loss of generality, that there is $\overrightarrow{\lambda}$ such that $\overrightarrow{\lambda} \in \overrightarrow{{\cal R}_f}(P | R)$ (*), but $\overrightarrow{\lambda} \not\in \overrightarrow{{\cal R}_f}(Q | R)$ (**). (*) implies that there is $\overrightarrow{\alpha} \in \overrightarrow{{\cal R}_f}(P)$ and $\overrightarrow{\beta} \in \overrightarrow{{\cal R}_f}(R)$ such that $\overrightarrow{\lambda} \in \overrightarrow{{\cal R}_f}(\overrightarrow{\alpha} | \overrightarrow{\beta})$. But then $\overrightarrow{\alpha} \in \overrightarrow{{\cal R}_f}(Q)$, which implies that $\overrightarrow{\lambda} \in \overrightarrow{{\cal R}_f}(Q | R)$, contradicting (**).
\end{proof}

\begin{defn}
We define the following relation between processes: $P \leftrightarrow Q$ iff $\vdash \langle P \rangle p \leftrightarrow \langle Q \rangle p$.
\end{defn}

\begin{thm}\label{teo:axcong}
$\leftrightarrow$ is a \emph{congruence}.
\end{thm}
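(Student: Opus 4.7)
The plan is to verify the two parts of Definition~\ref{def:cong}: that $\leftrightarrow$ is an equivalence relation and that it is preserved by each of the six XCCS constructors of Definition~\ref{def:xccslang} (prefix, sequential composition, choice, parallel composition, iteration, and restriction). Reflexivity, symmetry, and transitivity of $\leftrightarrow$ follow directly from (PL) and (MP) applied to the propositional tautologies about the biconditional. Preservation under parallel composition and under restriction is immediate, since the rules (PCSub) and (RSub) were tailored precisely for this.

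For prefix, choice, and sequential composition the argument is uniform. Assuming $P_i \leftrightarrow Q_i$ (two components in the binary cases), one applies (Sub) to obtain $\vdash \langle P_i \rangle \psi \leftrightarrow \langle Q_i \rangle \psi$ for every formula $\psi$, then uses the corresponding axiom --- (Pr), (NC), or (SC) --- to unfold both $\langle \mathrm{op}(P_1,P_2) \rangle p$ and $\langle \mathrm{op}(Q_1,Q_2) \rangle p$ into shapes in which the $P_i$ (respectively $Q_i$) appear only inside a single outer or inner diamond. Swapping these diamonds using the substituted hypothesis, together with the derivable modal-logic rule that $\vdash \varphi \leftrightarrow \psi$ implies $\vdash \langle R \rangle \varphi \leftrightarrow \langle R \rangle \psi$ (which follows from (Gen), (K), (Du), and (MP)), yields the desired biconditional.

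The main obstacle is the iteration case: $P \leftrightarrow Q \Rightarrow P^* \leftrightarrow Q^*$. The plan is to prove the dual $\vdash [P^*]\neg p \leftrightarrow [Q^*]\neg p$ using (FP) as an induction principle. From (Rec) and (Du) one gets $\vdash [P^*]\neg p \rightarrow [P][P^*]\neg p$; and from the hypothesis $\vdash \langle P\rangle p \leftrightarrow \langle Q\rangle p$, applying (Sub) with $p \mapsto \langle P^*\rangle p$ and then (Du), one obtains $\vdash [P][P^*]\neg p \leftrightarrow [Q][P^*]\neg p$. Chaining yields $\vdash [P^*]\neg p \rightarrow [Q][P^*]\neg p$, which is a theorem, so (Gen) delivers $\vdash [Q^*]\bigl([P^*]\neg p \rightarrow [Q][P^*]\neg p\bigr)$. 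Instantiating (FP) with program $Q$ and formula $[P^*]\neg p$ then produces $\vdash [P^*]\neg p \rightarrow [Q^*][P^*]\neg p$, and combining with $\vdash [Q^*][P^*]\neg p \rightarrow [Q^*]\neg p$ (from (Rec), (Gen), and (K)) gives $\vdash [P^*]\neg p \rightarrow [Q^*]\neg p$. The reverse direction is entirely symmetric, and (Du) together with (Sub) closes the case.

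The hardest part is simply seeing that (FP) applied to the invariant $[P^*]\neg p$ yields the right induction; once that is set up, the argument reduces to standard PDL-style manipulation. I do not expect to need (Ard) or (Con) for this proof: the side condition on (Ard) involving $END$-reachability can fail for arbitrary $P \leftrightarrow Q$, whereas the (FP)/(Rec) route is unconditional and purely modal.
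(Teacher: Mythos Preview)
Your proposal is correct and follows the same approach as the paper: both rely on \textbf{(Pr)}, \textbf{(SC)}, \textbf{(NC)}, \textbf{(Rec)}, \textbf{(FP)}, \textbf{(PCSub)}, and \textbf{(RSub)} to establish preservation under each constructor. The paper's own proof is a single sentence listing these axioms and rules, so your write-up is in fact considerably more detailed---in particular, your explicit use of \textbf{(FP)} with the invariant $[P^*]\neg p$ to handle the iteration case spells out exactly what the paper leaves implicit.
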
 

\begin{proof}
This relation is clearly an equivalence relation and the axioms {\bf (Pr)}, {\bf (SC)}, {\bf (NC)}, {\bf (Rec)} and {\bf (FP)} and the rules {\bf (PCSub)} and {\bf (RSub)} enforce the preservation results needed to satisfy definition \ref{def:cong}.
\end{proof}

\begin{defn}\label{def:eqpar}
Let $\Omega_k = \{P_1,\ldots,P_k\}$ be a set of processes such that $P_i \not\equiv_r P_j$, if $i \neq j$. Let $E(\Omega_k) = \{E_1,\ldots,E_k\}$ such that $ E_i = (P_i,T_i)$, $P_i \leftrightarrow T_i$, $T_i = \sum_j A^i_j ; Q^i_j$ and, for all $(i,j)$, $A^i_j$ has no occurrence of $|$. We say that $E(\Omega_k)$ is \emph{closed} if, for all $(i,j)$, $Q^i_j \in \Omega_k$.   
\end{defn}

\begin{thm}\label{teo:tirapar}
Let $P = P_1 \mid P_2$, where $P$ is unrestricted. Then $P \leftrightarrow \overline{P}$, where $\overline{P}$ has no occurrence of the $|$ operator.
\end{thm}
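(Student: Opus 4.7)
The plan is to strengthen the claim to: every unrestricted process $R$ satisfies $R \leftrightarrow \overline{R}$ for some $|$-free $\overline{R}$, proved by induction on the number of occurrences of $|$ in $R$. For every constructor other than $|$, the inductive step is immediate from the congruence property of $\leftrightarrow$ (theorem \ref{teo:axcong}): for instance, if $R = \alpha.S$ with $S \leftrightarrow \overline{S}$ by induction, then $R \leftrightarrow \alpha.\overline{S}$, which has no $|$. For $R = R_1 \mid R_2$, I first apply the induction hypothesis to $R_1, R_2$ and use (PCSub), together with the derivable commutativity of $|$, to reduce to the case where $P_1$ and $P_2$ are themselves $|$-free. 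This is exactly the setting in which the Expansion Law (theorem \ref{teo:XEL}) applies.

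Next, I would build a closed equation system in the sense of definition \ref{def:eqpar}. Let $\Omega$ be the set of all processes of the form $Q_1 \mid Q_2$ reachable from $P_1 \mid P_2$ by iterated operational transitions. Since none of the transition rules introduces $|$ into the derivative of a $|$-free process, each component of every element of $\Omega$ is $|$-free; moreover, by structural induction on $P_1$ and $P_2$, the sets of derivatives of each are finite up to $\equiv_r$, so $\Omega$ is finite up to $\equiv_r$ as well. Fix representatives $R_1, \ldots, R_k \in \Omega$ with $R_1 \equiv_r P_1 \mid P_2$. Applying (PC) and theorem \ref{teo:XEL} to each $R_i$ yields $R_i \leftrightarrow \sum_{j} \alpha^{i}_{j}.R_{f(i,j)} + E_i$, where $E_i$ is $END$ or ${\bf 0}$. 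Using (SC) and (END), each prefixed summand $\alpha.R_{\ell}$ rewrites as $(\alpha.END); R_{\ell}$, whose coefficient is manifestly $|$-free. This produces a closed equation system $E(\Omega)$.

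Finally, I would solve the system by Kleene-style elimination. Processing the indices $i = k, k-1, \ldots, 2$ in turn, I isolate the self-referential part of equation $i$, writing it as $R_i \leftrightarrow A_i; R_i + B_i$ where $A_i$ is the $|$-free sum of coefficients whose target is $R_i$ and $B_i$ collects all other summands. Since $A_i$ is a sum of prefixes of $END$, the side condition $A_i \stackrel{END}{\not\rightarrow} \surd$ of axiom (Ard) holds, giving $R_i \leftrightarrow A_i^*; B_i$. I then substitute this expression into every occurrence of $R_i$ in the remaining equations: each occurrence $C; R_i$ becomes $C; A_i^*; B_i$, which by (SC) and congruence still has the shape $(\text{$|$-free})\,;\,R_\ell$ for some $\ell \neq i$, so the system remains in the form required by definition \ref{def:eqpar}. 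After $k-1$ substitutions the equation for $R_1$ no longer mentions any $R_j$ on the right-hand side, so the right-hand side is a $|$-free process, which is the desired $\overline{P}$.

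The hard part will be twofold. First, establishing the finiteness of $\Omega$ up to $\equiv_r$ in the presence of the iteration operator $^*$ requires a careful structural induction showing that iteration does not spawn an unbounded family of syntactically distinct derivatives. Second, the Kleene elimination must be bookkept so that at every intermediate stage the system retains the normal form of definition \ref{def:eqpar}, and in particular the coefficient $A_i$ arising at each stage must remain a $|$-free sum of prefixes that cannot immediately perform $END$, so that (Ard) continues to apply; verifying that the shape is preserved under substitution is the principal technical burden.
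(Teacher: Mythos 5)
Your proposal is correct and follows essentially the same route as the paper's proof: an induction on the number of occurrences of $\mid$, with the base/parallel case handled by building a closed equation system via \textbf{(PC)} and the Expansion Law and then solving it by Brzozowski/Kleene-style elimination using \textbf{(Ard)} and substitution under the congruence $\leftrightarrow$. The points you flag as delicate (finiteness of the set of reachable parallel states and preservation of the equation-system shape, including the $A \stackrel{END}{\not\rightarrow} \surd$ side condition) are exactly the steps the paper asserts without detailed argument, so your added care only refines, rather than departs from, the published proof.
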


\begin{proof}
The proof is by induction on the number $n$ of occurrences of the $|$ operator in $P$. If $n = 0$, then $\overline{P} = P$ and there is nothing to be done. 

If $n = 1$, then EL can be applied to $P$. Then, we can use {\bf (PC)} to build pairs $(P_i,T_i)$ that satisfy definition \ref{def:eqpar}. Let $P_1 = P$ and $\Omega_k$ be the smallest set such that $P_1 \in \Omega_k$ and $E(\Omega_k)$ is closed. It is not difficult to see that such set always exist. Take the pair $E_k$. If there is no $Q^k_j = P_k$ (*), then we can substitute in the processes $T_i$, $1 \leq i < k$, all the occurrences of $P_k$ by $T_k$. Otherwise, we can use {\bf (Ard)} to substitute the pair $(P_k,T_k)$ by a pair $(P_k,T_k')$ where (*) holds and then proceed as in the previous case. We then continue this process with the pair $E_{k-1}$ and so on, until we finally get a pair $(P_1,T_1')$ such that no process in $\Omega_k$ occurs in $T_1'$. By the use of {\bf (PC)} to build the initial pairs and the fact that neither {\bf (Ard)} nor the substitution process introduce new $|$ operators, we have $\overline{P} = T_1'$. This method, based on the solution of a ``system of equations'', was inspired by Brzozowski's algebraic method to obtain the regular expression that describes the language accepted by a finite automaton \cite{Brzo}.

Suppose that the theorem is true for all $n < k$. Let $P$ have $k$ occurrences of $|$. As $P = P_1 | P_2$, we can obtain $\overline{P}$ as $\overline{\overline{P_1} | \overline{P_2}}$.
\end{proof}

Two formulas $\phi$ and $\psi$ are equi-consistent if $\vdash \phi \leftrightarrow \psi$. By soundness, if $\phi$ and $\psi$ are equi-consistent, then they are also semantically equivalent.

\begin{thm}[Completeness]
Every consistent formula is satisfiable in a finite XCCS-PDL model.
\end{thm}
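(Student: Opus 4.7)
My plan is to reduce XCCS-PDL completeness to a Fischer-Ladner style argument (in the spirit of the CCS-PDL proof referenced in the appendix) by first normalising the process expressions appearing in a given consistent formula. Given consistent $\varphi$, I would apply rules (Con), (PCSub) and (RSub), together with Theorems \ref{teo:sc} and \ref{teo:tirapar}, to rewrite every subprocess $P$ occurring inside a modality. Concretely: push every restriction outward to obtain an r-external form $P' \backslash L$ (Theorem \ref{teo:sc}); use Theorem \ref{teo:tirapar} to eliminate every occurrence of $|$ in the unrestricted part; then re-apply r-congruence (Theorem \ref{teo:sc}, second clause) to drop the remaining restrictions, since no $|$ remains. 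Because $\leftrightarrow$ is a congruence (Theorem \ref{teo:axcong}) and the transformation preserves provable equi-consistency, this yields a consistent $\varphi^{\flat}$ in which every process is built exclusively from basic actions using $.$, $;$, $+$, $^{*}$, $\mathbf{0}$ and $END$. Thus the completeness question collapses to the parallel-free, restriction-free fragment, which is essentially a PDL variant.

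For this fragment I would perform a standard Fischer-Ladner construction. Define $\mathrm{FL}(\varphi^{\flat})$ as the least set containing $\varphi^{\flat}$ and its subformulas, closed under single negation, and closed under the program-decomposition rules mirroring the axioms: $\langle \alpha.P\rangle\psi$ pulls in $\langle\alpha\rangle\langle P\rangle\psi$ (axiom \textbf{(Pr)}); $\langle P_1;P_2\rangle\psi$ pulls in $\langle P_1\rangle\langle P_2\rangle\psi$ (axiom \textbf{(SC)}); $\langle P_1+P_2\rangle\psi$ pulls in $\langle P_1\rangle\psi$ and $\langle P_2\rangle\psi$ (axiom \textbf{(NC)}); $\langle P^{*}\rangle\psi$ pulls in $\psi$ and $\langle P\rangle\langle P^{*}\rangle\psi$ (axiom \textbf{(Rec)}); $\langle END\rangle\psi$ pulls in $\psi$ (axiom \textbf{(END)}); and $\langle\mathbf{0}\rangle\psi$ requires nothing (axiom \textbf{(0)}). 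A routine counting argument shows $\mathrm{FL}(\varphi^{\flat})$ is finite. I would then take as states the maximal consistent subsets of $\mathrm{FL}(\varphi^{\flat})$, define each $R_\alpha$ by the usual canonical clause $v R_\alpha w$ iff $\{\psi : \langle\alpha\rangle\psi\in v\}$ meets $w$ in the required way, and fix $R_{END}$ to be the identity as required by Definition \ref{def:xccsframe}. Valuation is defined by set membership.

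The truth lemma proceeds by induction on formula structure with a secondary induction on program complexity. The propositional and $\langle\alpha\rangle$ cases are standard; the cases for $\mathbf{0}$, $END$, prefix, sequential composition and choice follow at once from the closure conditions and the corresponding axioms together with the inductive characterisation of $R_{P_1;P_2}$, $R_{P_1+P_2}$ and $R_{\alpha.P}$ in terms of $\overrightarrow{{\cal R}_f}$ (Theorem \ref{teo:Pfxccs}). The main obstacle, as in PDL, will be the iteration clause $\langle P^{*}\rangle\psi$. The ``if'' direction uses \textbf{(Rec)} and the inductive definition of $R_{P^{*}}$ to stitch together a finite $R_P$-path witnessing $\psi$. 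The ``only if'' direction is where the induction axiom \textbf{(FP)} is essential: instantiating $p$ with $\psi$ and $r$ with the disjunction of (characteristic formulas of) those filtration states from which $\psi$ is \emph{not} $R_P$-reachable, and combining this with the inductive characterisation of $R_P$, one derives that any state in the canonical model satisfying $\langle P^{*}\rangle\psi$ must reach a $\psi$-state in finitely many $R_P$-steps. Since $\mathrm{FL}(\varphi^{\flat})$ is finite, the resulting model is finite, yielding both completeness and the finite model property.
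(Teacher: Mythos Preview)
Your proposal is correct and follows essentially the same route as the paper: first normalise every process in $\varphi$ via \textbf{(Con)}, \textbf{(RSub)} and Theorems~\ref{teo:sc}, \ref{teo:axcong}, \ref{teo:tirapar} to eliminate $|$ and $\backslash$, obtaining an equi-consistent formula whose processes use only $.$, $;$, $+$, $^{*}$, $\mathbf{0}$, $END$; then invoke the standard PDL Fischer-Ladner argument, since the remaining axioms \textbf{(Pr)}, \textbf{(SC)}, \textbf{(NC)}, \textbf{(Rec)}, \textbf{(FP)}, \textbf{(0)}, \textbf{(END)} mirror the PDL ones. The paper simply defers the second step to the literature (\cite{Yde}), whereas you spell out the closure conditions and the role of \textbf{(FP)} in the iteration clause of the truth lemma, but the underlying strategy is identical.
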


\begin{proof}
Let $\varphi$ be a consistent formula and let ${\bf P}(\varphi)$ be the set of processes that appear in $\varphi$. For all $P \in {\bf P}(\varphi)$, we can use {\bf (Con)}, {\bf (RSub)} and theorems \ref{teo:sc}, \ref{teo:axcong} and \ref{teo:tirapar} to get a sequence $P \leftrightarrow P' \leftrightarrow P'' \leftrightarrow P'''$, where $P'$ is r-external form, $P''$ is also without any occurrence of the $|$ operator and $P'''$ is like $P''$ but unrestricted. We can then obtain an equi-consistent formula $\varphi' = \varphi[P'''/P, P \in {\bf P}(\varphi)]$ in which the only XCCS operators that appear are $.$, $;$, $+$ and $\phantom{}^*$. The axioms that deal with all of these operators are analogous to the axioms that deal with the operators in standard PDL. {\bf (Pr)} and {\bf (SC)} are analogous to the axiom of the PDL $;$ operator, {\bf (NC)} is analogous to the axiom of the PDL $\cup$ operator and {\bf (Rec)} and {\bf (FP)} are analogous to the axioms of the PDL $\phantom{}^*$ operator. Thus, we can follow the completeness proof of standard PDL (the PDL axioms and its completeness proof are presented in details in \cite{Yde}), treating the actions as basic PDL programs, to show that $\varphi'$ is satisfiable in a finite model. As $\varphi$ and $\varphi'$ are equi-consistent, they are also semantically equivalent, which means that $\varphi$ is also satisfied in that same finite model.
\end{proof}

\section{Final Remarks and Future Work}\label{DC}

In this work, we present three increasingly expressive Dynamic Logics in which the programs are CCS terms (sCCS-PDL, CCS-PDL and XCCS-PDL). We provide a simple Kripke semantics for them, based on the idea of finite possible runs of processes, and also give complete axiomatizations for these logics. We prove the completeness of the axiomatic systems and the finite model property for the logics using a Fischer-Ladner construction.

We also provide a method, in a language with a iteration ($\phantom{}^*$) and sequential composition ($;$) operators, to rewrite any process specification to a form without the parallel composition operator ($|$) while preserving the set of finite possible runs of the process. This method is based on Brzozowski's algorithm to find the regular expression that corresponds to a finite automaton \cite{Brzo}. We feel that this is an interesting and original application of Brzozowski's idea and that it provides an elegant proof to a key result to the completeness of our last axiomatization.

As a continuation of this work, it would be interesting to study the complexity of the satisfiability problem for these logics, possibly relating it to the satisfiability problem for standard PDL. It would also be interesting to develop an automatic theorem prover for these logics. This would involve, among other things, an efficient algorithmic method to deal with the expansion of parallel processes and, in the particular case of CCS-PDL, an efficient algorithmic method to determine the processes $L_P$ and $T_P$ related to a knot process $P$.

We would also like to investigate an extension of these logics for $\pi$-Calculus processes \cite{M99}, in which the acts of communications are more complex than in CCS. The $\pi$-Calculus is a very powerful process algebra that is able to describe not only non-determinism and concurrency, but also \emph{mobility} of processes and that can also be used to encode some powerful programming paradigms, as object-oriented programming and functional programming ($\lambda$-Calculus) \cite{M99}. Besides that, the $\pi$-Calculus has a specific operator to denote that a process has the ability to self-replicate, so this could be an interesting context to analyze in more depth the issue of self-replicating processes, which was left out of the present work.

\appendix

\section{Completeness Proof for CCS-PDL}\label{sec:compproof}

\begin{defn}\label{def:over}
Let $\phi$ be a formula. We define the formula $\overline{\phi}$ as $\overline{\phi} = \psi$, if $\phi = \neg \psi$, or $\overline{\phi} = \neg \phi$, otherwise.
\end{defn}

\begin{defn}[Fischer-Ladner Closure]\label{def:hfl}
Let $\Gamma$ be a set of formulas. The \emph{Fischer-Ladner Closure} of $\Gamma$, notation $C(\Gamma)$, is the smallest set of formulas that contains $\Gamma$ and satisfies the following conditions:

\begin{itemize}
\item $C(\Gamma)$ is closed under sub-formulas;
\item if $\phi \in C(\Gamma)$, then $\overline{\phi} \in C(\Gamma)$;
\item For knot processes:
\begin{itemize}
\item If $\langle P \rangle \varphi \in C(\Gamma)$, then $\langle T_P \rangle \varphi \lor \langle L_P \rangle \langle P \rangle \varphi \in C(\Gamma)$.
\end{itemize}
\item For non-knot processes:
\begin{itemize}
\item If $\langle\alpha . P \rangle\varphi \in C(\Gamma)$, then $\langle \alpha \rangle\langle P \rangle\varphi \in C(\Gamma)$;
\item If $\langle\alpha . A \rangle\varphi \in C(\Gamma)$, then $\langle \alpha \rangle\langle P_A \rangle\varphi \in C(\Gamma)$;
\item If $\langle P_{1} + P_{2} \rangle\varphi \in C(\Gamma)$, then $\langle P_{1} \rangle\varphi \lor \langle P_{2} \rangle\varphi  \in C(\Gamma)$;
\item If $\langle P_{1} \mid P_{2} \rangle \varphi \in C(\Gamma)$, then $\bigvee_{P_{1} \stackrel{\alpha}{\rightarrow} P_{1}'} \langle \alpha \rangle \langle P_{1}' \mid P_{2} \rangle \varphi \lor \bigvee_{P_{2} \stackrel{\alpha}{\rightarrow} P_{2}'} \langle \alpha \rangle \langle P_{1} \mid P_{2}' \rangle \varphi \lor \bigvee_{\stackrel{P_{1} \stackrel{\lambda}{\rightarrow} P_{1}'}{P_{2} \stackrel{\overline{\lambda}}{\rightarrow} P_{2}'}} \langle \tau \rangle \langle P_{1}' \mid P_{2}' \rangle \varphi \in C(\Gamma)$.
\end{itemize}
\end{itemize}
\end{defn}

It is not difficult to prove that if $\Gamma$ is finite, then the closure $C(\Gamma)$ is also finite. We assume $\Gamma$ to be finite from now on.

\begin{defn}\label{def:atom}
A set of formulas ${\cal A}$ is said to be an \emph{atom} over $\Gamma$ if it is a maximal consistent subset of $C(\Gamma)$. The set of all atoms over $\Gamma$ is denoted by $At(\Gamma)$. We denote the conjunction of all the formulas in an atom ${\cal A}$ as $\bigwedge {\cal A}$. 
\end{defn}

\begin{lem}\label{lem:atomprop}
Every atom ${\cal A} \in At(\Gamma)$ has the following properties:
\begin{enumerate}
\item For every $\phi \in C(\Gamma)$, exactly one of $\phi$ and $\neg \phi$ belongs to ${\cal A}$.
\item For every $\phi \land \psi \in C(\Gamma)$, $\phi \land \psi \in {\cal A}$ iff $\phi \in {\cal A}$ and $\psi \in {\cal A}$.
\end{enumerate}
\end{lem}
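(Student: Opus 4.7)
The plan is to verify both properties by exploiting the defining features of an atom, namely (i) maximality as a consistent subset of $C(\Gamma)$, together with the structural closure conditions on $C(\Gamma)$ spelled out in Definition \ref{def:hfl}, especially closure under subformulas and under the ``one-step negation'' operation $\overline{\cdot}$ from Definition \ref{def:over}. Throughout, I would use only the propositional axioms {\bf (PL)} and the (implicit) PL-derivability relation; nothing modal is needed.

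For property 1, I would argue each direction separately. The ``at most one'' half is immediate: if both $\phi$ and $\neg\phi$ were in ${\cal A}$, then by {\bf (PL)} $\bigwedge {\cal A} \vdash \bot$, contradicting consistency. For the ``at least one'' half, I would use maximality together with closure under $\overline{\cdot}$: since $\phi \in C(\Gamma)$, also $\overline{\phi} \in C(\Gamma)$, and $\overline{\phi}$ is propositionally equivalent to $\neg\phi$ (the two cases $\phi=\neg\psi$ and $\phi$ not a negation being handled uniformly by the PL tautology $\neg\neg\psi \leftrightarrow \psi$). Suppose for contradiction that neither $\phi$ nor $\overline{\phi}$ lies in ${\cal A}$. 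Then by maximality, each of ${\cal A}\cup\{\phi\}$ and ${\cal A}\cup\{\overline{\phi}\}$ is inconsistent, so by deduction ${\cal A}\vdash\overline{\phi}$ and ${\cal A}\vdash\phi$, forcing ${\cal A}$ itself to be inconsistent. Contradiction. Translating between $\overline{\phi}$ and $\neg\phi$ via the PL-equivalence gives exactly the statement of property 1 as written.

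For property 2, I would use property 1 together with propositional reasoning in both directions. Note first that if $\phi\land\psi\in C(\Gamma)$, then by closure under subformulas both $\phi,\psi\in C(\Gamma)$, so property 1 applies to them. For the forward direction, if $\phi\land\psi\in{\cal A}$ but, say, $\phi\notin{\cal A}$, then by property 1 we have $\overline{\phi}\in{\cal A}$; but $\{\phi\land\psi,\overline{\phi}\}\vdash\bot$ by {\bf (PL)}, contradicting consistency. Symmetrically for $\psi$. For the converse, if $\phi,\psi\in{\cal A}$ yet $\phi\land\psi\notin{\cal A}$, then by property 1 we get $\overline{\phi\land\psi}\in{\cal A}$, and $\{\phi,\psi,\overline{\phi\land\psi}\}\vdash\bot$ by {\bf (PL)}, again contradicting the consistency of ${\cal A}$.

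The proof is essentially routine modal-logic bookkeeping, so there is no real ``hard step.'' The only mild subtlety to be careful about is that $C(\Gamma)$ is closed under $\overline{\cdot}$ rather than under a literal prefixing by $\neg$, so when the statement of property 1 says ``$\neg\phi$'' one should read this up to the PL-equivalence $\neg\neg\psi\leftrightarrow\psi$; once this identification is made the maximality argument goes through cleanly.
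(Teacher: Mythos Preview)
Your proposal is correct and follows exactly the approach implicit in the paper's one-line proof (``This follows immediately from the definition of atoms as maximal consistent subsets of $C(\Gamma)$''); you have simply spelled out the routine details. Your observation about the $\overline{\phi}$ versus literal $\neg\phi$ distinction is a genuine (if minor) subtlety that the paper glosses over, and you handle it correctly.
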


\begin{proof}
This follows immediately from the definition of atoms as maximal consistent subsets of $C(\Gamma)$.
\end{proof}

\begin{lem}\label{lem:expatom}
If $\Delta \subseteq C(\Gamma)$ and $\Delta$ is consistent
then there exists an atom ${\cal A} \in At(\Gamma)$ such that $\Delta \subseteq {\cal A}$.
\end{lem}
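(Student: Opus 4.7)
The plan is to carry out a finite Lindenbaum-style construction, extending $\Delta$ one formula at a time until we have decided every member of $C(\Gamma)$.

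First, I would note that $C(\Gamma)$ is finite (as remarked immediately before the lemma), so I can enumerate $C(\Gamma) = \{\phi_1,\ldots,\phi_n\}$. I would then build a chain $\Delta = \Delta_0 \subseteq \Delta_1 \subseteq \cdots \subseteq \Delta_n$ inductively as follows. Given $\Delta_i$ consistent, define
\[
\Delta_{i+1} = \begin{cases} \Delta_i \cup \{\phi_{i+1}\} & \text{if this set is consistent,} \\ \Delta_i \cup \{\overline{\phi_{i+1}}\} & \text{otherwise.}\end{cases}
\]
Here I use $\overline{\phi_{i+1}}$ rather than $\neg \phi_{i+1}$ so that the formula I am adding is guaranteed to lie in $C(\Gamma)$, thanks to the closure of $C(\Gamma)$ under the $\overline{\cdot}$ operation from Definition \ref{def:over}.

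The key verification is that $\Delta_{i+1}$ is still consistent. This is the standard classical argument: if both $\Delta_i \cup \{\phi_{i+1}\}$ and $\Delta_i \cup \{\overline{\phi_{i+1}}\}$ were inconsistent, then using \textbf{(PL)} and \textbf{(MP)} one would derive a contradiction from $\Delta_i$ alone (since $\phi_{i+1} \lor \overline{\phi_{i+1}}$ is a tautology), contradicting the induction hypothesis. So the chain goes all the way up to $\Delta_n$.

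Finally, I would set ${\cal A} = \Delta_n$ and check that ${\cal A} \in At(\Gamma)$. By construction ${\cal A} \subseteq C(\Gamma)$ and ${\cal A}$ is consistent. For maximality, observe that for every $\phi \in C(\Gamma)$ the construction put either $\phi$ or $\overline{\phi}$ into ${\cal A}$, so no proper extension of ${\cal A}$ inside $C(\Gamma)$ can be consistent. Since $\Delta = \Delta_0 \subseteq \Delta_n = {\cal A}$, this yields the required atom. I do not foresee any serious obstacle; the only point worth being careful about is using $\overline{\phi}$ in place of $\neg \phi$ to stay inside the closure, which is precisely why Definition \ref{def:hfl} includes closure under $\overline{\cdot}$.
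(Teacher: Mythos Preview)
Your argument is correct and is essentially the same finite Lindenbaum construction the paper gives: enumerate $C(\Gamma)$, start from $\Delta$, and at each step add $\phi_{i+1}$ or $\overline{\phi_{i+1}}$ according to which keeps the set consistent, using the tautology $\bigwedge \Delta_i \leftrightarrow (\bigwedge \Delta_i \land \phi_{i+1}) \lor (\bigwedge \Delta_i \land \overline{\phi_{i+1}})$. Your explicit check of maximality and your remark on why $\overline{\phi}$ (rather than $\neg\phi$) is needed are slightly more detailed than the paper's version, but the method is the same.
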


\begin{proof}
We can construct the atom ${\cal A}$ as follows. First, we enumerate the elements
of $C(\Gamma)$ as $\phi_{1}, \ldots , \phi_{n}$. We start the construction making
${\cal A}_{0} = \Delta$. Then, for $0 \leq i < n$, we know that
$\bigwedge {\cal A}_{i} \leftrightarrow  (\bigwedge {\cal A}_{i} \land \phi_{i+1})
\lor (\bigwedge {\cal A}_{i} \land \overline{\phi_{i+1}})$ is a tautology and therefore
either ${\cal A}_{i} \cup \{ \phi_{i+1} \}$ or ${\cal A}_{i} \cup \{ \overline{\phi_{i+1}} \}$ is
consistent. We take ${\cal A}_{i+1}$ as the consistent extension. At the end, we make ${\cal A} = {\cal A}_{n}$.
\end{proof}

\begin{cor}\label{cor:expalpha}
If $\varphi \in C(\Gamma)$ is a consistent formula, then there is an atom ${\cal A} \in At(\Gamma)$ such that $\varphi \in {\cal A}$.
\end{cor}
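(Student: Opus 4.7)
The plan is to derive this corollary as an immediate instance of Lemma \ref{lem:expatom}. Specifically, I would take the singleton $\Delta = \{\varphi\}$ and verify the two hypotheses of the lemma: (i) $\Delta \subseteq C(\Gamma)$ holds because $\varphi \in C(\Gamma)$ is given; (ii) $\Delta$ is consistent because a singleton $\{\varphi\}$ is consistent exactly when $\varphi$ is consistent, which is also given. Lemma \ref{lem:expatom} then yields an atom $\mathcal{A} \in At(\Gamma)$ with $\{\varphi\} \subseteq \mathcal{A}$, and unpacking the singleton gives $\varphi \in \mathcal{A}$, as required.

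There is essentially no obstacle here: the corollary is a direct specialization of the lemma to the one-element case, and no additional machinery (no use of Fischer--Ladner closure properties beyond what the lemma already invokes, no axiom of the CCS-PDL system) is needed. I would therefore keep the proof to a single sentence citing Lemma \ref{lem:expatom} with $\Delta = \{\varphi\}$.
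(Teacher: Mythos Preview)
Your proposal is correct and matches the paper's approach exactly: the paper states this as a corollary immediately following Lemma~\ref{lem:expatom} without further proof, precisely because it is the singleton case $\Delta = \{\varphi\}$ of that lemma.
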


\begin{defn}[Canonical model over $\Gamma$]\label{def:canpremod}
Let $\Gamma$ be a finite set of formulas. The \emph{canonical model} over $\Gamma$ is the tuple ${\cal M}^{\Gamma} = (At(\Gamma), \{S_{\alpha}\}, {\bf V})$ where, for all elements $p \in \Phi$, we have ${\bf V}(p) = \{ {\cal A} \in At(\Gamma) \mid p \in {\cal A}\}$ and for all atoms ${\cal A}, {\cal B} \in At(\Gamma)$,
\[
{\cal A}S_{\alpha}{\cal B} \,\, \textrm{iff} \,\, \bigwedge {\cal A} \land \langle \alpha \rangle \bigwedge {\cal B} \,\, \textrm{is consistent}.
\]
${\bf V}$ is called the \emph{canonical valuation} and $S_{\alpha}$ the \emph{canonical relations}, where $\alpha$ is a CCS action.
\end{defn}

\begin{defn}
We write ${\cal A} \stackrel{P}{\twoheadrightarrow} {\cal B}$ if and only if $\bigwedge {\cal A} \land \langle P \rangle \bigwedge {\cal B}$ is consistent. We also write $S_P = \{ ({\cal A},{\cal B}) : {\cal A} \stackrel{P}{\twoheadrightarrow} {\cal B} \}$.
\end{defn}

\begin{lem}[Existence Lemma for Basic Processes]\label{lem:existence}
Let ${\cal A}$ be an atom and let $\alpha$ be an action. Then, for all formulas $\langle \alpha \rangle \phi \in C(\Gamma)$, $\langle \alpha \rangle \phi \in {\cal A}$ iff there is a ${\cal B} \in At(\Gamma)$ such that ${\cal A} S_{\alpha} {\cal B}$ and $\phi \in {\cal B}$.
\end{lem}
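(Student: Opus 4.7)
The plan is to prove the two directions separately, using only standard normal modal logic reasoning (available from axioms \textbf{(PL)}, \textbf{(K)}, \textbf{(Du)} and rules \textbf{(MP)}, \textbf{(Gen)}) together with the two properties of atoms already established in Lemma \ref{lem:atomprop} and the extension Lemma \ref{lem:expatom}.

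For the easy direction ($\Leftarrow$), I would assume that there exists ${\cal B} \in At(\Gamma)$ with ${\cal A} S_\alpha {\cal B}$ and $\phi \in {\cal B}$. From $\phi \in {\cal B}$ we get $\vdash \bigwedge {\cal B} \to \phi$, hence by \textbf{(Gen)} and \textbf{(K)} (in the dual form $\vdash \langle \alpha \rangle \psi \to \langle \alpha \rangle \chi$ whenever $\vdash \psi \to \chi$) we obtain $\vdash \langle \alpha \rangle \bigwedge {\cal B} \to \langle \alpha \rangle \phi$. Consistency of $\bigwedge {\cal A} \land \langle \alpha \rangle \bigwedge {\cal B}$ then forces consistency of $\bigwedge {\cal A} \land \langle \alpha \rangle \phi$. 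Since $\langle \alpha \rangle \phi \in C(\Gamma)$, property~1 of Lemma \ref{lem:atomprop} applied to ${\cal A}$ yields $\langle \alpha \rangle \phi \in {\cal A}$, as otherwise $\overline{\langle \alpha \rangle \phi} \in {\cal A}$ would contradict consistency.

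For the interesting direction ($\Rightarrow$), I would rely on a single auxiliary fact: for every formula $\psi \in C(\Gamma)$,
\[
\vdash \psi \leftrightarrow \bigvee_{{\cal B} \in At(\Gamma),\, \psi \in {\cal B}} \bigwedge {\cal B}.
\]
This is established by first noting that $\vdash \bigvee_{{\cal B} \in At(\Gamma)} \bigwedge {\cal B}$ (every maximal consistent-vs-inconsistent truth-assignment to the finite set $C(\Gamma)$ is either an atom or discarded as propositionally inconsistent, so the disjunction over atoms exhausts the tautology), and then splitting each atom according to whether it contains $\psi$ or $\overline{\psi}$ (property~1 again): those containing $\overline{\psi}$ contribute inconsistent conjuncts when conjoined with $\psi$. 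Applied to $\phi$ and then closed under $\langle \alpha \rangle$ using the standard distribution $\vdash \langle \alpha \rangle(\chi_1 \lor \chi_2) \leftrightarrow \langle \alpha \rangle \chi_1 \lor \langle \alpha \rangle \chi_2$ (a consequence of \textbf{(K)}, \textbf{(Du)} and \textbf{(Gen)}), this yields
\[
\vdash \langle \alpha \rangle \phi \leftrightarrow \bigvee_{{\cal B} \in At(\Gamma),\, \phi \in {\cal B}} \langle \alpha \rangle \bigwedge {\cal B}.
\]
Now, if $\langle \alpha \rangle \phi \in {\cal A}$, the formula $\bigwedge {\cal A} \land \langle \alpha \rangle \phi$ is consistent, so by the displayed equivalence $\bigwedge {\cal A} \land \bigvee_{{\cal B}: \phi \in {\cal B}} \langle \alpha \rangle \bigwedge {\cal B}$ is consistent, and therefore at least one disjunct $\bigwedge {\cal A} \land \langle \alpha \rangle \bigwedge {\cal B}$ is consistent. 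For this ${\cal B}$ we have $\phi \in {\cal B}$ and ${\cal A} S_\alpha {\cal B}$ by the definition of the canonical relation.

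The main obstacle I anticipate is the auxiliary tautology $\vdash \bigvee_{{\cal B} \in At(\Gamma)} \bigwedge {\cal B}$: one must argue carefully at the propositional level that atoms enumerate exactly the consistent maximal subsets of the finite closure $C(\Gamma)$ (this requires the closure $C(\Gamma)$ to be closed under $\overline{(\cdot)}$, which is given in Definition \ref{def:hfl}) and that every propositionally inconsistent such subset is already refutable from \textbf{(PL)}. Once this tautology is in hand, the rest of the argument is purely the standard normal-modal existence-lemma computation and requires none of the CCS-specific axioms, so no induction on the structure of processes is needed at this stage.
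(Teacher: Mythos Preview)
Your proposal is correct. The $(\Leftarrow)$ direction is essentially identical to the paper's argument.

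For the $(\Rightarrow)$ direction you take a genuinely different (though equally standard) route. The paper constructs the witnessing atom ${\cal B}$ incrementally by ``forcing choices'': starting from ${\cal B}_0=\{\phi\}$ and, at each step, using the provable equivalence
\[
\vdash \langle \alpha \rangle \bigwedge {\cal B}_m \;\leftrightarrow\; \langle \alpha \rangle(\bigwedge {\cal B}_m \land \phi_{m+1}) \lor \langle \alpha \rangle(\bigwedge {\cal B}_m \land \overline{\phi_{m+1}})
\]
to extend ${\cal B}_m$ by whichever of $\phi_{m+1},\overline{\phi_{m+1}}$ keeps $\bigwedge{\cal A}\land\langle\alpha\rangle\bigwedge{\cal B}_{m+1}$ consistent. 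You instead front-load all the case splits into the global fact $\vdash \phi \leftrightarrow \bigvee_{\phi\in{\cal B}}\bigwedge{\cal B}$ and then push $\langle\alpha\rangle$ through the finite disjunction once. The two arguments are really the same computation organised differently: the paper distributes $\langle\alpha\rangle$ over one binary disjunction per step, you distribute it over the full disjunction at the end. Your auxiliary tautology $\vdash\bigvee_{{\cal B}\in At(\Gamma)}\bigwedge{\cal B}$ is a reusable lemma (and indeed reappears implicitly in the paper's later Lemma~\ref{lem:loop} via $\neg r=\bigvee\{\bigwedge{\cal C}:{\cal C}\in\overline V\}$), whereas the paper's step-by-step construction is lighter for this single lemma and is reused verbatim in Lemma~\ref{lemma:S-implis-R} under the name ``forcing choices''. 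Your remark that $\phi\in C(\Gamma)$ (needed to apply the auxiliary fact) is correct by subformula closure.
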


\begin{proof}
($\Rightarrow$) Suppose $\langle \alpha \rangle \phi \in {\cal A}$. We can build an appropriate atom ${\cal B}$ by forcing choices. Enumerate the formulas in $C(\Gamma)$ as $\phi_1, \ldots, \phi_n$. Define ${\cal B}_0 = \{\phi\}$. Suppose, as an inductive hypothesis that ${\cal B}_m$ is defined such that $\bigwedge {\cal A} \land \langle \alpha \rangle \bigwedge {\cal B}_m$ is consistent, for $0 \leq m < n$. We have that 
\[
\vdash \langle \alpha \rangle \bigwedge {\cal B}_{m} \leftrightarrow \langle \alpha \rangle((\bigwedge {\cal B}_{m} \land \phi_{m+1}) \lor (\bigwedge {\cal B}_{m} \land \overline{\phi_{m+1}})),
\]
thus
\[
\vdash \langle \alpha \rangle \bigwedge {\cal B}_{m} \leftrightarrow  (\langle \alpha \rangle(\bigwedge {\cal B}_{m} \land \phi_{m+1}) \lor \langle \alpha \rangle(\bigwedge {\cal B}_{m} \land \overline{\phi_{m+1}})).
\]
Therefore, either for ${\cal B}' = {\cal B}_m \cup \{ \phi_{m+1} \}$ or for ${\cal B}' = {\cal B}_m \cup \{ \overline{\phi_{m+1}} \}$, we have that $\bigwedge {\cal A} \land \langle \alpha \rangle \bigwedge {\cal B}'$ is consistent. We take ${\cal B}_{m+1}$ as the consistent extension. At the end, we make ${\cal B} = {\cal B}_{n}$. We have that $\phi \in {\cal B}$ and, as $\bigwedge {\cal A} \land \langle \alpha \rangle \bigwedge {\cal B}$ is consistent, ${\cal A} S_{\alpha} {\cal B}$, by definition \ref{def:canpremod}.

($\Leftarrow$): Suppose that there is an atom ${\cal B}$ such that $\phi \in {\cal B}$ and ${\cal A} S_{\alpha} {\cal B}$. Then $\bigwedge {\cal A} \land \langle \alpha \rangle \bigwedge {\cal B}$ is consistent by definition \ref{def:canpremod}. As $\phi$ is one of the conjuncts of $\bigwedge {\cal B}$, $\bigwedge {\cal A} \land \langle \alpha \rangle \phi$ is also consistent. As $\langle \alpha \rangle \phi$ is in $C(\Gamma)$, it must also be in ${\cal A}$, since ${\cal A}$ is a maximal consistent subset of $C(\Gamma)$.
\end{proof}

\begin{lem}\label{lem:loop}
For all knot processes $P$, $S_P \subseteq S_P'$, where $S_P' = S_{L_P}^{*} \circ S_{T_P}$. 
\end{lem}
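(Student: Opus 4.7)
The plan is to prove the lemma by contradiction, using the fixed-point axiom \textbf{(FP)} to force termination of the recursion implicit in the knot process $P$. Fix atoms ${\cal A}, {\cal B}$ with $({\cal A}, {\cal B}) \in S_P$, and let
\[
U = \{ {\cal C} \in At(\Gamma) : ({\cal C}, {\cal B}) \in S_P' \}.
\]
Assume, for contradiction, that ${\cal A} \notin U$. Introduce the characteristic disjunctions $\chi_U = \bigvee_{{\cal C} \in U} \bigwedge {\cal C}$ and $r = \bigvee_{{\cal C} \notin U} \bigwedge {\cal C}$; because the atoms of $At(\Gamma)$ cover all maximal consistent subsets of $C(\Gamma)$ and are pairwise inconsistent, $\vdash r \leftrightarrow \neg \chi_U$ is a propositional consequence of the construction. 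The goal is to establish the premise of \textbf{(FP)} with $p = \bigwedge {\cal B}$.

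Two relational facts about $U$ drive the syntactic work. First, since $S_{L_P}^{*}$ is reflexive, $S_{T_P} \subseteq S_P'$; hence ${\cal C} \notin U$ implies $({\cal C}, {\cal B}) \notin S_{T_P}$, so $\bigwedge {\cal C} \land \langle T_P \rangle \bigwedge {\cal B}$ is inconsistent, which gives $\vdash \bigwedge {\cal C} \to [T_P] \neg \bigwedge {\cal B}$. Second, $S_{L_P} \circ S_P' \subseteq S_P'$, so if ${\cal C} \notin U$ and ${\cal D} \in U$ then $({\cal C}, {\cal D}) \notin S_{L_P}$, i.e.\ $\bigwedge {\cal C} \land \langle L_P \rangle \bigwedge {\cal D}$ is inconsistent; distributing $\langle L_P \rangle$ over the disjunction defining $\chi_U$ via standard consequences of \textbf{(K)} yields $\vdash \bigwedge {\cal C} \to [L_P] r$. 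Taking a disjunction over all ${\cal C} \notin U$ assembles both pieces into
\[
\vdash r \to \bigl([T_P] \neg \bigwedge {\cal B} \land [L_P] r\bigr).
\]

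Applying \textbf{(Gen)} with the process $L_P'$ produces the second conjunct of the premise of \textbf{(FP)}, and then \textbf{(FP)} itself delivers $\vdash r \to [P] \neg \bigwedge {\cal B}$. Since ${\cal A} \notin U$, the conjunction $\bigwedge {\cal A}$ is a disjunct of $r$, so $\vdash \bigwedge {\cal A} \to [P] \neg \bigwedge {\cal B}$, contradicting $({\cal A}, {\cal B}) \in S_P$. Therefore ${\cal A} \in U$, which is exactly $({\cal A}, {\cal B}) \in S_P'$.

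The main obstacle I expect is the bookkeeping that converts the two semantic closure properties of $U$ into genuinely \emph{provable} formulas about the characteristic formula $r$, especially the modal manipulation that distributes $\langle L_P \rangle$ across the disjunction $\chi_U$ so that $r \to [L_P] r$ comes out as a theorem rather than merely a semantic validity; once that syntactic step is justified, the rest is a direct instantiation of \textbf{(FP)}.
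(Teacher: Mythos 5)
Your proof is correct and follows essentially the same route as the paper's: introduce the characteristic formula $r$ of a set of atoms containing ${\cal A}$, prove $\vdash r \rightarrow ([T_P]\neg \bigwedge {\cal B} \land [L_P] r)$ from the consistency-based definition of the canonical relations, and conclude with {\bf (Gen)}, {\bf (FP)} and {\bf (MP)}, contradicting ${\cal A} S_P {\cal B}$. The only divergence is your choice of invariant set (the complement of $\langle S_P' \rangle {\cal B}$ rather than the paper's slightly larger set), which lets you derive both implications purely from the closure properties $S_{T_P} \subseteq S_P'$ and $S_{L_P} \circ S_P' \subseteq S_P'$ without appealing to {\bf (Rec)} --- a harmless, indeed mildly cleaner, variation.
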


\begin{proof}
For an atom ${\cal B} \in At(\Gamma)$ and a relation $S$, we denote the set of atoms $\{{\cal A} \mid {\cal A} S {\cal B}\}$ as $\langle S \rangle {\cal B}$. Suppose there are two atoms ${\cal A}, {\cal B} \in At(\Gamma)$ such that ${\cal A} \in \langle S_P \rangle {\cal B}$, but ${\cal A} \notin \langle S_P' \rangle {\cal B}$. Let $V = \{{\cal C} \in At(\Gamma) \mid {\cal C} \in \langle S_P \rangle {\cal B} \, \mbox{ but } \, {\cal C} \notin \langle S_P' \rangle {\cal B}\} \cup \{{\cal C} \in At(\Gamma) \mid {\cal C} \notin \langle S_P \rangle {\cal B}\}$ and $\overline{V} = At(\Gamma) \setminus V = \{{\cal C} \in At(\Gamma) \mid {\cal C} \in \langle S_P \rangle {\cal B} \, \mbox{ and } \, {\cal C} \in \langle S_P' \rangle {\cal B}\}$. Thus, ${\cal A} \in V$. Let $r = \bigvee \{ \bigwedge {\cal C} \mid {\cal C} \in V\}$. It is not difficult to see that $\neg r = \bigvee \{ \bigwedge {\cal C} \mid {\cal C} \in \overline{V}\}$. 

First, we have that $\vdash r \rightarrow [T_P] \neg \bigwedge {\cal B}$. Otherwise, $\neg (r \rightarrow [T_P] \neg \bigwedge {\cal B}) \equiv r \land \langle T_P \rangle \bigwedge {\cal B}$ is consistent. This means that there is ${\cal A}' \in V$ such that $\bigwedge {\cal A}' \land \langle T_P \rangle \bigwedge {\cal B}$ is consistent. On one hand, this implies, by {\bf (Rec)}, that $\bigwedge {\cal A}' \land \langle P \rangle \bigwedge {\cal B}$ is consistent, which means that ${\cal A}' \in \langle S_P \rangle {\cal B}$. On the other hand, it implies that ${\cal A}' S_{T_P} {\cal B}$, which means that ${\cal A}' \in \langle S_P' \rangle {\cal B}$. These two conclusions contradict the fact that ${\cal A}' \in V$.

Second, we also have that $\vdash r \rightarrow [L_P]r$. Otherwise, $\neg (r \rightarrow [L_P]r) \equiv r \land \langle L_P \rangle \neg r$ is consistent. This means that there are ${\cal A}' \in V$ and ${\cal B'} \in \overline{V}$ such that $\bigwedge {\cal A}' \land \langle L_P \rangle \bigwedge {\cal B}'$ is consistent, which implies that ${\cal A}' S_{L_P} {\cal B}'$. Since ${\cal B}' \in \overline{V}$, ${\cal B}' S_P {\cal B}$ and ${\cal B}' S_P' {\cal B}$. On one hand, ${\cal A}' S_{L_P} {\cal B}'$ and ${\cal B}' S_P' {\cal B}$ imply that ${\cal A}' S_P' {\cal B}$ (*). On the other hand, ${\cal A}' S_{L_P} {\cal B}'$ and ${\cal B}' S_P {\cal B}$ imply that $\bigwedge {\cal A}' \land \langle L_P \rangle \langle P \rangle \bigwedge {\cal B}$ is consistent, which, by {\bf (Rec)}, implies that $\bigwedge {\cal A}' \land \langle P \rangle \bigwedge {\cal B}$ is consistent, which means that ${\cal A}' S_P {\cal B}$ (**). The conclusions in (*) and (**) contradict the fact that ${\cal A}' \in V$.

Taking these two results together, we conclude that $\vdash r \rightarrow ([T_P]\neg \bigwedge {\cal B} \land [L_P]r)$. By {\bf (Gen)}, {\bf (PL)}, {\bf (FP)} and {\bf (MP)}, $\vdash r \rightarrow [P] \neg \bigwedge {\cal B}$. But, as ${\cal A} \in V$, $\vdash \bigwedge {\cal A} \rightarrow r$, which means that $\vdash \bigwedge {\cal A} \rightarrow [P] \neg \bigwedge {\cal B}$. This implies that $\bigwedge {\cal A} \land \langle P \rangle \bigwedge {\cal B}$ is inconsistent, contradicting the fact that ${\cal A} S_P {\cal B}$. Thus, there cannot be a pair of atoms ${\cal A}, {\cal B} \in At(\Gamma)$ such that ${\cal A} \in \langle S_P \rangle {\cal B}$, but ${\cal A} \notin \langle S_P' \rangle {\cal B}$.
\end{proof}

\begin{defn}
We write ${\cal A} \stackrel{P}{\rightsquigarrow} {\cal B}$ if and only if there is a path in the canonical model starting in ${\cal A}$ and ending in ${\cal B}$ such that there is $\overrightarrow{\alpha} \in \overrightarrow{{\cal R}_f}(P)$ that matches it. We also write $R_P = \{ ({\cal A},{\cal B}) : {\cal A} \stackrel{P}{\rightsquigarrow} {\cal B} \}$. Finally, it also follows from this definition that ${\cal M}^{\Gamma}, {\cal A} \Vdash \langle P \rangle \varphi$ if and only if there is ${\cal B}$ such that $({\cal A},{\cal B}) \in R_P$ and ${\cal M}^{\Gamma}, {\cal B} \Vdash \varphi$.
\end{defn}

\begin{lem}\label{lemma:S-implis-R} 
For all processes $P$, $S_P \subseteq R_P$.
\end{lem}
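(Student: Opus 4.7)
The plan is to prove $S_P \subseteq R_P$ by induction on a well-founded measure on processes, pushing the consistency statement that defines $S_P$ through each syntactic form using the matching axiom and the Existence Lemma to extract an actual canonical-model witness path.

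For the base case $P = \alpha$, Definition \ref{def:canpremod} identifies $S_\alpha$ with the canonical $\alpha$-edge relation, and since $\alpha \in \overrightarrow{{\cal R}_f}(\alpha)$ (by Theorem \ref{teo:Pf}), a single $\alpha$-edge is already a one-step path matched by a finite run, so $S_\alpha \subseteq R_\alpha$ directly.

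For the non-knot compound cases I would treat each form via its associated axiom. For $P = \alpha.P'$ (or $\alpha.A$), axiom {\bf (Pr)} (respectively {\bf (Cons)}) lets us rewrite $\bigwedge{\cal A} \land \langle P \rangle \bigwedge{\cal B}$ as $\bigwedge{\cal A} \land \langle \alpha \rangle \langle P' \rangle \bigwedge{\cal B}$ (respectively with $P_A$ in place of $P'$); the Existence Lemma \ref{lem:existence} then produces an intermediate atom ${\cal C}$ with ${\cal A}S_\alpha{\cal C}$ and ${\cal C}S_{P'}{\cal B}$, the inductive hypothesis supplies a canonical path from ${\cal C}$ to ${\cal B}$ matched by some $\overrightarrow{\beta} \in \overrightarrow{{\cal R}_f}(P')$, and prepending the $\alpha$-edge yields a path matched by $\alpha.\overrightarrow{\beta} \in \overrightarrow{{\cal R}_f}(P)$. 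For $P = P_1 + P_2$, axiom {\bf (NC)} forces at least one of $\bigwedge{\cal A} \land \langle P_i \rangle \bigwedge{\cal B}$ to be consistent, and IH together with $\overrightarrow{{\cal R}_f}(P_i) \subseteq \overrightarrow{{\cal R}_f}(P_1 + P_2)$ closes the case. For non-knot $P = P_1 \mid P_2$, rule {\bf (PC)} combined with Theorem \ref{teo:EL} gives $S_P = S_{Exp(P)}$, and $Exp(P)$ is a sum of prefixed processes whose sub-processes have strictly smaller induction measure, so the case reduces to ones already handled.

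For the knot case, the bulk of the work has already been absorbed into Lemma \ref{lem:loop}, which gives $S_P \subseteq S_{L_P}^{*} \circ S_{T_P}$. Applying IH to $L_P$ (a sum of action sequences, hence non-knot) and to $T_P$ (a sum of minimal proper breakers followed by successor processes) yields $S_{L_P} \subseteq R_{L_P}$ and $S_{T_P} \subseteq R_{T_P}$, hence $S_P \subseteq R_{L_P}^{*} \circ R_{T_P}$. By Theorem \ref{teo:decknot}, $\overrightarrow{{\cal R}_f}(P) = \overrightarrow{{\cal R}_f}(L_P)^{*} \circ \overrightarrow{{\cal R}_f}(T_P)$, so any concatenated path in $R_{L_P}^{*} \circ R_{T_P}$ is matched by a sequence in $\overrightarrow{{\cal R}_f}(P)$, giving $S_P \subseteq R_P$.

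The main obstacle will be designing the well-founded measure so that the two recursive reductions actually terminate. In the non-knot parallel case, $Exp(P)$ still contains parallel compositions, so structural induction alone does not suffice; I would measure by the finite maximum length of runs in $\overrightarrow{{\cal R}_f}(P)$, which strictly decreases under prefix extraction since non-knot processes cannot iterate. In the knot case, the successor processes $P'$ occurring inside $T_P$ may themselves be knot, so I would use a lexicographic measure whose outer component tracks the ``active'' recursive constants in $Cons(P)$ (finite by the restrictions in Section \ref{pdl-ccs}), exploiting the fact that traversing a minimal proper breaker must exit the current recursion layer and hence strictly shrink the active-constant set before the structural component takes over.
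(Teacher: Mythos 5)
Your proposal follows essentially the same route as the paper's proof: the same split into basic actions, non-knot processes handled via \textbf{(Pr)}, \textbf{(Cons)}, \textbf{(NC)} and \textbf{(PC)} with an intermediate atom, and knot processes handled by Lemma \ref{lem:loop} with the induction hypothesis applied to $L_P$ and $T_P$. Two caveats. First, the intermediate atom ${\cal C}$ is produced by the ``forcing choices'' construction exemplified in the \emph{proof} of Lemma \ref{lem:existence}, not by that lemma itself, since $\langle \alpha \rangle \langle P' \rangle \bigwedge {\cal B}$ need not belong to $C(\Gamma)$; this is how the paper phrases it, and your wording should be read that way. Second, your claim that non-knot processes cannot iterate is false under the paper's definitions: with $A \stackrel{def}{=} \alpha.A + \tau$, the process $\alpha.A$ is non-knot (only $P_A$ itself is a knot process), so a non-knot parallel process such as $(\alpha.A) \mid \beta$ has runs of unbounded length and your maximum-run-length measure is not finite there; similarly, traversing a minimal proper breaker of a knot process such as $P_A \mid \beta$ can leave the active-constant set unchanged (the successor is $P_A$), so the justification of your lexicographic measure needs repair. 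These issues concern only the termination bookkeeping you added on your own initiative: the paper itself simply declares the argument to be ``induction on the structure of $P$'' and applies the hypothesis to the processes arising from the expansion and to $L_P$, $T_P$ without supplying any measure, so your core case analysis coincides with the paper's and the measure discussion is an (imperfect) refinement rather than a deviation.
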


\begin{proof}
The proof is by induction on the structure of the process $P$.
\begin{itemize}
\item If $P$ is an action $\alpha$, then the proof is straightforward. First, $\overrightarrow{{\cal R}_f}(P) = \{ \alpha \}$. Now, if ${\cal A} S_{\alpha} {\cal B}$, then there is a path in the canonical model starting in ${\cal A}$ and ending in ${\cal B}$ such that there is $\overrightarrow{\alpha} \in \overrightarrow{{\cal R}_f}(P)$ that matches it. Hence, ${\cal A} R_{\alpha} {\cal B}$ is true as well.

\item $P$ is a non-knot process:
\begin{itemize}
\item Suppose ${\cal A} S_{\alpha . P} {\cal B}$, that is, $\bigwedge {\cal A} \wedge \langle \alpha . P \rangle \bigwedge {\cal B}$ is consistent. By {\bf (Pr)}, $\bigwedge {\cal A} \wedge \langle \alpha \rangle\langle P \rangle \bigwedge {\cal B}$ is consistent as well. Using  a ``forcing choices'' argument (as exemplified in lemma \ref{lem:existence}), we can construct an atom ${\cal C}$ such that $\bigwedge {\cal A} \wedge \langle \alpha \rangle \bigwedge {\cal C}$ and $\bigwedge {\cal C} \wedge \langle P \rangle \bigwedge {\cal B}$ are both consistent. But then, by the inductive hypothesis, ${\cal A} R_{\alpha} {\cal C}$ and ${\cal C} R_{P} {\cal B}$. It follows that  ${\cal A} R_{\alpha . P} {\cal B}$ as required. 

\item Suppose ${\cal A} S_{\alpha . A} {\cal B}$, that is, $\bigwedge {\cal A} \wedge \langle \alpha . A \rangle \bigwedge {\cal B}$ is consistent. By {\bf (Cons)}, $\bigwedge {\cal A} \wedge \langle \alpha \rangle\langle P_A \rangle \bigwedge {\cal B}$ is consistent as well. Using  a ``forcing choices'' argument, we can construct an atom ${\cal C}$ such that $\bigwedge {\cal A} \wedge \langle \alpha \rangle \bigwedge {\cal C}$ and $\bigwedge {\cal C} \wedge \langle P_A \rangle \bigwedge {\cal B}$ are both consistent. But then, by the inductive hypothesis, ${\cal A} R_{\alpha} {\cal C}$ and ${\cal C} R_{P_A} {\cal B}$. It follows that  ${\cal A} R_{\alpha . A} {\cal B}$ as required. 

\item Suppose ${\cal A} S_{P_{1} + P_{2}} {\cal B}$, that is, $\bigwedge {\cal A} \wedge \langle P_{1} + P_{2} \rangle \bigwedge {\cal B}$ is consistent. By {\bf (NC)}, $\bigwedge {\cal A} \wedge \langle P_{1} \rangle \bigwedge {\cal B}$ is consistent or $\bigwedge {\cal A} \wedge \langle P_{2} \rangle \bigwedge {\cal B}$ is consistent. But then, by the inductive hypothesis, ${\cal A} R_{P_{1}} {\cal B}$ or  ${\cal A} R_{P_{2}} {\cal B}$. It follows that ${\cal A} R_{P_{1} + P_{2}} {\cal B}$ as required.

\item Suppose ${\cal A} S_{P_{1} \mid P_{2}} {\cal B}$, that is, $\bigwedge {\cal A} \wedge \langle P_{1} \mid P_{2} \rangle \bigwedge {\cal B}$ is consistent. By {\bf (PC)}, $\bigwedge {\cal A} \wedge \langle \alpha \rangle \langle P' \rangle \bigwedge {\cal B}$ is consistent for some basic process $\alpha$ and some process $P'$. Using  a ``forcing choices'' argument, we can construct an atom ${\cal C}$ such that $\bigwedge {\cal A} \wedge \langle \alpha \rangle \bigwedge {\cal C}$ and $\bigwedge {\cal C} \wedge \langle P' \rangle \bigwedge {\cal B}$ are both consistent. But then, by the inductive hypothesis, ${\cal A} R_{\alpha} {\cal C}$ and ${\cal C} R_{P'} {\cal B}$. It follows that  ${\cal A} R_{\alpha . P'} {\cal B}$, which means that ${\cal A} R_{P_1 \mid P_2} {\cal B}$ as required. 
\end{itemize}

\item Suppose ${\cal A} S_P {\cal B}$, where $P$ is a knot process. By lemma \ref{lem:loop}, $S_P \subseteq S_P'$, where $S_P' = S_{L_P}^{*} \circ S_{T_P}$. By the induction hypothesis, $S_{L_P} \subseteq R_{L_P}$ and $S_{T_P} \subseteq R_{T_P}$. This implies that $S_P' \subseteq R_P$, which proves the result.
\end{itemize}
\end{proof}

\begin{lem}[Existence Lemma]\label{lem:existence2} For all atoms ${\cal A} \in At(\Gamma)$ and all formulas $\langle P \rangle\phi \in C(\Gamma)$, $\langle P \rangle\phi \in {\cal A}$ iff there is ${\cal B} \in At(\Gamma)$ such that ${\cal A} R_{P} {\cal B}$ and $\phi \in {\cal B}$.
\end{lem}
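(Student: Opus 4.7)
The plan is to prove the two directions separately, using Lemma~\ref{lemma:S-implis-R} for $(\Rightarrow)$ and a structural induction on $P$ for $(\Leftarrow)$. For $(\Rightarrow)$, I would adapt the forcing-choices construction from the proof of Lemma~\ref{lem:existence}. Starting from the consistent formula $\bigwedge \mathcal{A} \wedge \langle P \rangle \phi$, I enumerate $C(\Gamma) = \{\phi_1,\ldots,\phi_n\}$ and extend $\mathcal{B}_0 = \{\phi\}$ step by step to an atom $\mathcal{B}$, at each step choosing whichever of $\phi_{m+1}$ or $\overline{\phi_{m+1}}$ keeps $\bigwedge \mathcal{A} \wedge \langle P \rangle \bigwedge \mathcal{B}_m$ consistent. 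The choice is always available because $\langle P \rangle$ is a normal diamond (via \textbf{(K)}, \textbf{(Du)}, and \textbf{(Gen)}) and therefore distributes over disjunction. The resulting $\mathcal{B}$ contains $\phi$ and satisfies $\mathcal{A} S_P \mathcal{B}$, so Lemma~\ref{lemma:S-implis-R} yields $\mathcal{A} R_P \mathcal{B}$.

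For $(\Leftarrow)$, I would induct on the structure of $P$, exploiting the fact (built into Definition~\ref{def:hfl}) that every intermediate diamond formula needed by the argument already belongs to $C(\Gamma)$. In each case, the matching run $\overrightarrow{\alpha} \in \overrightarrow{\mathcal{R}_f}(P)$ decomposes according to the set equalities in Theorem~\ref{teo:Pf} (non-knot cases) or Theorem~\ref{teo:decknot} (knot case); the canonical path $\mathcal{A} = v_0,\ldots,v_n = \mathcal{B}$ splits accordingly; the induction hypothesis places the correct ``inner'' diamond formula in an intermediate atom; and the matching axiom lifts the conclusion to $\langle P \rangle \phi \in \mathcal{A}$, namely \textbf{(Pr)} for $\alpha.P'$, \textbf{(Cons)} for $\alpha.A$, \textbf{(NC)} for $P_1 + P_2$, \textbf{(PC)} for non-knot $P_1 \mid P_2$, and \textbf{(Rec)} for knots.

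The main obstacle will be the non-knot parallel composition case and the knot case. For non-knot $P_1 \mid P_2$, the Expansion Law produces residuals that are still parallel compositions, so plain structural induction does not reduce complexity; however, since $P_1$ and $P_2$ are non-knots, Restriction 1 guarantees that all runs are of bounded length, so I would pair the structural induction with a secondary induction on the length of $\overrightarrow{\alpha}$. For a knot $P$ the matching path can have unbounded length, so I would instead use Theorem~\ref{teo:decknot} to write $\overrightarrow{\alpha} = \overrightarrow{\beta}_1.\cdots.\overrightarrow{\beta}_k.\overrightarrow{\gamma}$ with each $\overrightarrow{\beta}_i \in \overrightarrow{\mathcal{R}_f}(L_P)$ and $\overrightarrow{\gamma} \in \overrightarrow{\mathcal{R}_f}(T_P)$, apply the IH to $T_P$ to obtain $\langle T_P \rangle \phi \in v_k$, invoke \textbf{(Rec)} to upgrade this to $\langle P \rangle \phi \in v_k$, and then run a nested induction on $k$: at each loop-leg $v_{i-1} R_{L_P} v_i$, the IH on $L_P$ yields $\langle L_P \rangle \langle P \rangle \phi \in v_{i-1}$ and \textbf{(Rec)} again delivers $\langle P \rangle \phi \in v_{i-1}$, propagating the formula all the way back to $v_0 = \mathcal{A}$.
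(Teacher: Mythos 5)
Your proposal matches the paper's proof essentially step for step: the ($\Rightarrow$) direction via forcing choices plus Lemma~\ref{lemma:S-implis-R}, and the ($\Leftarrow$) direction by structural induction using the Fischer--Ladner closure conditions together with \textbf{(Pr)}, \textbf{(Cons)}, \textbf{(NC)}, \textbf{(PC)} and \textbf{(Rec)}, with exactly the paper's sub-induction on the number of $L_P$-legs in the knot case. Your added secondary induction on the length of the matching run in the non-knot parallel case is a reasonable tightening of the paper's (rather informal) structural induction, but justify it by the fact that each application of the Expansion Law strictly shortens the witnessing run, not by Restriction~1 bounding run lengths --- a non-knot process such as $\alpha.A$ with $A$ recursive still has arbitrarily long runs.
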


\begin{proof}
($\Rightarrow$) Suppose $\langle P \rangle \phi \in {\cal A}$. We can build an atom ${\cal B}$ such that $\phi \in {\cal B}$ and ${\cal A} S_P {\cal B}$ by ``forcing choices''. But, by lemma \ref{lemma:S-implis-R}, $S_P \subseteq R_P$, thus ${\cal A} R_P {\cal B}$ as well.

($\Leftarrow$) We proceed by induction on the structure of $P$.
\begin{itemize}
\item The base case is just the Existence Lemma for basic processes.

\item $P$ is a non-knot process:
\begin{itemize}
\item Suppose $P$ has the form $\alpha . P'$, ${\cal A} R_{\alpha . P'} {\cal B}$ and $\phi \in {\cal B}$. Thus, there is an atom ${\cal C}$ such that  ${\cal A} R_{\alpha} {\cal C}$ and ${\cal C} R_{P'} {\cal B}$. By the Fischer-Ladner closure conditions, $\langle P' \rangle \phi \in C(\Gamma)$, hence by the induction hypothesis, $\langle P \rangle \phi \in C$. Similarly, as $\langle \alpha \rangle \langle P' \rangle \phi \in C(\Gamma)$, $\langle \alpha \rangle \langle P' \rangle \phi \in {\cal A}$. Hence, by {\bf (Pr)}, $\langle \alpha . P \rangle \phi \in {\cal A}$.

\item Suppose $P$ has the form $\alpha . A$, ${\cal A} R_{\alpha . A} {\cal B}$ and $\phi \in {\cal B}$. Thus, there is an atom ${\cal C}$ such that  ${\cal A} R_{\alpha} {\cal C}$, ${\cal C} R_{P_A} {\cal B}$ and $\phi \in {\cal B}$. By the Fischer-Ladner closure conditions, $\langle P_A \rangle \phi \in C(\Gamma)$, hence by the induction hypothesis, $\langle P_A \rangle \phi \in C$. Similarly, as $\langle \alpha \rangle \langle P_A \rangle \phi \in C(\Gamma)$, $\langle \alpha \rangle \langle P_A \rangle \phi \in {\cal A}$. Hence, by {\bf (Cons)}, $\langle \alpha . A \rangle \phi \in {\cal A}$.

\item Suppose $P$ has the form $P_1 + P_2$, ${\cal A} R_{P_{1} + P_{2}} {\cal B}$ and $\phi \in {\cal B}$. Thus, ${\cal A} R_{P_1} {\cal B}$ or ${\cal A} R_{P_2} {\cal B}$. By the Fischer-Ladner closure conditions, $\langle P_1 \rangle \phi, \langle P_2 \rangle \phi \in C(\Gamma)$, hence by the inductive hypothesis, $\langle P_1 \rangle \phi \in {\cal A}$ or $\langle P_2 \rangle \phi \in {\cal A}$. Hence, by {\bf (NC)}, $\langle P_1 + P_2 \rangle \phi \in {\cal A}$.

\item Suppose $P$ has the form $P_1 \mid P_2$, ${\cal A} R_{P_{1} \mid P_{2}} {\cal B}$ and $\phi \in {\cal B}$. Thus, ${\cal A} R_{\alpha . P'} {\cal B}$ for some process $\alpha$ and some process $P'$. Then, there is an atom ${\cal C}$ such that  ${\cal A} R_{\alpha} {\cal C}$ and ${\cal C} R_{P'} {\cal B}$. By the Fischer-Ladner closure conditions, $\langle \alpha. P' \rangle \phi, \langle \alpha \rangle$ $\langle P' \rangle \phi, \langle P' \rangle \phi \in C(\Gamma)$, hence by the inductive hypothesis, $\langle P \rangle \phi \in C$ and $\langle \alpha \rangle \langle P' \rangle \phi \in {\cal A}$. Hence, by {\bf (Pr)}, $\langle \alpha . P \rangle \phi \in {\cal A}$ and, by {\bf (PC)}, $\langle P_1 \mid P_2 \rangle \phi \in {\cal A}$.
\end{itemize}

\item Suppose $P$ is a knot process, ${\cal A} R_P {\cal B}$ and $\phi \in {\cal B}$. Then, there is a finite sequence  of atoms ${\cal C}_{0} \ldots {\cal C}_{n}$ such that ${\cal A} = {\cal C}_{0} R_{L_P} {\cal C}_{1} \ldots {\cal C}_{n-1} R_{L_P} {\cal C}_{n} R_{T_P} {\cal B}$. We prove by a sub-induction on $n$  that $\langle P \rangle \phi \in {\cal C}_{i}$, for all $i$. The desired result for ${\cal A} = {\cal C}_0$ follows immediately.
\begin{itemize}
\item Base case: $n = 0$. This means ${\cal A} R_{T_P} {\cal B}$. By the Fischer-Ladner closure conditions, $\langle T_P \rangle \phi \in C(\Gamma)$, hence by the inductive hypothesis, $\langle T_P \rangle \phi \in {\cal A}$. Hence, by {\bf (Rec)}, $\langle P \rangle \phi \in {\cal A}$.

\item Inductive step: Suppose the result holds for $k < n$, and that ${\cal A} = {\cal C}_{0} R_{L_P} {\cal C}_{1}$ $\ldots R_{L_P} {\cal C}_{n} R_{T_P} {\cal B}$. By the inductive hypothesis, $\langle P \rangle \phi \in {\cal C}_{1}$. Hence $\langle L_P \rangle \langle P \rangle \phi$ $\in {\cal A}$, as $\langle L_P \rangle \langle P \rangle \phi \in C(\Gamma)$. By {\bf (Rec)}, we have that$\langle P \rangle \phi \in {\cal A}$.
\end{itemize}
\end{itemize}
\end{proof}

\begin{lem}[Truth Lemma]\label{lemma:truth} Let $\mathcal{M}^{\Gamma}=(At(\Gamma), \{S_{\alpha}\}, {\bf V})$ be the canonical model over $\Gamma$. For all atoms ${\cal A} \in At(\Gamma)$ and all formulas $\varphi \in C(\Gamma)$, $\mathcal{M}^{\Gamma},{\cal A} \Vdash \varphi$ iff $\varphi \in {\cal A}$.
\end{lem}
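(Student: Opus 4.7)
The plan is to prove the Truth Lemma by straightforward induction on the complexity of the formula $\varphi \in C(\Gamma)$, with all the heavy lifting already done by the preceding lemmas (in particular the Existence Lemma \ref{lem:existence2}, which packages the hard work with recursion and parallel composition via {\bf (Rec)}, {\bf (FP)}, {\bf (Pr)}, {\bf (NC)}, {\bf (PC)} and {\bf (Cons)}).

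First I would dispose of the easy cases. For $\varphi = p \in \Phi$, the equivalence ${\cal M}^{\Gamma},{\cal A} \Vdash p \iff {\cal A} \in {\bf V}(p) \iff p \in {\cal A}$ is immediate from definition \ref{def:canpremod}. The case $\varphi = \top$ is trivial: $\top \in C(\Gamma)$ forces $\top \in {\cal A}$ (any atom is consistent, hence contains $\top$), while semantically ${\cal M}^{\Gamma},{\cal A} \Vdash \top$ always. For $\varphi = \neg \psi$, use item 1 of lemma \ref{lem:atomprop}: $\neg \psi \in {\cal A}$ iff $\psi \notin {\cal A}$ iff (by IH) ${\cal M}^{\Gamma},{\cal A} \not\Vdash \psi$ iff ${\cal M}^{\Gamma},{\cal A} \Vdash \neg \psi$. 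For $\varphi = \psi_1 \wedge \psi_2$, use item 2 of lemma \ref{lem:atomprop} together with the IH applied to $\psi_1$ and $\psi_2$, both of which lie in $C(\Gamma)$ since $C(\Gamma)$ is closed under sub-formulas.

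The only interesting case is $\varphi = \langle P \rangle \psi$. By the satisfaction clause in definition \ref{def-sat*} together with the definition of $R_P$ introduced just above lemma \ref{lemma:S-implis-R}, we have ${\cal M}^{\Gamma},{\cal A} \Vdash \langle P \rangle \psi$ iff there exists ${\cal B} \in At(\Gamma)$ with ${\cal A} R_P {\cal B}$ and ${\cal M}^{\Gamma},{\cal B} \Vdash \psi$. Because $\psi$ is a proper sub-formula of $\langle P \rangle \psi$ and $C(\Gamma)$ is closed under sub-formulas, $\psi \in C(\Gamma)$, so the inductive hypothesis applies and this is equivalent to: there exists ${\cal B}$ with ${\cal A} R_P {\cal B}$ and $\psi \in {\cal B}$. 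By the Existence Lemma (lemma \ref{lem:existence2}), this last statement is in turn equivalent to $\langle P \rangle \psi \in {\cal A}$, which finishes the induction.

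I do not foresee a genuine obstacle here, since the Existence Lemma has already absorbed the delicate points: the closure conditions of $C(\Gamma)$ are precisely what make the syntactic ``unpacking'' of $\langle P \rangle \psi$ (via {\bf (Pr)}, {\bf (Cons)}, {\bf (NC)}, {\bf (PC)}, {\bf (Rec)}) match the semantic unpacking into a finite run of $P$, and lemma \ref{lem:loop} (powered by {\bf (FP)}) ensures that the canonical ``one-step'' relation $S_P$ is contained in the ``matching path'' relation $R_P$ even for knot processes. The one minor point to double-check while writing up is that, in the modal case, the intermediate formulas conjured up during the Existence Lemma's proof (things like $\langle \alpha \rangle \langle P' \rangle \psi$, $\langle L_P \rangle \langle P \rangle \psi$, $\langle T_P \rangle \psi$) all stay inside $C(\Gamma)$; this is exactly what definition \ref{def:hfl} was designed to guarantee, so the inductive hypothesis is applicable wherever invoked.
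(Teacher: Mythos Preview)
Your proposal is correct and follows essentially the same route as the paper: induction on the structure of $\varphi$, with the propositional cases handled by the definition of ${\bf V}$ and lemma~\ref{lem:atomprop}, and the modal case $\langle P\rangle\psi$ reduced to the Existence Lemma (lemma~\ref{lem:existence2}) plus the inductive hypothesis on $\psi$. The paper's write-up is terser (it does not spell out $\top$ or the closure remarks), but the argument is the same.
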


\begin{proof}
The proof is by induction on the structure of the formula $\varphi$.
\begin{itemize}
\item $\phi$ is a proposition symbol: The proof follows directly from the definition of ${\bf V}$.
\item $\phi = \neg \psi$ or $\phi = \psi_1 \land \psi_2$: The proof follows directly from lemma \ref{lem:atomprop}.
\item $\phi = \langle P \rangle \psi$:

($\Rightarrow$) Suppose that ${\cal M}^{\Gamma},{\cal A} \Vdash \langle P \rangle \psi$. Then, there exists ${\cal A}' \in {\cal M}^{\Gamma}$ such that ${\cal A} R_P {\cal A}'$ and ${\cal M}^{\Gamma},{\cal A}' \Vdash \psi$. By the induction hypothesis, we know that $\psi \in {\cal A}'$ and, by the Existence Lemma, we have that $\langle P \rangle \psi \in {\cal A}$.

($\Leftarrow$) Suppose that $\langle P \rangle \psi \in {\cal A}$. Then, by the Existence Lemma, there is ${\cal A}' \in {\cal M}^{\Gamma}$ such that ${\cal A} R_P {\cal A}'$ and $\psi \in {\cal A}'$. By the induction hypothesis, ${\cal M}^{\Gamma},{\cal A}' \Vdash \psi$, which implies ${\cal M}^{\Gamma},{\cal A} \Vdash \langle P \rangle \psi$.
\end{itemize}
\end{proof}

\begin{thm}[Completeness]\label{teo:complete}
Every consistent formula is satisfiable in a finite CCS-PDL model.
\end{thm}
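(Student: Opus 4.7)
The plan is to run the standard Fischer--Ladner canonical model construction, specialized via the apparatus already built in the appendix. Given a consistent formula $\varphi$, I would take $\Gamma = \{\varphi\}$, form its Fischer--Ladner closure $C(\Gamma)$ as in Definition \ref{def:hfl}, and construct the canonical model $\mathcal{M}^{\Gamma}$ over $At(\Gamma)$ as in Definition \ref{def:canpremod}. The end goal is then immediate: by Corollary \ref{cor:expalpha} there is an atom $\mathcal{A}_0 \in At(\Gamma)$ containing $\varphi$, and by the Truth Lemma (Lemma \ref{lemma:truth}) we get $\mathcal{M}^{\Gamma}, \mathcal{A}_0 \Vdash \varphi$.

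First I would verify finiteness of $\mathcal{M}^{\Gamma}$. Each clause of Definition \ref{def:hfl} only introduces modal formulas whose principal process is syntactically simpler than $P$: the non-knot clauses decompose $P$ structurally, while the knot clause replaces $\langle P \rangle \varphi$ by $\langle T_P \rangle \varphi$ and $\langle L_P \rangle \langle P \rangle \varphi$, where $T_P$ and $L_P$ are constructed from the strictly simpler non-knot loop/breaker components of $P$ (here Restriction~1 on CCS-PDL, which rules out self-replication and bounds $Cons(P)$, is essential, so the recursion actually terminates). Hence $C(\Gamma)$ is finite, $At(\Gamma)$ is finite, and therefore $\mathcal{M}^{\Gamma}$ is a finite model.

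The real work sits inside the Truth Lemma and, through it, in the Existence Lemma (Lemma \ref{lem:existence2}), which I would establish as the key intermediate step. The forward direction reduces to a \emph{forcing choices} argument identical in spirit to Lemma \ref{lem:existence}, combined with Lemma \ref{lemma:S-implis-R}. The backward direction is a structural induction on $P$: atomic actions are handled by the Existence Lemma for basic processes; non-knot constructors are handled by the axioms {\bf (Pr)}, {\bf (NC)}, {\bf (Cons)}, and, for parallel composition, by {\bf (PC)} together with the Expansion Law, which rewrites $\langle P_1 \mid P_2 \rangle \phi$ into a finite disjunction of action-prefixed modalities that fall under the induction hypothesis.

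The hardest case, which I expect to be the main obstacle, is the knot-process case. Here one must bridge the purely syntactic one-step canonical relation $S_P$ with the semantic accessibility $R_P$ that iterates $L_P$ before applying $T_P$. The plan is to establish the inclusion $S_P \subseteq S_{L_P}^{\ast} \circ S_{T_P}$ of Lemma \ref{lem:loop} via a maximality argument: if some pair $(\mathcal{A}, \mathcal{B})$ witnessed a failure, the disjunction $r$ of all ``bad'' atoms would provably satisfy $r \to [T_P]\neg\!\bigwedge\!\mathcal{B} \wedge [L_P] r$, whence the fixed-point axiom {\bf (FP)} would force $r \to [P]\neg\!\bigwedge\!\mathcal{B}$, contradicting $\mathcal{A} S_P \mathcal{B}$. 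Once this is in hand, a sub-induction on the length of the $L_P$-chain plus {\bf (Rec)} yields the knot case of the Existence Lemma, and the Truth Lemma then follows by the routine induction on formula complexity, closing the completeness argument.
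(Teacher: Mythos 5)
Your proposal follows the paper's own proof essentially step for step: the same Fischer--Ladner closure and canonical model over atoms, the same Existence Lemma via forcing choices and the inclusion $S_P \subseteq R_P$, the same treatment of knot processes through the ``bad atoms'' disjunction $r$ with {\bf (FP)} and {\bf (Rec)} (Lemma \ref{lem:loop}), and the same conclusion via Corollary \ref{cor:expalpha} and the Truth Lemma. It is correct and matches the paper's argument, so no further comparison is needed.
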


\begin{proof}
Let $\varphi$ be a consistent formula. Let $C(\varphi)$ be its closure under the conditions of definition \ref{def:hfl}. As $\varphi$ is consistent, by corollary \ref{cor:expalpha}, there is an atom ${\cal A} \in At(\varphi)$ such that $\varphi \in {\cal A}$. Let ${\cal M}^{\varphi}$ be the canonical model over $\varphi$. Then, by the Truth Lemma (lemma \ref{lemma:truth}), as $\varphi \in {\cal A}$, we conclude that ${\cal M}^{\varphi}, {\cal A} \Vdash \varphi$, which proves the theorem.
\end{proof}

\end{document}